\theoremstyle{plain}
\newtheorem{theorem}{Theorem}
\newtheorem{proposition}{Proposition}
\newtheorem{corollary}{Corollary}
\newtheorem{lemma}{Lemma}
\theoremstyle{definition}
\newtheorem{definition}{Definition}
\newtheorem{example}{Example}
\def\image{\textrm{image }}
\def\id{\textrm{id}}
\def\cat{\mathbf}
\title{Constant rank factorizations of smooth maps}
\author{Michael Robinson\\
Mathematics and Statistics\\
American University\\
Washington, DC, USA\\
michaelr@american.edu}
\begin{document}
\maketitle

\begin{abstract}
Sonar systems are frequently used to classify objects at a distance by using the structure of the echoes of acoustic waves as a proxy for the object's shape and composition.  Traditional synthetic aperture processing is highly effective in solving classification problems when the conditions are favorable but relies on accurate knowledge of the sensor's trajectory relative to the object being measured.  This article provides several new theoretical tools that decouple object classification performance from trajectory estimation in synthetic aperture sonar processing.  The key insight is that decoupling the trajectory from classification-relevant information involves factoring a function into the composition of two functions.
The article presents several new general topological invariants for smooth functions based upon their factorizations over function composition.  These invariants specialize to the case when a sonar platform trajectory is deformed by a non-small perturbation.  The mathematical results exhibited in this article apply well beyond sonar classification problems.  This article is written in a way that supports full mathematical generality.  
\end{abstract}

\tableofcontents

\section{Introduction}
\label{sec:intro}

Sonar systems are frequently used to classify objects at a distance by using the structure of the echoes of acoustic waves as a proxy for the object's shape and composition.  To obtain good classification accuracy, many sonar systems use a moving sensor platform to produce a \emph{synthetic aperture}.  As the sensor platform's position changes, one can measure how the echoes change, and from these changes deduce properties of the object.

Traditional synthetic aperture processing is highly effective in solving classification problems when the conditions are favorable but relies on accurate knowledge of the sensor's trajectory relative to the object being measured.  Any deviations from the expected trajectory degrade the classification performance of the overall system.  Because of this, there is a well-established practice of \emph{motion compensation} and \emph{autofocus} algorithms that iteratively apply corrections to the estimated trajectory in hopes of removing these deviations.

While the autofocus approach works well for small perturbations of the expected trajectory, it suffers when the initial guess of the trajectory is bad.   It is natural to ask, ``can one obtain good classification accuracy from synthetic aperture sonar images \emph{without} a good trajectory estimate?''
Given that the signatures (the ensemble of all received echoes) of different targets appear to have quite different geometry and topology \cite{sonarspace}, classification may be possible even given inaccurate trajectory information.

This article provides several new theoretical tools that decouple object classification performance from trajectory estimation in synthetic aperture sonar processing.  
It does so by defining several new general topological invariants for smooth functions, which specialize to the case when a sonar platform's trajectory is deformed by a non-small perturbation.

Although this article is theoretical in nature and does not seek to provide ready-to-use processing algorithms,
the decoupling of trajectory from classification performance means that good trajectory information is not necessary for sonar object classification.
Indeed, Proposition \ref{prop:identical_images} establishes that the image of the sonar signature---the space of pulse echoes without regard to the order in which they arrive---is the governing factor behind the good classification performance observed in \cite{sonarspace}.

The mathematical results exhibited in this article apply well beyond sonar classification problems, so this article is written in a way that supports full mathematical generality.  The key insight is that decoupling the trajectory from classification-relevant information involves factoring a function into the composition of two functions.
Generalizing this insight, for a smooth function $u$ we define two categories $\cat{QuasiP}(u)$ and $\cat{Const}(u)$ that describe classes of factorizations of $u$ by function composition under certain constraints.  While the constraints were motivated by the needs of sonar classification, they are sufficiently weak to permit wide usage of these categories.

The plan for the paper is as follows.  In the next few subsections, we discuss the literature for synthetic aperture sonar target classification (Section \ref{sec:historical}), which provides context for the contributions of this article (Section \ref{sec:contribution}).  In Section \ref{sec:motivation}, we introduce the concept of circular synthetic aperture sonar (CSAS) and the specific classification problem to be addressed in the rest of the article.  In Section \ref{sec:definitions}, we introduce the factorization categories $\cat{QuasiP}(u)$ and $\cat{Const}(u)$ and prove several fundamental results about them, including a complete characterization of $\cat{QuasiP}(u)$ for CSAS.  Although this complete characterization is of limited utility for $\cat{Const}(u)$, we show that a pair of views of the same target establishes an equivalence between functions using common factors.  We develop this idea in Section \ref{sec:comparing} by defining a category $\cat{CRSE}(u_1,u_2)$ of factorization equivalences, and ultimately prove a characterization theorem for this category.  The theoretical machinery we have by then constructed is used to revisit the motivating CSAS example in Section \ref{sec:application}.  Finally, we provide a brief conclusion in Section \ref{sec:conclusion}. 

\subsection{Historical context}
\label{sec:historical}

The process of forming an image from synthetic aperture sonar data uses a bank of spatial matched filters (pixels) over the region where a sonar target is located.  Because sonar targets tend to be spatially localized, high-resolution image-based methods are highly effective at rejecting background clutter.  Image-based methods generally require accurate knowledge of the sensor platform's trajectory relative to the object being measured, careful control of the sonar waveform, and intimate knowledge of the clutter environment.  Hundreds of papers attest to the effectiveness of image-based methods when the conditions are favorable.  As a brief sample, the reader is encouraged to consult \cite{Bucaro_2012, Bucaro_2010,Chambers_2007,Gaumond_2006,Carin_2006, Dasgupta_2005,Liu_2004, Gruber_2004, Carin_2004}, though this list is not exhaustive.

When information about the sensor platform trajectory is not sufficiently accurate, the recovered energy spreads across neighboring filters in the image (pixels), leading to blurring.  Theoretically, blurring is a manifestation of the instability of the Fourier transform when the domain is deformed \cite{Hormander_1971}.  Classification becomes difficult if one is forced to start with blurred images.

Since the sensor platform has inertial mass, it is often possible to determine compensations to the filters to account for its motion.  When motion compensations are applied, this results in a more focused image \cite{Chevillon1998AFM}.  Moreover, when the trajectory is known to be close to a given trajectory---a circular orbit around the target for instance---then motion compensation can improve object classification \cite{Marston2021SpatiallyVA}.  The impacts of motion blurring on classification with and without compensation have been discussed in the literature, for instance \cite{Cotter2018TrackingAC,Bonifant1999AnAO}.

Because the \emph{scattering transform} is Lipschitz continuous relative to small deformations \cite{Andn2014DeepSS,Bruna2012InvariantSC,Mallat2011GroupIS}, it can be used to mitigate small, arbitrary trajectory and propagation distortions in sonar signals \cite{Saito2017UnderwaterOC}.  The scattering transform works by decomposing the signal via local convolutions with wavelets or another frame of localized functions.  When the distortions are small, the fact that sonar signals of interest tend to be sparse in this basis can be used to isolate the target's response from the distortions.  Robust stability results for many frames commonly in use have been established \cite{Wiatowski2015DeepCN}.  However, the scattering transform suffers from the ``curse of dimensionality,'' which means that it can be rather difficult to determine \emph{a priori} which subspace constitutes the target response.  Consequently, the scattering transform can be computationally expensive.  This happens because the scattering transform explicitly characterizes the distinction between target responses and trajectory distortions.  Since the present article explores this distinction implicitly, it provides a perspective on sonar data that is complementary to the scattering transform, and may ultimately require less computation.

The fact that classification can sometimes succeed with poor trajectory information suggests that non-image based methods could succeed as well.  Statistical machine learning methods applied to the raw sonar echoes are a natural choice.  Indeed, machine learning has been successfully applied to sonar classification problems over the past three decades, going back to the beginnings of the subject \cite{G2020SurveyOD,Chen2016DeepNN,Li_2004,Azimi_2000,Yao_1999,Carpenter_1998,Azimi_1998,Gorman_1988}.  Unfortunately, machine learning requires large and diverse training data sets, which can be expensive to obtain.  The implication for sonar classification is that the data need to contain not only the targets of interest, but also enough diversity to capture physical effects due to shadows, background structure, and pose \cite{Williams2019DemystifyingDC}.

A different perspective is to take the \emph{space of echoes} arising from a given target \emph{as a feature} in the abstract, rather than as a point in some high dimensional space.  Comparing such abstract spaces using \emph{homotopy equivalence} is the dominant technique in the mathematical discipline of algebraic topology.  By construction, homotopy equivalence is automatically robust to deformations, so it might seem to be ideally suited for sonar classification problems.  Unfortunately homotopy equivalence is neither easily testable nor robust to statistical noise.  However, a regularized version of homotopy equivalence, called \emph{persistent homology}, is algorithmically computable and is noise robust.  The use of persistent homology to study data sets forms the basis of topological data analysis (TDA), for which a standard classification pipeline has emerged \cite{Chen2021ApproximationAF,Chen2019ATR,Li2017MetricsFC}.

TDA can be applied to sonar classification by comparing the space of echoes arising from an unknown target with a dictionary of spaces for known targets.  The traditional TDA pipeline does not consider the ordering of the pulses, since persistent homology is only sensitive to the relationship between similar echoes.  Those pulses that are transmitted from nearby locations will be near one another in the space of echoes, and so also will those pulses arising from symmetries in the target.  While most sonar targets are unlikely to be actually symmetric, strong spatial Fourier modes can yield the same effect.  There is a large literature on signals with strong Fourier modes, which are called \emph{almost periodic} functions. 

Unfortunately, homotopy equivalence is not the proper equivalence for sonar targets.  Since the standard TDA pipeline cannot discriminate between spaces that are homotopy equivalent, TDA is not strictly applicable to sonar echoes.  Specifically, homotopy equivalence is both too weak and too strong!  As a brief example of the inappropriateness of homotopy equivalence in sonar, it has been shown that homotopy equivalence spuriously detects the direction that the sensor platform orbits a target in CSAS \cite{Robinson_RSN}.  Conversely, homotopy equivalence fails to discriminate between the echoes arising from a spherical target and an extended one, such as a rod, since both have contractible spaces of echoes.  Nevertheless, persistent homology can yield an effective classifier for sonar targets \cite{sonarspace}.  Moreover, justification for using the space of echoes is established by Proposition \ref{prop:identical_images}.

To obtain a better equivalence, one needs to supply information about the ordering of the pulses.  In this case, the object to be classified is not merely the space of echoes, but a function from the trajectory to the space of echoes.  There are some TDA tools that look at such functions \cite{Takeuchi2021ThePH,Harker2014InducingAM,Dey2014ComputingTP}; these are special cases of what this article presents in Section \ref{sec:circular}.

This article relies on the idea of factoring functions through various manifolds.  For the practical usage of factorizations in CSAS, one must factor through a circle.  Such factorizations through circles have recently become practical through the moniker of \emph{finding circular coordinates} for data sets using \emph{persistent cohomology}.  The interested reader is referred to \cite{Luo2020GeneralizedPF,Perea2018SparseCC,Perea2018MultiscalePC,VejdemoJohansson2015CohomologicalLO,DeSilva_2011}.

At a high level, this article seeks to explore a kind of signal processing that is aware of the underlying topology of the space of echoes.  We therefore seek a notion of a \emph{topological filter} \cite{Robinson_2014}.  Several kinds of novel topological filters are known \cite{Essl2020TopologicalIF, Robinson_qplpf, Robinson_SampTA_2015}, though it is clear that these do not exhaust the possibilities.
The results of this article could provide a basis for extending the above works, by supplying the base topological data upon which topological filters are built.

\subsection{Contributions}
\label{sec:contribution}

From a mathematical perspective, the focus of this article is to classify smooth maps $u: M \to N$ based on factorizations of $u$ into the composition of two functions.   The article defines two new differential topological invariants for a smooth map $u$ that are distinct from its homotopy class.  These invariants are the categories $\cat{Const}(u)$ and $\cat{QuasiP}(u)$ of factorizations of $u$ (Definition \ref{def:constant_rank}).
In addition to the factorization categories for a single map, the article also defines a new differential topological invariant $\cat{CRSE}(u_1,u_2)$ for a \emph{pair} of smooth maps $u_1$, $u_2$ that relates their respective factorizations to each other (Definition \ref{def:crse}).  Each isomorphism class of $\cat{CRSE}(u_1,u_2)$ identifies an individual sonar target that could have yielded both the responses $u_1$ and $u_2$ under different conditions.

From a practical perspective, the three new invariants allow one to deduce whether changes in sensor configuration or trajectory might be able to confound two objects within a CSAS collection.  (Unfortunately, we have not discovered an efficient way to compute these invariants in all cases, though $\cat{QuasiP}(u)$ can be computed algorithmically in the case of CSAS using Theorem \ref{thm:circle_quasip}.)

As already noted, \cite{sonarspace} shows that persistent homology can be used to classify targets without image formation.  But why should a topological invariant (homology) be useful in classification?  Proposition \ref{prop:identical_images} establishes a sufficient condition for homology of the space of echoes to be an effective solution.  Moreover, Theorem \ref{thm:circle_quasip} provides a more complete answer to this question by establishing that the $\cat{QuasiP}(u)$ of a CSAS signature is determined by the fundamental group (a topological property), and this is preserved if the trajectory is deformed.  This is still only a partial answer for two reasons: (1) not only topology but also geometry impacts persistent homology, and (2) many realistic targets have CSAS signatures with trivial $\cat{QuasiP}(u)$, limiting its usefulness.  Although $\cat{Const}(u)$ does not share this second limitation, we have not succeeded in discovering an algorithmically computable characterization of it.
 
This article establishes the following new results:
\begin{enumerate}
\item The factorization categories are functorial (Propositions \ref{prop:left_functor}, \ref{prop:right_functor}, and \ref{prop:crse_functoriality}); changing the domain or codomain induces functors on the factorization categories,
\item The factorization categories are related to, but distinct from, homotopy classes (Corollary \ref{cor:circle_homotopy_quasip} and Proposition \ref{prop:crse_homotopy}),
\item When maps are related via a diffeomorphism, their respective categories are isomorphic (Theorem \ref{thm:crse_diffeo}) and the image of $\cat{CRSE}$ in them is maximal (Theorem \ref{thm:surjectivity}),
\item When the same target appears in multiple settings, the factorization categories of the resulting signatures have equivalent subcategories based on the target (Theorems \ref{thm:coslice_equivalent} and \ref{thm:crse_coslice}), and
\item When the domain $M$ is a circle $S^1$ (directly relevant to CSAS), the quasiperiodic factorization category is completely classified by subgroups of the fundamental group of $N$ (Theorem \ref{thm:circle_quasip}).
\end{enumerate}

\section{Motivating example: circular synthetic aperture sonar (CSAS)}
\label{sec:motivation}

Consider a circular synthetic aperture sonar (CSAS) collection in which the sonar platform orbits a fixed target, as shown in Figure \ref{fig:collection_geometry}.  For the purposes of this article, suppose that the sensor's \emph{standoff range} $R$ is constant, and that its \emph{look angle} $\theta$ increases monotonically (but not necessarily linearly) from $0^\circ$ to $360^\circ$ over the length of the collection.  If the target's scatterers are sufficiently reflective when compared to the receiver noise and clutter in the scene, then determining the ranges to them is not a difficult problem.  This fact means that we may safely assume that the standoff range $R$ is fixed and constant without impacting the realism of the model.  If the actual range varies, this may be compensated for by applying a $\theta$-dependent modulation to the raw echoes.  

We will follow mathematical tradition and denote the set of angles by the circle $S^1$, so that $\theta \in S^1$ specifies the look angle.  The sensor emits \emph{sonar pulses} as it moves along its trajectory and records their echoes from the target.  We will neglect receive and transmit filter effects for this article to keep the exposition simple, though filters can be incorporated easily using the framework the article develops.  Let us therefore assume that the transmitted pulses are impulses, and the sensor's receiver covers a wide band of frequencies.  

\begin{figure}
\begin{center}
\includegraphics[height=2.25in]{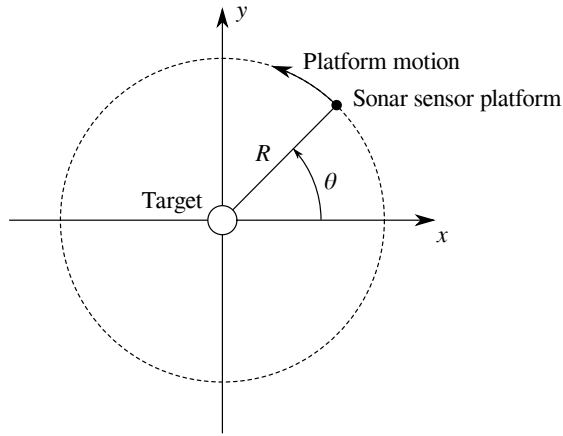}
\caption{General CSAS collection geometry wherein the sensor platform orbits a target.}
\label{fig:collection_geometry}
\end{center}
\end{figure}

Given the standoff range $R$ and the wave phase speed $c$, the following equation models the \emph{signature} $s$, which describes the received echoes as a function of the look angle $\theta$ and received frequency $f$ received from a set of $P$ point scatterers,
\begin{equation}
  \label{eq:general_point_scatterers}
  s(\theta,f) = \sum_{p=1}^P \frac{a_p \exp\left(-\frac{2 \pi i f}{c} r_p \cos (\alpha_p-\theta)\right)}{\sqrt{\left(r_p \sin (\alpha_p-\theta)\right)^2 + \left(R + r_p \cos (\alpha_p-\theta)\right)^2}}
\end{equation}
in which each point scatterer is parameterized by its complex reflectivity $a_p$, its distance $r_p$ to the target center, and its angle $\alpha_p$ relative to the $\theta=0$ axis.  This is, of course, a rather simplified model.  Substantially more realistic models (not just superpositions of point scatterers) are supported by the framework developed in the latter sections of this article.  Although we use a single look angle $\theta$ in this section, this is only to ensure that the signals can be represented ``on paper'' as images.  The framework supports ``look angles'' that can be represented as points on an arbitrary connected smooth manifold.

Equation \ref{eq:general_point_scatterers} sets up a correspondence between the look angle $\theta$ and the received echo as a function of pulse frequency $f$.  If the sensor platform's angular position with respect to the target is not known accurately, this function becomes the composition of $s$ with an unknown function of time $\theta = \phi(t)$, where $t$ is the time of pulse transmission.  Due to the inertia of the sensor platform, we may assume that the function $\phi$ is smooth.

If $\phi$ is an affine function, then traditional imaging methods (for instance, polar format or back projection \cite{Quegan}) provide a complete solution to the problem of determining the location and reflectivity of each scatterer.  These methods do not work at all if $\phi$ is an unknown smooth function that deviates far from an affine function.

To fix ideas, let us restrict attention to the case where the point scatterers in this target model can be grouped into two subsets: one subset has a $180^\circ$ rotational symmetry, while the other subset has a $120^\circ$ rotational symmetry.  We will usually refer to these symmetries as a \emph{$2$-fold symmetry} and a \emph{$3$-fold symmetry}, respectively.  One easy way that this may happen is if the first subset consists of exactly two point scatterers, placed opposite the target center, as in Figure \ref{fig:scatterer_locations}(a).  Similarly, the other subset could be realized as a subset of exactly three point scatterers, evenly spaced around a circle concentric with the target center, as in Figure \ref{fig:scatterer_locations}(b).  It is immediately clear that there are other possible ways to realize these kinds of symmetries using more point scatterers, and while this will be developed more completely in Proposition \ref{prop:signal_knot} (and following), these two simplistic models will suffice for the moment.  We will call these subsets of the set of point scatterers \emph{composite scatterers} for brevity.  Notice that each composite scatterer additionally has \emph{a reference rotation angle} $\alpha$ with respect to the $\theta=0$ axis.  The two composite scatterers will typically have different reference rotation angles, which means that the superposition of the two composite scatterers may not have any rotational symmetries at all.

\subsection{A composite scatterer model: quasiperiodic factorizations}

\begin{figure}
\begin{center}
\includegraphics[height=2.25in]{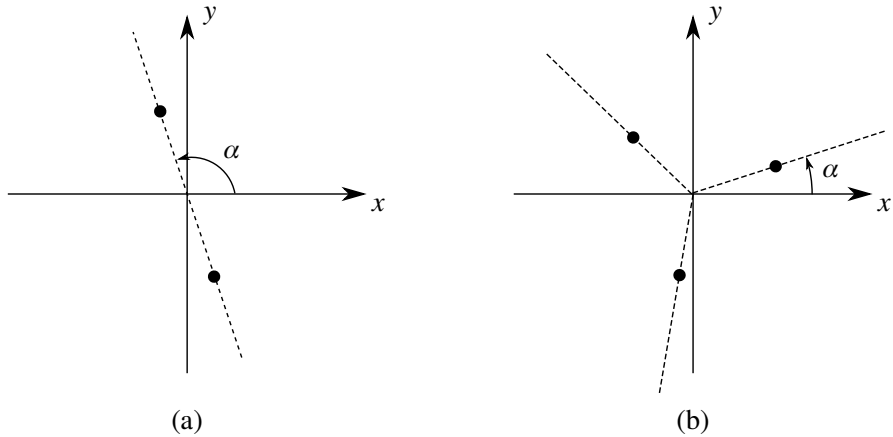}
\caption{Two simple composite scatterers (a) one with $2$-fold symmetry and (b) one with $3$-fold symmetry.}
\label{fig:scatterer_locations}
\end{center}
\end{figure}

The general formula \eqref{eq:general_point_scatterers} can be specialized to two subsets of scatterers, one with $P$-fold symmetry and one with $Q$-fold symmetry.  If we assume that the composite scatterer with $P$-fold symmetry consists of exactly $P$ points, then the signature reduces to a rather specific form
\begin{equation}
  \label{eq:one_scatterer_family}
  s_P(\theta,f;\alpha) = a \sum_{p=1}^P \frac{\exp\left(-\frac{2 \pi i f}{c} r \cos \left(\frac{2\pi p}{P} + \alpha -\theta\right)\right)}{\sqrt{\left(r \sin \left(\frac{2\pi p}{P} + \alpha-\theta\right)\right)^2 + \left(R + r \cos \left(\frac{2\pi p}{P} + \alpha-\theta\right)\right)^2}},
\end{equation}
in which all of the point scatterers have the same complex reflectivity $a=a_p$ and the same distance to target center $r = r_p$.  Because $\cos$ and $\sin$ are $2\pi$-periodic functions, and the sum consists of $P$ evenly spaced terms, it follows that $s(\theta + (2\pi/P),f) = s(\theta,f)$ for all look angles $\theta$ and frequencies $f$.

Example signatures of the two scatterers shown in Figure \ref{fig:scatterer_locations} are shown in Figure \ref{fig:scatterer_sigs}.  In the plots of the signatures, the look angle $\theta$ is shown in the vertical dimension and the frequency $f$ is shown on the horizontal dimension.  Notice that both of the signatures repeat vertically, with the $2$-fold symmetric target showing two copies and the $3$-fold symmetric target showing three copies.  It is therefore obvious that the symmetry class can be used to coarsely discriminate between targets, as the two signatures shown in Figure \ref{fig:scatterer_sigs} are easy to distinguish from each other.  Moreover, two targets with the same symmetry class can be discriminated by comparing their fundamental domains.

\begin{figure}
\begin{center}
\includegraphics[height=2in]{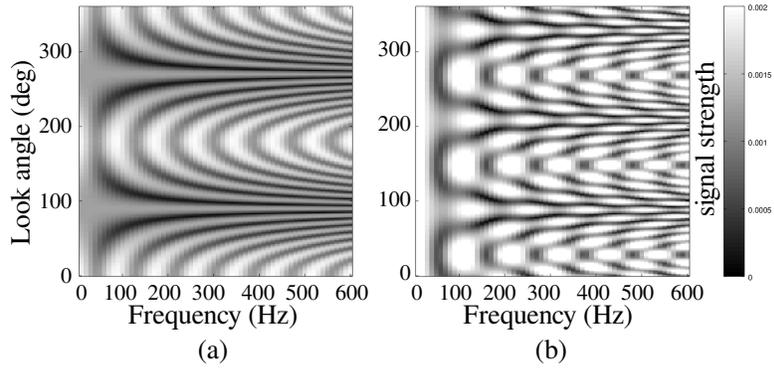}
\caption{Signatures of the composite scatterers shown in Figure \ref{fig:scatterer_locations}, (a) has $2$-fold symmetry and (b) has $3$-fold symmetry.}
\label{fig:scatterer_sigs}
\end{center}
\end{figure}

If the trajectory of the sensor is unevenly traversed so that the look angle does not vary linearly, then the periodicity visible in Figure \ref{fig:scatterer_sigs} is destroyed.  For instance, Figure \ref{fig:scatterer_sigs2} shows two possible signatures after a smooth trajectory distortion has been applied, though only one complete orbit around the target has been traversed.  Although Figure \ref{fig:scatterer_sigs}(b) and Figure \ref{fig:scatterer_sigs2}(a) are collections of echoes from the same composite scatterer and differ only by the application of a distortion $\phi$, they are quite different as functions.  On the other hand, no trajectory distortion of Figure \ref{fig:scatterer_sigs2}(b) can be applied to transform it into any of the others because its underlying symmetry class ($5$-fold symmetry) is different.

This article aims to make the intuitive ideas of removing trajectory distortions precise, and develops the proper notion of ``signatures equivalent up to trajectory distortions.''  The main idea is to realize that the signatures shown in Figure \ref{fig:scatterer_sigs} can act as representatives of signature equivalence classes, and to consider the properties arising from trajectory distortions.  Mathematically, decoupling trajectory effects from target effects requires that we consider the process of \emph{factoring functions via composition}.

\begin{figure}
\begin{center}
\includegraphics[height=2in]{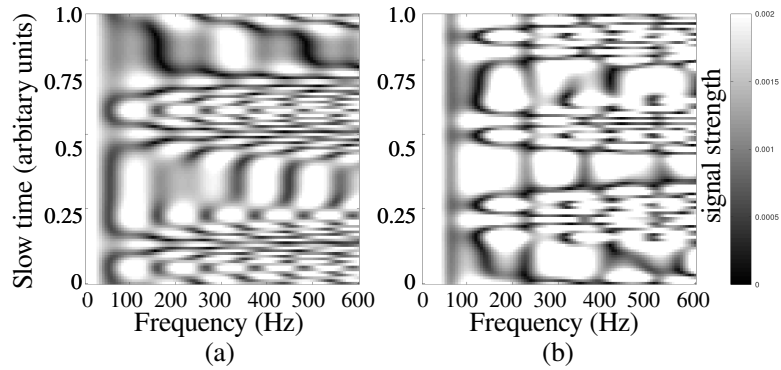}
\caption{Signatures of the composite scatterers with trajectory distortions over a single orbit of the target; after removing distortions (a) has $3$-fold symmetry and (b) has $5$-fold symmetry.}
\label{fig:scatterer_sigs2}
\end{center}
\end{figure}

As a preview, Theorem \ref{thm:circle_quasip} characterizes factorizations of functions with circular domains.  Intuitively, it says that one may distinguish the equivalence classes of signatures by their winding numbers.  In other words, even though the rows of Figure \ref{fig:scatterer_sigs2}(a) do not traverse look angles evenly, they repeat $3$ times.  This differs from Figure \ref{fig:scatterer_sigs2}(b), which makes $5$ repetitions, and so the signatures must be different.  On the other hand, the signature in Figure \ref{fig:scatterer_sigs}(b) also makes $3$ repetitions, so there are grounds for possible equivalence with Figure \ref{fig:scatterer_sigs2}(a).  Moreover, if we are considering only one frequency (column in Figures \ref{fig:scatterer_sigs} and \ref{fig:scatterer_sigs2}), then Corollary \ref{cor:circle_quasip} indicates that this is \emph{all} that can be determined.  We emphasize that none of these conclusions are surprising in the case that the trajectory is evenly traversed---the trajectory has been motion compensated to look angle---but we reiterate that ensuring trajectories are evenly traversed can be quite difficult in practice.

\subsection{Two composite scatterers: constant rank factorizations}

For a target formed as the superposition of two composite scatterers, Proposition \ref{prop:signal_knot} shows that the signature traces out a \emph{torus knot}, a closed path on the surface of a torus.  This remains true if there are trajectory distortions.  As a result, Theorem \ref{thm:circle_quasip} states that such signatures can be classified by the topological type of the corresponding torus knot.  Torus knots are classified by a pair of two integer winding numbers $(P,Q)$, which correspond to the symmetry classes of the composite scatterers.

Again using the look angle directly to fix ideas, we can construct the superposition of both subsets of scatterers, a $P$-fold symmetric subset and a $Q$-fold symmetric subset.  This is done by taking a sum\footnote{It will become apparent that Equation \ref{eq:two_scatterers} satisfies the hypotheses of Proposition \ref{prop:signal_knot}.  Given this fact, $s$ has a \emph{constant rank factorization} in which the \emph{phase function} is a torus knot.} of the two composite scatterers:
\begin{equation}
  \label{eq:two_scatterers}
  s(\theta,f;\alpha,\alpha') = s_P(\theta,f;\alpha) + s_Q(\theta,f;\alpha').
\end{equation}
It is sometimes helpful to consider the \emph{relative angle} $\beta = \alpha' - \alpha$. Figure \ref{fig:scatterer_relative_angle} shows the configuration of point scatterers if $P=2$, $Q=3$, and $\beta = 28^\circ$.

\begin{figure}
\begin{center}
\includegraphics[height=2.25in]{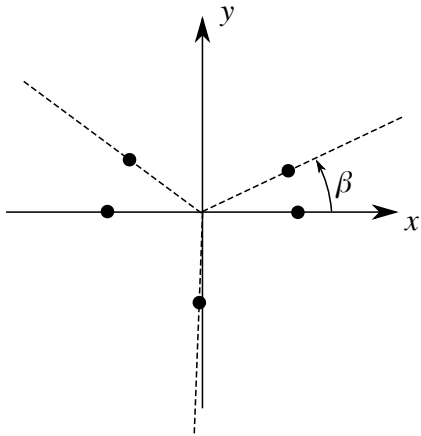}
\caption{Combined scatterer with $P=2$, $Q=3$, and relative angle $\beta = 28^\circ$ shown.}
\label{fig:scatterer_relative_angle}
\end{center}
\end{figure}

Since the $\theta=0$ axis was essentially arbitrary, it is sometimes helpful to take $\alpha=0$, resulting in the formula
\begin{eqnarray*}
  s(\theta,f;0,\beta) &=& a \sum_{p=1}^P \frac{\exp\left(-\frac{2 \pi i f}{c} r \cos \left(\frac{2\pi p}{P} -\theta\right)\right)}{\sqrt{\left(r \sin \left(\frac{2\pi p}{P} -\theta\right)\right)^2 + \left(R + r \cos \left(\frac{2\pi p}{P} -\theta\right)\right)^2}} \\&& + a' \sum_{q=1}^Q \frac{\exp\left(-\frac{2 \pi i f}{c} r' \cos \left(\frac{2\pi q}{Q} + \beta -\theta\right)\right)}{\sqrt{\left(r' \sin \left(\frac{2\pi q}{Q} + \beta-\theta\right)\right)^2 + \left(R + r' \cos \left(\frac{2\pi q}{Q} + \beta-\theta\right)\right)^2}}.
\end{eqnarray*}
For simplicity, let us take $a=a'$ and $r=r'$ in what follows in this section.  The resulting signature is shown in Figure \ref{fig:scatterer_combined_sig}.  

Notice that superposition of these particular two subsets (Figure \ref{fig:scatterer_relative_angle}) is not quite $2$-fold symmetric itself, because $28^\circ + 240^\circ = 268^\circ \not= 270^\circ$.  It would be $2$-fold symmetric if the relative angle were to be chosen as $\beta=30^\circ$, and there are other possible choices of relative angle resulting in a $2$-fold symmetric target.  The effect of this slight asymmetry is visible in Figure \ref{fig:scatterer_combined_sig}, where it makes the signature approximately repeat vertically.

\begin{figure}
\begin{center}
\includegraphics[height=2in]{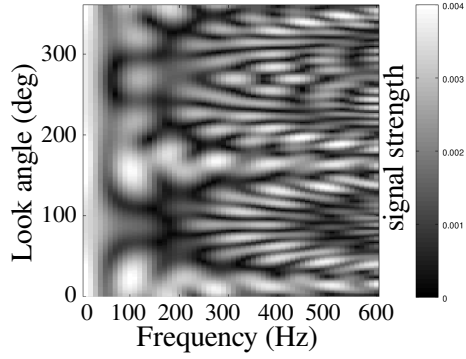}
\caption{A typical signature of the sum of the two scatters shown in Figure \ref{fig:scatterer_relative_angle}, where $P=2$, $Q=3$, and the relative angle is $\beta=28^\circ$.}
\label{fig:scatterer_combined_sig}
\end{center}
\end{figure}

After inspecting Equation \eqref{eq:one_scatterer_family}, it is clear that holding $\theta$ fixed and varying $\alpha$ results in the same information as varying $\theta$ while holding $\alpha$ fixed.  With two reference angles in Equation \eqref{eq:two_scatterers}, it is more convenient to hold the look angle $\theta$ fixed and vary the reference angles instead.  If we hold the look angle $\theta$ and frequency $f$ fixed, while varying the reference angles $\alpha$ and $\alpha'$ arbitrarily, we obtain a somewhat different perspective of the signature, shown for two different frequencies in Figure \ref{fig:toroidal_flat}.  

\begin{figure}
\begin{center}
\includegraphics[height=2in]{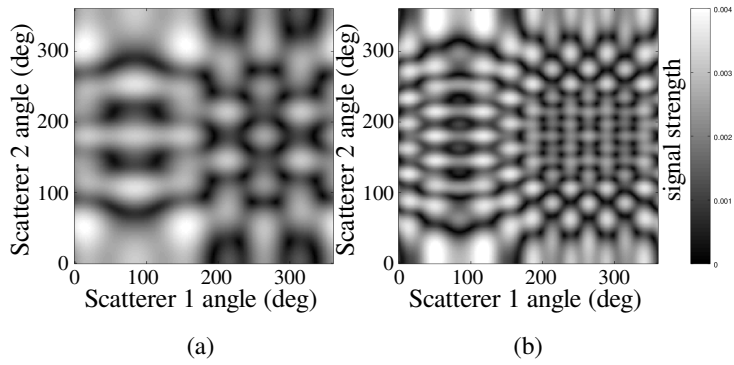}
\caption{Response as a function of scatterer angles at (a) $300$ Hz, (b) $600$ Hz.}
\label{fig:toroidal_flat}
\end{center}
\end{figure}

The benefit of this perspective is that all possible relative angles $\beta$ are represented.  The locus of points in Figure \ref{fig:toroidal_flat} with a fixed relative angle $\beta = \alpha' - \alpha$ is a line with slope of $1$.  The relative angle $\beta$ determines the intercept of the line.  For instance, if $\beta = 28^\circ$, Figure \ref{fig:crosscheck_slice_300} shows how the signature at $300$ Hz is obtained by following such a trajectory.

\begin{figure}
\begin{center}
\includegraphics[height=2in]{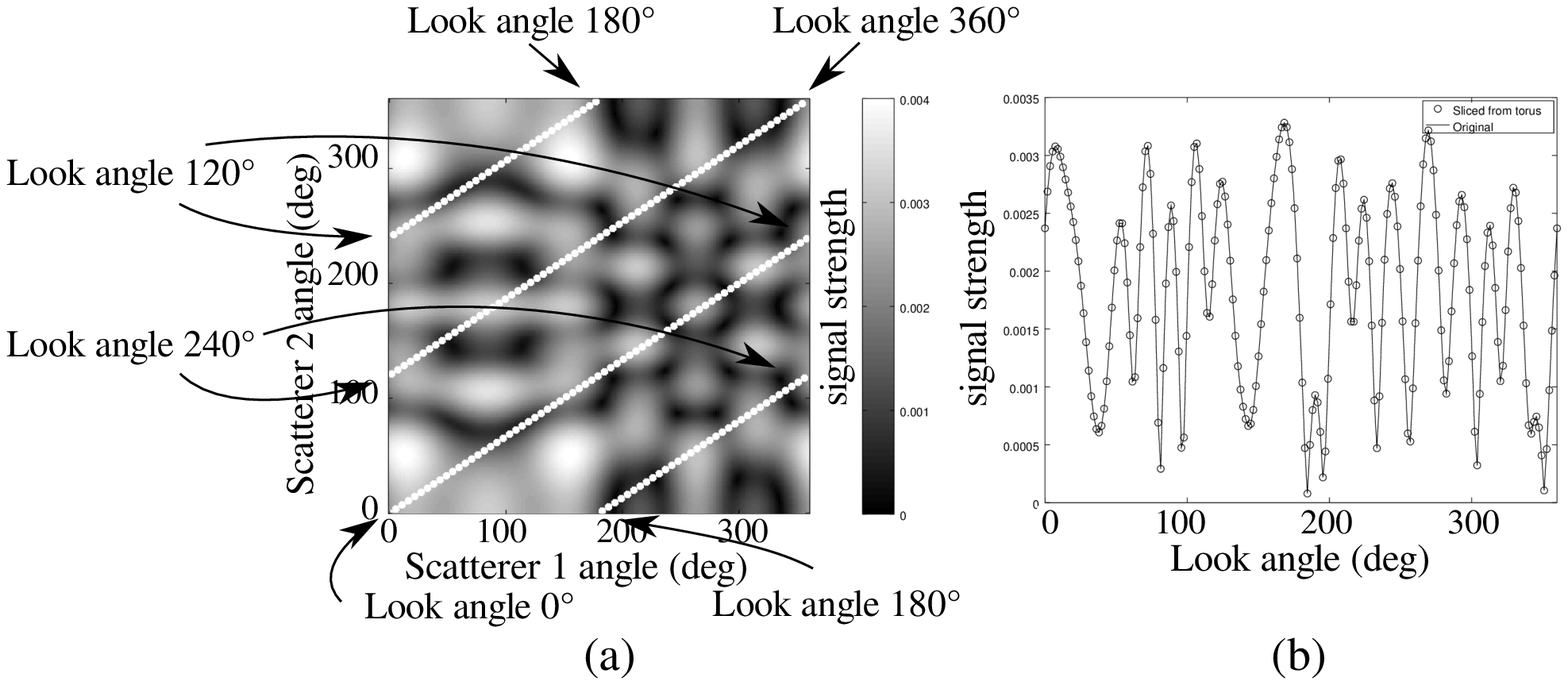}
\caption{Extracted single frequency response with a constant relative angle ($28^\circ$) between scatterers at $300$ Hz.}
\label{fig:crosscheck_slice_300}
\end{center}
\end{figure}

The plots shown in Figures \ref{fig:toroidal_flat}(a)--(b) and \ref{fig:crosscheck_slice_300}(a) are really on the surface of a torus in virtue of Proposition \ref{prop:signal_knot}, since both the horizontal and vertical axes measure angles.  The trajectory shown in Figure \ref{fig:crosscheck_slice_300}(a) for a relative angle $\beta=28^\circ$ eventually returns to its starting point (at the origin), after making two complete horizontal circuits and three complete vertical circuits in Figure \ref{fig:crosscheck_slice_300}(a).  The trajectory is therefore an example of a $(2,3)$ torus knot, as is immediately apparent when drawn on the surface of an embedded torus in Figure \ref{fig:ideal_knots}(a).  The signature is therefore best thought of as a function on the surface of the torus, since it too is periodic in both reference angles.  If we consider a different frequency, say $600$ Hz, then the torus knot trajectory is unchanged, but the underlying function on the surface of the torus changes, as shown in Figure \ref{fig:ideal_knots}(b).  Considering \emph{all} frequencies $f$ is theoretically (and practically) valuable, but is difficult to draw. 

\begin{figure}
\begin{center}
\includegraphics[height=2in]{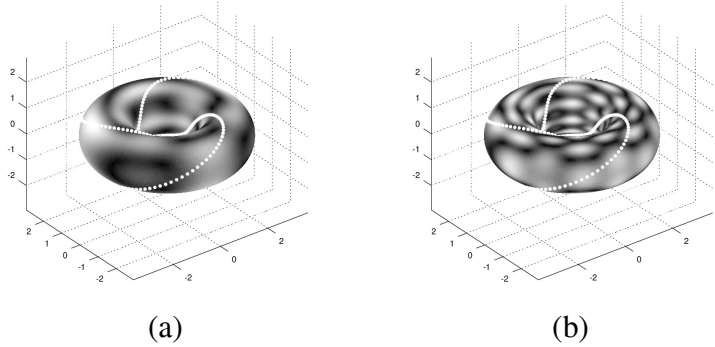}
\caption{Single frequency response with a constant relative angle ($28^\circ$) as a torus knot at (a) $300$ Hz, (b) $600$ Hz.}
\label{fig:ideal_knots}
\end{center}
\end{figure}

The functions on the surface of the torus in Figure \ref{fig:ideal_knots} allow us to go beyond merely classifying signatures by torus knot type.  If two targets have the same torus knot type, they can still be distinguished by differences between their corresponding functions on the torus.  If we switch to a different target, the torus functions will change, probably substantially, and will no longer assure the same kind of equivalence.  Therefore, the basic idea given two signatures is to first determine if they have the same torus knot type.  If not, they correspond to different targets.  If they do have the same torus knot type, then one would hope to simulate the functions on the torus and argue based on their specifics.

The mathematical invariants explored in this article characterize (theoretically) how close the two signatures are by parameterizing \emph{all} possible functions on the torus, and on all other manifolds besides.  While these invariants are \emph{emphatically not yet practically computable}, their use provides a theoretical framework for asserting that two targets are distinguishable \emph{regardless} of trajectory distortions.

The key insight is that distorting the sensor trajectory deflects the knot on the surface of the torus, but cannot change its torus knot type.  Effectively, because the knot is constrained to lie on the surface of the torus, its knot type is ``locked in place.''  This article defines \emph{constant rank factorizations} to model trajectory distortions in this case.  Moreover, the article defines an equivalence class that classifies these factorizations.  The article presents several tools to characterize the mathematical properties of constant rank factorizations (Theorems \ref{thm:coslice_equivalent} and \ref{thm:crse_coslice}), and presents a complete characterization (Theorem \ref{thm:circle_quasip}) that applies to $P$-fold symmetric targets.

\section{Categories of factorizations}
\label{sec:definitions}

We begin by generalizing the ideas inherent in Section \ref{sec:motivation}.  Given the nature of the sensing problem, the collected data are measured in a \emph{signal space} $N$, which is typically a submanifold of $\mathbb{C}^n$.  For instance, the signal space is $\mathbb{C}^n$ in Equations \eqref{eq:general_point_scatterers} \eqref{eq:one_scatterer_family}, and \eqref{eq:two_scatterers} if $n$ distinct frequencies are collected from each sensor location along the trajectory.

Similarly, the sensor's configuration (location, time, etc.) is captured by a manifold $M$.  We assume that a signature of our target is characterized by an idealized representative target model represented as a smooth map between two spaces $U: C \to N$, and that the sensor trajectory can be modeled by composing this idealized model with a function $\phi: M \to C$, which may contain distortions.  We cannot measure either of these functions, but instead are able to measure their composition $u = U \circ \phi$.  In order to make any theoretical headway, some assumptions must be placed on the functions $\phi$ or $U$.  This article relies on the idea that the trajectory does not ``come to a stop'' at any point.  This can be modeled formally by requiring the Jacobian matrix $d\phi$ to have a constant rank throughout $M$.

\begin{definition}
  \label{def:constant_rank}
  A \emph{constant rank factorization} of a smooth map $u: M \to N$ between two manifolds is a factorization $u = U \circ \phi$ in which $\phi$ is a smooth map of \emph{constant rank}\footnote{The Jacobian matrix $d_x\phi$ has the same rank at every point $x$ in $M$.}.  If we write $\phi : M \to C$ and $U : C \to N$ for a constant rank factorization, we call $C$ the \emph{phase space}, $\phi$ the \emph{phase map}, and $U$ the \emph{signature map}.  We will often write a constant rank factorization $u = U \circ \phi$ as an ordered pair $(\phi,U)$ when the function $u$ is understood from context.

  A \emph{morphism} $C: (\phi_1,U_1) \to (\phi_2,U_2)$ between two constant rank factorizations of a single map $u: M \to N$ consists of a commuting diagram\footnote{A diagram is said to \emph{commute} if the functions obtained by composing adjacent arrows depend only on their start and end points in the diagram.  Commutative diagrams are an efficient way to express a collection of functional equations, and make frequent appearances in this article.}
  \begin{equation*}
    \xymatrix{
      M \ar[rr]^u \ar[dr]^{\phi_1} \ar@/_1pc/[ddr]_{\phi_2} && N \\
      &C_1 \ar[d]_-{c} \ar[ur]^{U_1}&\\
      &C_2 \ar@/_1pc/[uur]_{U_2}&\\
      }
  \end{equation*}
  where $c: C_1 \to C_2$ is a smooth map, which we call the \emph{component map}.
  Because the diagram commutes, this means that $\phi_2 = c \circ \phi_1$ and $U_1=U_2 \circ c$.
\end{definition}

In the CSAS example in Section \ref{sec:motivation}, the path of the sensor defines the manifold $M = \mathbb{R}$.  The signal space is $N = \mathbb{C}^n$ where $n$ is the number of frequency samples collected.  Finally, the phase space $C=S^1$ is the look angle.  As a result, Figures \ref{fig:scatterer_sigs} and \ref{fig:scatterer_combined_sig} show examples of three signature maps ($U$ functions).  In contrast, Figure \ref{fig:scatterer_sigs2} shows two examples of $u= U \circ \phi$ functions.

\begin{lemma}
  The class of constant rank factorizations of a smooth map $u: M \to N$ forms a category $\cat{Const}(u)$, where the composition of morphisms of constant rank factorizations consists of composition of maps between the phase spaces.
\end{lemma}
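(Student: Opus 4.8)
The plan is to verify the category axioms directly from Definition~\ref{def:constant_rank}, treating constant rank factorizations $(\phi,U)$ of $u$ as objects and the commuting diagrams of that definition as morphisms, with the component map supplying the underlying data. First I would define composition: given morphisms $c: (\phi_1,U_1) \to (\phi_2,U_2)$ and $d: (\phi_2,U_2) \to (\phi_3,U_3)$, set their composite to be $d \circ c : C_1 \to C_3$. One must check this is a legitimate morphism, i.e.\ that $\phi_3 = (d\circ c)\circ \phi_1$ and $U_1 = U_3 \circ (d\circ c)$; these follow immediately by substituting the relations $\phi_2 = c\circ\phi_1$, $\phi_3 = d\circ\phi_2$, $U_2 = U_3\circ d$, $U_1 = U_2\circ c$ and using associativity of function composition, and smoothness of $d\circ c$ is automatic since a composite of smooth maps between manifolds is smooth.

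Next I would exhibit identities: for each object $(\phi,U)$ the identity map $\id_C : C \to C$ is a morphism $(\phi,U)\to(\phi,U)$, since $\phi = \id_C \circ \phi$ and $U = U \circ \id_C$ trivially commute, and $\id_C$ is smooth. The left and right unit laws $c \circ \id_{C_1} = c = \id_{C_2} \circ c$ then hold because they already hold for the underlying smooth maps. Associativity of composition of morphisms, $(e\circ d)\circ c = e\circ(d\circ c)$, likewise reduces to associativity of composition of the underlying maps $C_1 \to C_4$. Finally, one should note the (set-theoretic) size caveat implicit in the word ``class'': the objects form a proper class rather than a set because the phase space $C$ ranges over all manifolds, so $\cat{Const}(u)$ is a (possibly large) category in the usual sense; the hom-collections between any two fixed objects are sets, being subsets of the set of smooth maps $C_1 \to C_2$.

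The verification is entirely routine; there is essentially no obstacle. The only point requiring a moment's care is confirming that composing two commuting diagrams of the prescribed shape again yields a commuting diagram of that shape --- that the composite component map really does make \emph{both} triangles (the one over $N$ and the one under $M$) commute simultaneously --- but this is immediate from chasing the two defining equations through the composite. I would present the proof as a short sequence of displayed equations verifying $\phi_3 = d\circ c\circ\phi_1$ and $U_1 = U_3\circ d\circ c$, followed by one sentence each for identities, unitality, and associativity, and a remark that constant-rankness of $\phi$ is a property of the object and plays no role in the morphism structure, so nothing about it needs to be re-checked under composition.
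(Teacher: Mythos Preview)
Your proposal is correct and follows essentially the same approach as the paper: a direct verification of the category axioms, with morphism composition given by composition of the component maps on phase spaces, identities given by $\id_C$, and associativity and unit laws inherited from those of smooth maps. Your write-up is in fact slightly more thorough than the paper's (you explicitly check that the composite of two morphisms is again a morphism, and you remark on size and on the irrelevance of the constant-rank condition to the morphism structure), but the argument is the same.
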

\begin{proof}
  We take each constant rank factorization $u = U \circ \phi$ as an object in $\cat{Const}(u)$.

  Using the identity function for the component map, the morphism $\id_{(\phi,U)} : (\phi,U) \to (\phi,U)$ given by the diagram
\begin{equation*}
    \xymatrix{
      M \ar[rr]^u \ar[dr]^{\phi} \ar@/_1pc/[ddr]_{\phi} && N \\
      &C \ar[d]_-{\id} \ar[ur]^{U}&\\
      &C \ar@/_1pc/[uur]_{U}&\\
      }
\end{equation*}
does not change any other morphism when it is composed on the left or the right.  For instance, abusing notation slightly, if $m: (\phi,U) \to (\phi',U')$ is another morphism whose component map is $m$, then we have the larger diagram
\begin{equation*}
    \xymatrix{
      M \ar[rr]^u \ar[dr]^{\phi} \ar@/_1pc/[ddr]_{\phi} \ar@/_2pc/[dddr]_{\phi'}&& N \\
      &C \ar[d]_-{\id_C} \ar[ur]^{U}&\\
      &C \ar@/_1pc/[uur]_{U} \ar[d]_-{m}&\\
      &D \ar@/_2pc/[uuur]_{U'}&\\
      }
\end{equation*}
that defines the composition $m \circ \id_{(\phi,U)}$.  Evidently the component map of this composition is $m = m \circ \id_C$, which asserts that on the level of $\cat{Const}(u)$ morphisms, we have that $m=m \circ \id_{(\phi,U)}$.  Composition on the left by $\id_{(\phi,U)}$ follows similarly.

Suppose that we have three morphisms $m_1: (\phi_1,U_1) \to (\phi_2,U_2)$, $m_2: (\phi_2,U_2) \to (\phi_3,U_3)$, and $m_3: (\phi_3,U_3) \to (\phi_4,U_4)$.  Associativity of composition of these morphisms $(m_1 \circ m_2) \circ m_3 = m_1 \circ (m_2 \circ m_3)$ follows because the same equation holds for their corresponding component maps. 
\end{proof}

One general source of constant rank factorizations is \emph{quasiperiodic factorizations}, which were defined in several earlier papers.

\begin{definition}\cite{Robinson_qplpf,Robinson_SampTA_2015}
  A \emph{quasiperiodic factorization of a smooth map $u: M \to N$} consists of a factorization $u = U \circ \phi$ in which $\phi$ is a smooth submersion\footnote{A \emph{submersion} is a constant rank map in which the Jacobian matrix is surjective as a linear map.}.  The category of quasiperiodic factorizations $\cat{QuasiP}(u)$ is defined analogously to $\cat{Const}(u)$, with objects consisting of factorizations and morphisms given by diagrams of exactly the same form as in Definition \ref{def:constant_rank}.
\end{definition}

\begin{proposition}
  \label{prop:quasi_subcat}
  Every quasiperiodic factorization is also a constant rank factorization.  Thus the category of quasiperiodic factorizations $\cat{QuasiP}(u)$ is a subcategory of the category of constant rank factorizations $\cat{Const}(u)$.
\end{proposition}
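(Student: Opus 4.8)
The plan is to verify the two assertions in sequence: first that every quasiperiodic factorization is a constant rank factorization, and then that the inclusion of objects extends to a subcategory relationship. The first part is essentially a definitional unwinding. A quasiperiodic factorization $u = U \circ \phi$ requires $\phi$ to be a smooth submersion. By the footnote in the definition of quasiperiodic factorization, a submersion is a constant rank map whose Jacobian is surjective at every point. In particular, the rank of $d_x\phi$ equals $\dim C$ at every $x \in M$, which is a constant independent of $x$. Hence $\phi$ has constant rank in the sense of Definition \ref{def:constant_rank}, and the very same factorization $(\phi, U)$ qualifies as a constant rank factorization of $u$. So every object of $\cat{QuasiP}(u)$ is an object of $\cat{Const}(u)$.

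Next I would address the categorical claim. A morphism in $\cat{QuasiP}(u)$ is, by definition, a commuting diagram of exactly the same shape as in Definition \ref{def:constant_rank}, with a smooth component map $c: C_1 \to C_2$ satisfying $\phi_2 = c \circ \phi_1$ and $U_1 = U_2 \circ c$. There is no additional constraint imposed on $c$ beyond smoothness, so any such diagram between two quasiperiodic factorizations is also a valid morphism diagram between the same two factorizations viewed as objects of $\cat{Const}(u)$. Thus the morphism sets satisfy $\mathrm{Hom}_{\cat{QuasiP}(u)}((\phi_1,U_1),(\phi_2,U_2)) \subseteq \mathrm{Hom}_{\cat{Const}(u)}((\phi_1,U_1),(\phi_2,U_2))$.

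Finally I would check that the identity morphisms and composition law agree. In both categories the identity on $(\phi,U)$ is given by the diagram with $c = \id_C$, and composition of morphisms is composition of the underlying component maps between phase spaces; these are literally the same operations in $\cat{QuasiP}(u)$ as in $\cat{Const}(u)$. Therefore the inclusion on objects and morphisms is compatible with identities and composition, so $\cat{QuasiP}(u)$ is a subcategory of $\cat{Const}(u)$, completing the proof.

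There is no real obstacle here: the statement is a bookkeeping consequence of the fact that submersions are a special case of constant rank maps and that the two categories share identical morphism diagrams and composition rules. The only point requiring any care is making explicit that "submersion" entails "constant rank" — one should cite the definition's footnote rather than leave it implicit — and noting that no morphism-level condition is lost in passing to the larger category, so that the inclusion is genuinely a subcategory inclusion (in fact a full one, since there are no extra morphisms in $\cat{Const}(u)$ between quasiperiodic objects either, though full faithfulness is more than the statement asks for).
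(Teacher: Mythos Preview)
Your proposal is correct and follows the same approach as the paper, whose proof consists of the single observation that a submersion is necessarily of constant rank. You have simply spelled out in detail the categorical bookkeeping (inclusion of morphism sets, agreement of identities and composition) that the paper leaves implicit.
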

\begin{proof}
  Observe that a submersion is necessarily of constant rank.
\end{proof}

Both categories $\cat{QuasiP}(u)$ and $\cat{Const}(u)$ share an initial object.

\begin{proposition}
  \label{prop:initial}
  The factorization of a smooth map $u: M \to N$ given by $u = u \circ \id_M$ is an initial object for $\cat{Const}(u)$ and an initial object for $\cat{QuasiP}(u)$.  
\end{proposition}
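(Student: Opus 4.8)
The plan is to verify directly that the factorization $(\id_M, u)$ satisfies the universal property of an initial object in each category: for every object $(\phi, U)$ of $\cat{Const}(u)$ (respectively $\cat{QuasiP}(u)$), there is exactly one morphism $(\id_M, u) \to (\phi, U)$. First I would observe that $(\id_M, u)$ really is an object of both categories: $\id_M$ is a smooth submersion (its Jacobian is the identity, hence surjective at every point), so in particular it is of constant rank, and trivially $u = u \circ \id_M$. Since $\cat{QuasiP}(u)$ is a subcategory of $\cat{Const}(u)$ by Proposition~\ref{prop:quasi_subcat}, and both categories have morphisms of exactly the same form, it suffices to run the argument once in $\cat{Const}(u)$; the claim for $\cat{QuasiP}(u)$ then follows because the unique component map produced below lands among the morphisms of the subcategory.

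Next I would exhibit the candidate morphism. Given any object $(\phi, U)$ with $\phi : M \to C$, take the component map $c := \phi : M \to C$ itself. The required diagram has $M$ at the top with phase space $C_1 = M$ and phase map $\phi_1 = \id_M$, and phase space $C_2 = C$ with phase map $\phi_2 = \phi$; the commutativity conditions from Definition~\ref{def:constant_rank} read $\phi_2 = c \circ \phi_1$, i.e. $\phi = \phi \circ \id_M$, and $U_1 = U_2 \circ c$, i.e. $u = U \circ \phi$. Both hold by hypothesis, so $c = \phi$ defines a genuine morphism $(\id_M, u) \to (\phi, U)$.

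For uniqueness, suppose $c' : M \to C$ is any component map of a morphism $(\id_M, u) \to (\phi, U)$. The commuting condition $\phi_2 = c' \circ \phi_1$ forces $\phi = c' \circ \id_M = c'$, so $c' = \phi$ is forced, and hence the morphism is unique. There is essentially no obstacle here — the only mild subtlety is the bookkeeping observation that a morphism in these categories is determined by its component map (which was used implicitly in the proof of the preceding lemma, where composition and identities were defined purely at the level of component maps), so that "the same component map" really means "the same morphism." I would state that explicitly and then conclude that $(\id_M, u)$ is initial in $\cat{Const}(u)$, and, being an object of $\cat{QuasiP}(u)$ with the same (unique) morphisms to every object of that subcategory, initial there as well.
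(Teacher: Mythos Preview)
Your proposal is correct and follows essentially the same approach as the paper: both arguments note that for any factorization $(\phi,U)$ the commuting condition $\phi = c \circ \id_M$ forces $c = \phi$, and that this choice makes the diagram commute. You add a bit more care (checking that $(\id_M,u)$ is actually an object of both categories, and making explicit that morphisms are determined by their component maps), but the substance is identical.
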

\begin{proof}
  To establish this, suppose that $u = U \circ \phi$ is another constant rank (or quasiperiodic) factorization.  The only way to make the following diagram commute
\begin{equation*}
  \xymatrix{
    M \ar[rr]^u \ar[dr]^{\id_M} \ar@/_1pc/[ddr]_{\phi} & & N \\
    & M \ar[ur]^u \ar[d]_{c}& \\
    & C \ar@/_1pc/[uur]_{U}\\
    }
\end{equation*}
for some manifold $C$ is to assert that $c=\phi$.  Conversely, with $c=\phi$, this diagram always commutes!
\end{proof}

Section \ref{sec:motivation} provides anecdotal evidence that certain sonar collections can be described by torus knots.  This is a general situation that arises whenever two periodic smooth functions are superposed.

\begin{proposition}
  \label{prop:signal_knot}
  If $u: S^1 \to \mathbb{C}^D$ is the sum
  \begin{equation*}
    u(x) = u_1(x) + u_2(x)
  \end{equation*}
  of two non-constant smooth functions, a $(2\pi / m)$-periodic function $u_1$, and a $(2\pi / n)$-periodic function $u_2$ for integers $m$ and $n$, then $u$ has a constant rank factorization in which the phase function is a $(m,n)$ torus knot.
\end{proposition}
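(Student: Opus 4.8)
The plan is to realize $u$ as the pullback of a function on the $2$-torus along the standard $(m,n)$ winding curve. First I would record the elementary fact that a $(2\pi/m)$-periodic smooth map factors through the degree-$m$ self-covering of the circle. Writing $S^1 = \mathbb{R}/2\pi\mathbb{Z}$ and $\pi_m \colon S^1 \to S^1$ for the smooth covering map $\pi_m(x) = mx$ (a local diffeomorphism, since $m \neq 0$), the hypothesis $u_1(x + 2\pi/m) = u_1(x)$ says exactly that $u_1$ is constant on the fibers of $\pi_m$, so there is a well-defined map $v_1 \colon S^1 \to \mathbb{C}^D$ with $u_1 = v_1 \circ \pi_m$; and $v_1$ is smooth because near any point it agrees with $u_1 \circ \sigma$ for a local smooth section $\sigma$ of $\pi_m$. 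Likewise $u_2 = v_2 \circ \pi_n$ for a smooth $v_2 \colon S^1 \to \mathbb{C}^D$.

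Next I would take the phase space to be the torus $C = S^1 \times S^1$, define the signature map $U \colon S^1 \times S^1 \to \mathbb{C}^D$ by $U(\xi,\eta) = v_1(\xi) + v_2(\eta)$, and define the phase map $\phi \colon S^1 \to S^1 \times S^1$ by $\phi(x) = (\pi_m(x),\pi_n(x)) = (mx,nx)$. Then $U$ is smooth as a sum of smooth maps, and $U(\phi(x)) = v_1(mx) + v_2(nx) = u_1(x) + u_2(x) = u(x)$, so $(\phi,U)$ is a factorization of $u$ in the sense of Definition \ref{def:constant_rank}.

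It then remains to verify the two required properties. In any local coordinate on the domain $S^1$, the derivative of $\phi$ is the linear map $\mathbb{R} \to \mathbb{R}^2$ sending $1 \mapsto (m,n)$, which has rank $1$ at every point since $m$ and $n$ are nonzero integers; hence $\phi$ is of constant rank and $(\phi,U) \in \cat{Const}(u)$. Finally, by construction $\phi$ traverses the torus winding $m$ times around the first factor and $n$ times around the second, so its image is the $(m,n)$ curve on $S^1 \times S^1$ --- exactly the assertion that the phase function is an $(m,n)$ torus knot. I do not anticipate a genuine obstacle here; the one place calling for care is the descent of $u_1$ and $u_2$ to the quotient circles, where one must use that $\pi_m$ and $\pi_n$ are covering maps (not merely continuous surjections) to conclude that $v_1$ and $v_2$ are smooth. (If one wanted an \emph{embedded} knot in the classical sense one would additionally need $\gcd(m,n) = 1$, but the statement asks only for the $(m,n)$ winding data, so no such hypothesis is required.)
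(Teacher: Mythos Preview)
Your proposal is correct and follows essentially the same construction as the paper: define $\phi(x)=(mx,nx)$ into the torus and set $U$ to be the sum of the descended functions. The only cosmetic difference is that the paper handles the descent of $u_1$ and $u_2$ to the quotient circles by citing an external result on universal quasiperiodic factorizations, whereas you carry it out directly via the covering-map argument; your version is slightly more self-contained, but the underlying idea and the resulting factorization are identical.
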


Notice in particular that a torus $S^1 \times S^1$ is of larger dimension ($2$) than that of the circle $S^1$.  Therefore, the constant rank factorization guaranteed by this Proposition is \emph{not} a quasiperiodic factorization.  On the other hand, projecting to either of the two factors $S^1 \times S^1 \to S^1$ yields a quasiperiodic factorization.

\begin{proof}
  Let $\phi_1, \phi_2 : S^1 \to S^1$ be linear maps of degree $m$ and $n$, respectively, so that
  \begin{equation*}
    \phi_1(x) := [mx]_{2\pi}, \text{ and } \phi_2(x) := [nx]_{2\pi}.
  \end{equation*}
  According to \cite[Thm. 8]{Robinson_SampTA_2015}, this choice results in the universal quasiperiodic factorizations of $u_1$ and $u_2$ provided $m$ and $n$ are both minimal.
  If we define $\phi : S^1 \to (S^1 \times S^1)$ by
  \begin{equation*}
    \phi(x) := (\phi_1(x), \phi_2(x)) = ([mx]_{2\pi},[nx]_{2\pi})
  \end{equation*}
  then, by hypothesis
  \begin{equation*}
    \xymatrix{
      S^1 \ar[r]^u \ar[d]_{\phi} & \mathbb{C}^D \\
      S^1 \times S^1 \ar[ur]_{U}
      }
  \end{equation*}
  commutes if we take
  \begin{equation*}
    U(y,z):= u_1(y) + u_2(z).
  \end{equation*}
  By construction, $\phi$ is of constant rank, and is also a torus knot of type $(m,n)$.
\end{proof}


The categories $\cat{Const}(u)$ and $\cat{QuasiP}(u)$ are diffeomorphism invariants.

\begin{theorem}
  \label{thm:crse_diffeo}
  Suppose that $u_1,u_2: M \to N$ are smooth maps and $M$ is a connected manifold.  If $u_2 = u_1 \circ f$ where $f: M \to M$ is a diffeomorphism, then
  \begin{enumerate}
  \item $\cat{Const}(u_1)$ and $\cat{Const}(u_2)$ are isomorphic categories, and
  \item $\cat{QuasiP}(u_1)$ and $\cat{QuasiP}(u_2)$ are isomorphic categories.
  \end{enumerate}
\end{theorem}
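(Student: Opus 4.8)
The plan is to construct an explicit isomorphism of categories $F: \cat{Const}(u_1) \to \cat{Const}(u_2)$ and then observe that the same construction restricts to $\cat{QuasiP}$. Since $u_2 = u_1 \circ f$, given any constant rank factorization $u_1 = U \circ \phi$ with $\phi: M \to C$, I would set $F(\phi,U) := (\phi \circ f, U)$, noting that $u_2 = u_1 \circ f = U \circ \phi \circ f = U \circ (\phi \circ f)$. The key point to check here is that $\phi \circ f$ still has constant rank: because $f$ is a diffeomorphism, $d_x(\phi \circ f) = d_{f(x)}\phi \circ d_x f$ and $d_x f$ is a linear isomorphism at every $x$, so $\rank d_x(\phi \circ f) = \rank d_{f(x)}\phi$, which is constant since $\phi$ has constant rank and $f$ is surjective. (Connectedness of $M$ is not even strictly needed for this direction, though it matches the ambient hypotheses; it would matter only if one wanted to phrase "constant rank" as "locally constant rank.") The same computation shows that submersions are sent to submersions, so $F$ restricts to a functor $\cat{QuasiP}(u_1) \to \cat{QuasiP}(u_2)$.

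Next I would define $F$ on morphisms. A morphism $c: (\phi_1, U_1) \to (\phi_2, U_2)$ in $\cat{Const}(u_1)$ is a smooth component map $c: C_1 \to C_2$ with $\phi_2 = c \circ \phi_1$ and $U_1 = U_2 \circ c$. I claim the same smooth map $c$ serves as a morphism $F(\phi_1,U_1) \to F(\phi_2,U_2)$: we need $\phi_2 \circ f = c \circ (\phi_1 \circ f)$, which follows by precomposing $\phi_2 = c \circ \phi_1$ with $f$, and we need $U_1 = U_2 \circ c$, which is unchanged. Functoriality is then immediate since $F$ is the identity on component maps: it preserves identities ($\id_{C}$ goes to $\id_{C}$) and composition (composition of component maps is unaffected by precomposing phase maps with $f$). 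This makes $F$ a functor, and again it visibly restricts to the quasiperiodic subcategories.

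Finally, to see $F$ is an isomorphism of categories, I would construct its inverse using $f^{-1}$, which is again a diffeomorphism (smooth by the definition of diffeomorphism), and observe $u_1 = u_2 \circ f^{-1}$. The functor $G: \cat{Const}(u_2) \to \cat{Const}(u_1)$ defined by $G(\psi, V) := (\psi \circ f^{-1}, V)$, identity on component maps, is well-defined by the same rank argument. Then $G \circ F$ sends $(\phi, U)$ to $(\phi \circ f \circ f^{-1}, U) = (\phi, U)$ and is the identity on component maps, so $G \circ F = \id_{\cat{Const}(u_1)}$ on the nose; symmetrically $F \circ G = \id_{\cat{Const}(u_2)}$. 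The identical argument, restricted to submersive phase maps, gives $\cat{QuasiP}(u_1) \cong \cat{QuasiP}(u_2)$.

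I do not anticipate a genuine obstacle: the entire content is the chain-rule observation that precomposition with a diffeomorphism preserves the rank of the differential pointwise (hence constancy of rank, and the submersion property), after which everything is formal bookkeeping with commuting triangles. The only mild subtlety worth stating carefully is that the morphism sets are literally unchanged—$F$ acts as the identity on component maps—so that strict (not merely equivalence-of-categories) isomorphism holds; this is what lets the two inverse functors compose to identities rather than merely to naturally-isomorphic functors.
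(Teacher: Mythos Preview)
Your proposal is correct and follows essentially the same route as the paper: define the functor by $(\phi,U)\mapsto(\phi\circ f,U)$, leave component maps unchanged, and invert via $f^{-1}$. Your treatment is in fact more explicit than the paper's---you spell out the chain-rule rank computation and the verification on morphisms, whereas the paper compresses this into a single commuting square and the remark that diffeomorphisms are of locally constant rank on a connected $M$.
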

\begin{proof}
  Both statements follow from reasoning about the following commutative diagram
  \begin{equation*}
    \xymatrix{
      M \ar[d]_\phi \ar[dr]^{u_1} & M \ar[l]_f \ar[d]^{u_2}\\
      C \ar[r]_U & N
      }
  \end{equation*}
  for an arbitrary constant rank (or quasiperiodic factorization) $u_1 = U \circ \phi$.  Evidently this implies that $u_2 = U \circ (\phi \circ f)$ is a factorization of the same type, since diffeomorphisms are of locally constant rank and we assumed that $M$ is connected.  This provides for a bijection on the objects of the two categories under discussion.  Since this also means that the factorizations of $u_1$ and $u_2$ share the space $C$, there is no further transformation required on morphisms to establish the isomorphism.
\end{proof}

A given smooth map $u$ often has many quasiperiodic factorizations, though there is a ``simplest'' one given by the final object of $\cat{QuasiP}(u)$.  That these final objects always exist is proven in \cite[Thm. 5]{Robinson_SampTA_2015}.  The same proof extends to $\cat{Const}(u)$ without incident.

\begin{proposition}
  \label{prop:universal_constant_rank}
  The category of constant rank factorizations $\cat{Const}(u)$ of a smooth map $u: M \to N$ has final objects.
\end{proposition}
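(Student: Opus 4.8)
The paragraph preceding the statement already points to the route: re-run the proof of \cite[Thm. 5]{Robinson_SampTA_2015}, which produces a final object of $\cat{QuasiP}(u)$, with the word ``submersion'' everywhere weakened to ``constant rank map.'' So the plan is first to recall the construction of that universal factorization. In \cite{Robinson_SampTA_2015} one builds a distinguished object $(\phi^*, U^*)$, with phase space $C^*$, by collapsing $M$ as far as $u$ allows: each constant rank factorization $(\phi, U)$ of $u$ partitions $M$ by the level sets of $\phi$, on each of which $u = U \circ \phi$ is constant, and $C^*$ is obtained by quotienting $M$ by the equivalence relation $\sim$ generated by the union of all such partitions, as $(\phi, U)$ ranges over the constant rank factorizations of $u$. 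Writing $\phi^* : M \to C^*$ for the quotient map, the map $u$ descends to a smooth map $U^* : C^* \to N$ with $u = U^* \circ \phi^*$.

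The verification then splits into two parts. (i) One must show that $C^*$ is a bona fide (Hausdorff, second countable) smooth manifold, that $\phi^*$ is a smooth map of constant rank, and that $U^*$ is smooth. The property of submersions that the argument of \cite{Robinson_SampTA_2015} rests on is their local normal form $(x_1,\dots,x_m) \mapsto (x_1,\dots,x_r)$; by the rank theorem a constant rank map has the parallel normal form $(x_1,\dots,x_m) \mapsto (x_1,\dots,x_r,0,\dots,0)$, and these local models assemble into an atlas on $C^*$ and equip $\phi^*$ with a well-defined constant rank exactly as before, the only bookkeeping difference being that the rank $r$ is no longer forced to equal $\dim C^*$. (ii) One must verify the universal property: for an arbitrary constant rank factorization $(\phi, U)$, the relation $\sim$ is coarser than the level-set partition of $\phi$, so $\phi^*$ factors through $\phi$ as $\phi^* = c \circ \phi$ for a unique smooth component map $c : C \to C^*$ with $U = U^* \circ c$, and this is precisely a unique morphism $(\phi, U) \to (\phi^*, U^*)$ in $\cat{Const}(u)$. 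Part (ii) is essentially a diagram chase that uses nothing about the rank, though some care is needed when $\phi$ is not surjective, since then $c$ is determined by $\phi^* = c \circ \phi$ only on the image of $\phi$, and one must check that the constraint $U = U^* \circ c$ forces $c$ on the rest of $C$.

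I expect the main obstacle to be part (i): showing that $M/{\sim}$ is Hausdorff and second countable and that the candidate smooth structure on it is well defined, independent of the local normal forms chosen. This is, as in \cite[Thm. 5]{Robinson_SampTA_2015}, the technical heart of the matter. The reason the adaptation should go through ``without incident'' is that this argument is phrased entirely in terms of the local normal form of $\phi$ and never uses surjectivity of the Jacobian, so replacing the submersion normal form by the constant rank normal form (and $\dim C$ by the rank $r$ wherever it occurs) leaves the construction, the manifold-structure argument, and the universal-property argument intact.
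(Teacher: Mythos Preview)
Your proposal takes exactly the same route as the paper: the paper's entire proof is the single sentence ``The construction given in \cite[Thm.~5]{Robinson\_SampTA\_2015} of a universal quasiperiodic factorization goes through \emph{mutatis mutandis} for constant rank factorizations,'' and you have simply unpacked what that \emph{mutatis mutandis} entails. Your elaboration---recalling the quotient construction, invoking the rank theorem in place of the submersion normal form, and flagging the non-surjectivity issue in part (ii)---goes well beyond the detail the paper itself supplies, but the underlying strategy is identical.
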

\begin{proof}
  The construction given in \cite[Thm. 5]{Robinson_SampTA_2015} of a universal quasiperiodic factorization goes through \emph{mutatis mutandis} for constant rank factorizations.
\end{proof}

At this point it likely seems that constant rank factorizations and quasiperiodic factorizations are rather similar.  The key difference between these two concepts is crystallized by the idea of an embedded torus knot $u: S^1 \to (S^1)^n \subset \mathbb{R}^{n+1}$.  Every quasiperiodic factorization of $u$ will necessarily have a $1$-dimensional phase space due to the requirement that the phase map be a surjective submersion.  (Moreover, see Lemma \ref{lem:loop_phase_space} proven a little later in this article.)  In contrast, factorization of $u$ through a torus $(S^1)^n$ can be of constant rank.  In this way, torus knots are more naturally studied through the lens of constant rank factorizations.  The next example emphasizes this difference a bit more starkly by exhibiting a torus knot that has no nontrivial quasiperiodic factorizations.  The key idea is to violate the periodicity hypotheses of Proposition \ref{prop:signal_knot}.

\begin{example}
  \label{eg:infinite_knot}
  The function $u: \mathbb{R} \to \mathbb{R}$ given by
  \begin{equation*}
    u(x) := \sin x + \sin(\pi x)
  \end{equation*}
  has a constant rank factorization
  \begin{equation*}
    \xymatrix{
      \mathbb{R} \ar[r]^u \ar[d]_{\phi} & \mathbb{R} \\
      S^1 \times S^1 \ar[ur]_U& \\
    }
  \end{equation*}
  where
  \begin{equation}
    \label{eq:phi_sinx_sinpix}
    \phi(x) := \left([x]_{2\pi},[\pi x]_{2\pi}\right),
  \end{equation}
  and
  \begin{equation*}
    U(y,z) := \sin y + \sin z,
  \end{equation*}
  since the derivative of $\phi$ is a constant, hence of constant rank.  Since the dimension of $\mathbb{R}$ is $1$, it follows that $\phi$ cannot be a submersion $\phi: \mathbb{R} \to (S^1 \times S^1)$ since the codomain is of dimension $2$.

  On the other hand, $u$ has no quasiperiodic factorization with $S^1$ as the phase space, and so only has the trivial quasiperiodic factorization.  To see this, recall the trigonometric identity
  \begin{eqnarray*}
    \sin(x+y)\cos(x-y) &=& \frac{1}{4i}\left(e^{i(x+y)}-e^{-i(x+y)}\right)\left(e^{i(x-y)}+e^{-i(x-y)}\right)\\
    &=& \frac{1}{4i}\left(e^{2ix} - e^{-2ix}+e^{2iy}-e^{-2iy}\right)\\
    &=& \frac{1}{2}\sin (2x) + \frac{1}{2}\sin(2y).
  \end{eqnarray*}
  Thus
  \begin{equation*}
    u(x) = 2 \sin\left(\left(\frac{1+\pi}{2}\right)x\right)\cos\left(\left(\frac{1-\pi}{2}\right)x\right).
  \end{equation*}
  The supremum of this function is certainly not more than $2$ and its infimum is certainly not less than $-2$.  It is a commonly-recognized fact that the image of $\phi:\mathbb{R}\to \left(S^1 \times S^1\right)$ given as Equation \eqref{eq:phi_sinx_sinpix} is dense in $S^1 \times S^1$, so this means that the supremum of $u$ is therefore equal to $2$ and the infimum is equal to $-2$.

  Now suppose that there was a quasiperiodic factorization $u = V \circ \psi$ with phase space $S^1$, so that $\psi: \mathbb{R} \to S^1$ and $V : S^1 \to \mathbb{R}$.  Since $S^1$ is compact, $V$ must attain its maximum value, which must be $2$, and likewise $V$ must attain its minimum value, which must be $-2$.  Therefore, at the maximum, $x$ must satisfy the equation
  \begin{equation*}
    2 = 2 \sin\left(\left(\frac{1+\pi}{2}\right)x\right)\cos\left(\left(\frac{1-\pi}{2}\right)x\right).
  \end{equation*}
  This means that
  \begin{equation*}
    \left(\frac{1+\pi}{2}\right)x = \left(\frac{1}{2}+2n\right)\pi,\text{ and }\left(\frac{1-\pi}{2}\right)x = 2 m \pi
  \end{equation*}
  for some integers $m$ and $n$.  Therefore,
  \begin{equation*}
    x = \frac{4 m \pi}{1-\pi}
  \end{equation*}
  so that
  \begin{equation*}
    \frac{1+4n}{1+\pi} = \frac{4m}{1-\pi}.
  \end{equation*}
  Solving for $m$, we have that
  \begin{equation*}
    m = \frac{1}{4}\left(\frac{1-\pi}{1+\pi}\right)(1+4n),
  \end{equation*}
  which is a contradiction since that quantity is not rational!  
\end{example}

Taken together, Propositions \ref{prop:quasi_subcat} and \ref{prop:universal_constant_rank} imply that any final object $u = U \circ \phi$ of $\cat{QuasiP}(u)$ is an object of $\cat{Const}(u)$.
One therefore might speculate that $u$ would also be a final object of $\cat{Const}(u)$ as well.
The map in Example \ref{eg:infinite_knot} is a counterexample to this, however!
Specifically, take an object of $\cat{Const}(u)$ with a compact phase space (namely $S^1 \times S^1$).
There can be no $\cat{Const}(u)$ morphism from this object to the single isomorphism class of $\cat{QuasiP}(u)$ since the phase space in that case is not compact; this is precluded by the constant rank assumption.

\subsection{Functoriality}
\label{sec:functoriality}

The categories $\cat{Const}(u)$ and $\cat{QuasiP}(u)$ are natural in the sense that they are transformed functorially by pre- and post-composition of $u$ with other smooth functions.  The following Propositions are modifications of Theorem \ref{thm:crse_diffeo}.

\begin{proposition}
  \label{prop:left_functor}
  Suppose that $u: M \to N$ and $f: N \to N'$ are smooth maps.  The map $f$ induces a covariant functor $\cat{QuasiP}(u) \to \cat{QuasiP}(f \circ u)$ and a covariant functor $\cat{Const}(u) \to \cat{Const}(f \circ u)$.
\end{proposition}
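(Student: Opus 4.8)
The plan is to define the functor by post-composition with $f$ on the signature maps, leaving the phase maps and phase spaces untouched. Concretely, given an object $(\phi, U)$ of $\cat{Const}(u)$ with $\phi: M \to C$ and $U: C \to N$, I would send it to the pair $(\phi, f\circ U)$. Since $f\circ u = f\circ U\circ\phi = (f\circ U)\circ\phi$, this is genuinely a factorization of $f\circ u$; and because the phase map $\phi$ is unchanged, it remains of constant rank, so $(\phi, f\circ U)$ is an object of $\cat{Const}(f\circ u)$. The identical argument works verbatim for $\cat{QuasiP}$: a submersion $\phi$ stays a submersion. Smoothness of $f\circ U$ is immediate from smoothness of $f$ and $U$.

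Next I would handle morphisms. A morphism $c: (\phi_1, U_1)\to(\phi_2,U_2)$ in $\cat{Const}(u)$ is, by Definition \ref{def:constant_rank}, a smooth component map $c: C_1\to C_2$ with $\phi_2 = c\circ\phi_1$ and $U_1 = U_2\circ c$. I claim the same map $c$ serves as a morphism $(\phi_1, f\circ U_1)\to(\phi_2, f\circ U_2)$: the condition $\phi_2 = c\circ\phi_1$ is unchanged, and post-composing the identity $U_1 = U_2\circ c$ with $f$ yields $f\circ U_1 = (f\circ U_2)\circ c$, which is exactly the second required equation. So the assignment on morphisms is simply $c\mapsto c$, and it is checked by drawing the commuting diagram of Definition \ref{def:constant_rank} and appending the arrow $f$ from $N$ to $N'$.

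Then I would verify the two functor axioms. The identity morphism on $(\phi,U)$ has component map $\id_C$, which is sent to $\id_C$, the identity on $(\phi, f\circ U)$. Composition of morphisms in both $\cat{Const}(u)$ and $\cat{Const}(f\circ u)$ is composition of the underlying component maps (by the Lemma preceding this section), so the assignment $c\mapsto c$ respects composition trivially. Covariance is clear because no arrows are reversed. The argument restricts without change to the subcategory $\cat{QuasiP}$, giving the second claimed functor.

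I do not expect a genuine obstacle here; the content is entirely formal. The one point that needs a moment's care is recognizing that the constant-rank (resp. submersion) hypothesis is a condition on the phase map alone, which the functor leaves fixed — this is precisely what makes post-composition by an arbitrary smooth $f$, with no rank hypothesis on $f$ whatsoever, a legitimate operation on both categories. For completeness I would also remark that, in contrast to the isomorphisms of Theorem \ref{thm:crse_diffeo}, these functors need not be equivalences, since $f$ can identify distinct signature maps.
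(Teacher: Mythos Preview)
Your proposal is correct and follows essentially the same approach as the paper's own proof: post-compose the signature map with $f$ while leaving the phase map untouched, verify the resulting diagram commutes for both objects and morphisms, and note that composition of component maps is preserved. Your write-up is in fact slightly more explicit than the paper's in checking the identity and composition axioms, and your closing observation that the constant-rank hypothesis lives entirely on $\phi$ (hence no rank condition on $f$ is needed) is a nice clarifying remark the paper leaves implicit.
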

\begin{proof}
  Suppose $u = U \circ\phi$ is a quasiperiodic or a constant rank factorization in which $C$ is the phase space.  Then the following diagram commutes
  \begin{equation*}
    \xymatrix{
    M \ar[rr]^u \ar[dr]_\phi && N \ar[r]^f & N' \\
    &C \ar[ur]_U \ar@/_1pc/[urr]_{f \circ U}
    }
  \end{equation*}
  which means that $(f \circ u) = (f \circ U) \circ \phi$ is a factorization of the same type as for $u$.  This transforms the objects of the categories $\cat{QuasiP}(u)$ or $\cat{Const}(u)$.

  Morphisms are transformed in much the same way as objects, by composing $f$ on the left, as can be seen from the commutative diagram
  \begin{equation*}
    \xymatrix{
    M \ar[rr]^u \ar[dr]_\phi \ar@/_1pc/[ddr]_{\phi'} && N \ar[r]^f & N' \\
    &C \ar[ur]_U \ar@/_1pc/[urr]_{f \circ U} \ar[d]_c\\
    &C' \ar@/_1pc/[uur]_(0.4){U'}|(0.625)\hole \ar@/_2pc/[uurr]_{f \circ U'}\\
    }
  \end{equation*}

  Composition of morphisms is completely unchanged by this recipe, as it consists of composition of the maps on the phase space.
\end{proof}

\begin{proposition}
  \label{prop:right_functor}
  Suppose that $u: M \to N$ is a smooth map.  If $g: M' \to M$ is a surjective submersion, then $g$ induces a covariant functor $\cat{QuasiP}(u) \to \cat{QuasiP}(u \circ g)$.  Similarly, if $g: M' \to M$ is a constant rank map, then $g$ induces a covariant functor $\cat{Const}(u) \to \cat{Const}(u \circ g)$.
\end{proposition}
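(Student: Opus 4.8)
The plan is to run the argument of Proposition~\ref{prop:left_functor} on the opposite side: there the signature map $U$ was post-composed with $f$, whereas here I would pre-compose the phase map $\phi$ with $g$. On objects, send a quasiperiodic (resp.\ constant rank) factorization $u = U \circ \phi$ with phase space $C$ to the factorization $u \circ g = U \circ (\phi \circ g)$ of $u \circ g$, keeping the \emph{same} phase space $C$ and the \emph{same} signature map $U$ and merely replacing the phase map $\phi$ by $\phi \circ g$; this is genuinely a factorization of $u\circ g$ since $U \circ (\phi \circ g) = (U \circ \phi)\circ g = u \circ g$. On morphisms, a morphism $(\phi_1,U_1)\to(\phi_2,U_2)$ with component map $c \colon C_1 \to C_2$ is sent to the morphism $(\phi_1\circ g, U_1)\to(\phi_2\circ g, U_2)$ carrying the \emph{same} component map $c$: the relation $U_1 = U_2\circ c$ is untouched, and $\phi_2 = c\circ\phi_1$ forces $\phi_2\circ g = c\circ(\phi_1\circ g)$ by associativity, so the defining diagram of Definition~\ref{def:constant_rank} still commutes. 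This is visibly a covariant assignment, since a morphism with source $C_1$ and target $C_2$ is sent to one with the same source and target.

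Because the rule on morphisms is literally the identity on component maps, identity morphisms are preserved and composition is respected --- the image of a composite $c_2\circ c_1$ is again $c_2\circ c_1$ --- so the functoriality axioms hold for free. The only content that needs proof is therefore that the object assignment lands in the right category: that $\phi\circ g$ is a submersion in the $\cat{QuasiP}$ case, and of constant rank in the $\cat{Const}$ case.

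For $\cat{QuasiP}$ this is routine: at each $x\in M'$ the chain rule gives $d_x(\phi\circ g) = d_{g(x)}\phi\circ d_x g$, a composite of surjective linear maps and hence surjective, so $\phi\circ g$ is a submersion; and since $g$ is surjective the image of $\phi\circ g$ equals $\phi(M)$, so any surjectivity demanded of a phase map (cf.\ \cite{Robinson_SampTA_2015}) survives. The step I expect to be the real obstacle is the $\cat{Const}$ case, namely checking that $\phi\circ g$ has \emph{constant} rank; this is where the hypothesis on $g$ has to be used carefully, since a composite of two constant-rank maps need not be of constant rank in general. The clean situation is when $g$ is a submersion, for then $d_x g$ is onto and $\rank d_x(\phi\circ g) = \rank d_{g(x)}\phi$, which is independent of $x$ because $\phi$ is of constant rank; I would build the $\cat{Const}$ functor by establishing this rank identity directly from the constant-rank hypotheses, and that verification --- rather than any of the diagram chasing --- is the part of the proof that requires care.
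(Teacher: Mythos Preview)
Your approach is exactly the paper's: pre-compose the phase map with $g$, leave the phase space, the signature map, and the component maps untouched, and observe that the morphism axioms then come for free. For the $\cat{QuasiP}$ case your argument is complete and matches the paper's one-line justification that a composite of surjective submersions is a surjective submersion.

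Your caution about the $\cat{Const}$ case is well placed, and in fact you are being more careful than the paper, which simply asserts that ``under the assumption that $g$ is a constant rank function, then so is $(\phi\circ g)$'' and moves on. Your worry is not merely rhetorical: the claim is false in general. Take $M'=\mathbb{R}$, $M=\mathbb{R}^2$, $C=N=\mathbb{R}$, with $g(t)=(t,0)$ (an immersion, hence constant rank $1$) and $\phi(x,y)=x^2+y$ (constant rank $1$, since $d\phi=(2x,1)$ never vanishes); then $\phi\circ g(t)=t^2$ has rank $0$ at $t=0$ and rank $1$ elsewhere. So the rank identity you hope to ``establish directly from the constant-rank hypotheses'' cannot be established, because it does not hold.

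Your instinct that the clean case is $g$ a submersion is exactly right: in that case $d_xg$ is surjective and $\rank d_x(\phi\circ g)=\rank d_{g(x)}\phi$ is genuinely constant. As stated, with $g$ merely of constant rank, neither your argument nor the paper's closes the gap; the $\cat{Const}$ part of the statement appears to need the stronger hypothesis.
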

\begin{proof}
  Suppose $u = U \circ\phi$ is a quasiperiodic factorization in which $C$ is the phase space.  Because $g$ is assumed to be a surjective submersion, $(\phi \circ g)$ is also a surjective submersion.  Therefore, the following diagram commutes
  \begin{equation*}
    \xymatrix{
    M' \ar[r]^g \ar@/_1pc/[drr]_{\phi \circ g} &M \ar[rr]^u \ar[dr]_\phi && N \\
    &&C \ar[ur]_U
    }
  \end{equation*}
  which means that $(u \circ g) = U \circ (\phi \circ g)$ is a quasiperiodic factorization.   Similarly, if $u= U \circ \phi$ is a constant rank factorization, then under the assumption that $g$ is a constant rank function, then so is $(\phi \circ g)$.  This transforms the objects of the categories $\cat{QuasiP}(u)$ or $\cat{Const}(u)$.  Morphisms and their composition follow along \emph{mutatis mutandis} as in the proof of Proposition \ref{prop:left_functor}.
\end{proof}

\subsection{Functions with circular domains}
\label{sec:circular}

At present, the most complete characterization of $\cat{QuasiP}(u)$ that is known is Theorem \ref{thm:circle_quasip}, which is proven in this section.  Theorem \ref{thm:circle_quasip} applies when the domain $M$ is the circle $S^1$.  This characterization relies on the observation that $u$ becomes a loop in $N$, and therefore has a representative $[u]$ in the fundamental group $\pi_1(N)$.

\begin{example}
  Let us consider the identity map $u=\id_{S^1}: S^1 \to S^1$ and its category $\cat{Const}(\id_{S^1})$ of constant rank factorizations.  First of all, its factorization $u = u \circ u$ is the final object of $\cat{Const}(u)$.  While this seems a little trivial, suppose that we had some other constant rank factorization $u = U \circ \phi$ with $S^1$ as its phase space.  If $\phi$ is not injective, this means that $U \circ \phi$ is not injective, which contradicts its factorization since $u$ is injective.  On the other hand, suppose that $\phi: S^1 \to S^1$ was not surjective.  Let $x \in S^1$ be outside the image of $\phi$.  Since $\phi$ is continuous, there is an open neighborhood of $x$ outside the image of $\phi$, which amounts to saying that the image of $\phi$ is homeomorphic to a closed interval.  Such a map $\phi$ must therefore have critical points, and therefore cannot have constant rank.  Therefore, $\phi$ must be bijective to participate in a constant rank factorization of $u = U \circ \phi$.  Since $\phi$ is of constant rank, it is therefore a local diffeomorphism; bijectivity assures that it is a diffeomorphism.  

  Also, $u$ cannot have $\mathbb{R}$ as the phase space of any of its quasiperiodic factorizations.  Since $S^1$ is compact, this would imply that the phase map $S^1 \to \mathbb{R}$ has at least one local maximum, which is a critical point.  This violates the constant rank assumption.

  On the other hand, with constant rank factorizations, there are many other possible phase spaces.
  Perhaps the easiest such is to take the cylinder $S^1 \times \mathbb{R}$ as the phase space, with $\phi(x) = (x, 0)$ and $U(x,y) = x$.  It is clear that $\phi$ is of constant rank (it is $1$), and since the second factor in the cylinder is ignored, this assures that $u= U \circ \phi$.  This factorization is not isomorphic to the trivial one in $\cat{Const}(u)$, because that would imply the existence of a smooth, bijective map $c: S^1 \to (S^1 \times \mathbb{R})$ such that the diagram below commutes
  \begin{equation*}
    \xymatrix{
      S^1 \ar[rr]^{\id} \ar[dr]^{\id} \ar@/_1pc/[ddr]_{\phi} && S^1 \\
      &S^1 \ar@/_/[d]_-{c} \ar[ur]^{\id}&\\
      &(S^1 \times \mathbb{R}) \ar@/_/[u]_{c^{-1}} \ar@/_1pc/[uur]_{U}&\\
      }
  \end{equation*}
Such is evidently impossible on several grounds, the most damning of which is the classical invariance of dimension because $S^1$ and $S^1 \times \mathbb{R}$ are of differing dimension.
\end{example}

The above example illuminates a general principle about quasiperiodic factorizations whose domain is a circle.

\begin{lemma}
  \label{lem:loop_phase_space}
  Every quasiperiodic factorization of a smooth map $u: S^1 \to N$ has either $S^1$ or the single-point space as its phase space.
\end{lemma}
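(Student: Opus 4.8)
The plan is to analyze a quasiperiodic factorization $u = U \circ \phi$ with $\phi : S^1 \to C$ a smooth submersion, and show the phase space $C$ is forced to be either a point or $S^1$. First I would observe that $\phi$, being a submersion, is an open map, so $\phi(S^1)$ is an open subset of $C$; since $S^1$ is compact and $C$ is Hausdorff, $\phi(S^1)$ is also closed, hence (if $C$ is connected, which I would note is the relevant case, or else one restricts attention to the component containing the image) $\phi$ is surjective. A submersion from an $n$-manifold onto an $m$-manifold requires $n \ge m$, so with $n = \dim S^1 = 1$ we get $\dim C \in \{0, 1\}$. If $\dim C = 0$ then $C$ is a single point (it is connected) and we are in the trivial case.

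The substantive case is $\dim C = 1$, where $C$ is a connected $1$-manifold, hence diffeomorphic to either $S^1$ or $\mathbb{R}$. I would rule out $C \cong \mathbb{R}$ exactly as in the Example preceding the Lemma: $S^1$ is compact, so a smooth map $\phi : S^1 \to \mathbb{R}$ attains a maximum at some point $x_0$, where $d_{x_0}\phi = 0$, contradicting the submersion hypothesis (the Jacobian must be surjective onto $T_{\phi(x_0)}\mathbb{R}$, i.e. nonzero). Therefore $C \cong S^1$, which completes the argument.

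The only mild subtlety — and the step I would be most careful about — is the surjectivity/connectedness bookkeeping: Definition \ref{def:constant_rank} does not explicitly require the phase space to be connected, so strictly one should either add the standing hypothesis that phase spaces are connected, or phrase the conclusion as ``$\phi(S^1)$ is an open-and-closed submanifold diffeomorphic to $S^1$ or a point,'' and note that only this image matters for the factorization $U \circ \phi$ (the rest of $C$ carries no information about $u$). I expect the cleanest writeup invokes connectedness as the ambient convention, reducing everything to the two clean cases above; the openness of submersions plus compactness of $S^1$ is the one genuinely load-bearing topological fact, and the critical-point argument handles the $\mathbb{R}$ case.
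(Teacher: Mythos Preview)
Your argument is correct and follows essentially the same route as the paper's proof: bound $\dim C \le 1$ via the submersion condition, classify the connected $1$-dimensional possibilities, and eliminate the non-circle option by the maximum/critical-point argument. Two small differences are worth noting. First, the paper takes $\phi$ to be a \emph{surjective} submersion as part of the definition of quasiperiodic factorization (this is the convention inherited from \cite{Robinson_qplpf,Robinson_SampTA_2015}), so your open-map-plus-compactness derivation of surjectivity is unnecessary here, and your worry about connectedness of $C$ dissolves: $C = \phi(S^1)$ is automatically connected and compact. Second, because the paper already has compactness of $C$, it classifies compact connected $1$-manifolds (allowing boundary) as the closed interval or $S^1$, and rules out the interval; you instead classify connected $1$-manifolds without boundary as $\mathbb{R}$ or $S^1$ and rule out $\mathbb{R}$. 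The critical-point argument is identical in both cases, so this is a cosmetic difference only.
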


This does not hold for constant rank factorizations since the phase map need not be surjective in that case.

\begin{proof}
  If $u=U\circ \phi$ is a quasiperiodic factorization, this means that $\phi$ is a surjective submersion.  Therefore, the only options for the phase space are $0$ or $1$ dimensional.  Additionally, because $\phi$ is continuous and $S^1$ is connected and compact, the phase space must also be connected and compact.  One option is evidently the single-point space.  Setting this aside, there are only two compact $1$-dimensional manifolds (with or without boundary), namely a closed interval or $S^1$.  Any smooth map $\phi$ from $S^1$ to the closed interval must have at least one critical point, since the interval can be totally ordered and thus a maximum value is attained via compactness.  The presence of critical points violates the requirement that $\phi$ be a submersion, which leaves $S^1$ as the only possible $1$-dimensional phase space.
\end{proof}

Recall that since $H_1(S^1) \cong \mathbb{Z}$, every continuous map $u : S^1 \to S^1$ induces a group homomorphism $u_* : H_1(S^1) \to H_1(S^1)$ of the form
\begin{equation*}
  u_*(n) = k n
\end{equation*}
for some integer $k$.  We call $k$ the \emph{degree $deg(u)$} of $u$.

\begin{lemma}(Standard; see \cite[Sec. 2.2]{Hatcher_2002}, for instance)
  \label{lem:degree_factoring}
  If $u, u_1, u_2 : S^1 \to S^1$ are smooth maps such that $u = u_2 \circ u_1$, then $deg(u) = deg(u_2) deg(u_1)$.
\end{lemma}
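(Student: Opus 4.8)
The plan is to deduce this directly from the functoriality of singular homology, which reduces the multiplicativity of degree to the corresponding statement about composition of group homomorphisms on $H_1(S^1) \cong \mathbb{Z}$.

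First I would recall that the degree of a smooth (indeed continuous) map $v : S^1 \to S^1$ is defined by the identity $v_*(n) = \deg(v)\, n$ for the induced homomorphism $v_* : H_1(S^1) \to H_1(S^1)$, using the chosen generator of $H_1(S^1) \cong \mathbb{Z}$. Next, since homology is a covariant functor, the equality $u = u_2 \circ u_1$ yields $u_* = (u_2)_* \circ (u_1)_*$ as homomorphisms $H_1(S^1) \to H_1(S^1)$. Evaluating on the generator $1 \in \mathbb{Z}$ gives
\begin{equation*}
  \deg(u) = u_*(1) = (u_2)_*\bigl((u_1)_*(1)\bigr) = (u_2)_*\bigl(\deg(u_1)\bigr) = \deg(u_2)\,\deg(u_1),
\end{equation*}
where the last step uses that $(u_2)_*$ is additive. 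This is the entire argument.

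There is no real obstacle here: the only ingredient beyond the definition of degree is functoriality of $H_1$, which is standard and may simply be cited (as the lemma statement already does, referencing \cite[Sec. 2.2]{Hatcher_2002}). If one preferred a self-contained argument avoiding homology, one could instead lift each map to the universal cover $\mathbb{R} \to S^1$ and observe that degrees add as translation numbers of the lifts compose, but this is strictly more work than the functorial proof and buys nothing, so I would not pursue it.
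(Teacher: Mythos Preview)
Your proof is correct and is essentially identical to the paper's own argument: both use functoriality of $H_1$ to write $u_* = (u_2)_* \circ (u_1)_*$ and evaluate on a generator of $H_1(S^1) \cong \mathbb{Z}$ to obtain the multiplicativity of degree.
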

\begin{proof}
 Consider the map $u_*: H_1(S^1) \to H_1(S^1)$ induced by $u$ on $H_1(S^1)$.  We have that $u_*([z]) = deg(u) [z]$ where $[z]$ is the generator of $H_1(S^1) \cong \mathbb{Z}$.  But since homology is functorial, we have that $(u_2 \circ u_1)_*([z]) = (u_2)_* (u_1)_* ([z]) = deg(u_2) deg(u_1) [z]$.
\end{proof}

Another way to see that Lemma \ref{lem:degree_factoring} is true is to recall that the prototypical map of degree $n$ is $z^n : \mathbb{C} \to \mathbb{C}$ in the complex plane, and composition of this with $z^m$ yields $(z^n)^m = z^{mn}$.

Taking Lemma \ref{lem:loop_phase_space} to its next logical step, if $U \circ \phi$ is a quasiperiodic factorization of $u: S^1 \to N$, then $\phi: S^1\to S^1$ has a well defined degree $deg(\phi)$.
A useful consequence of Lemma \ref{lem:degree_factoring} is that degrees characterize isomorphism classes of $\cat{QuasiP}(u)$ up to sign changes in some cases.

\begin{lemma}
  \label{lem:loop_phase_function}
  Two quasiperiodic factorizations $U_1 \circ \phi_1$ and $U_2 \circ \phi_2$ of a smooth map $u: S^1 \to N$ are isomorphic in $\cat{QuasiP}(u)$ if and only if $|deg(\phi_1)| = |deg(\phi_2)|$.
\end{lemma}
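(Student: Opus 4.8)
The plan is to treat the two implications separately, exploiting the fact that a quasiperiodic factorization $u = U\circ\phi$ with $S^1$ as phase space is nothing but a finite-sheeted covering map $\phi\colon S^1\to S^1$ through which $u$ descends. I assume throughout that $u$ is non-constant, so that by Lemma~\ref{lem:loop_phase_space} every phase space occurring below is $S^1$ and the degrees in the statement are defined. In that situation each phase map $\phi$ is a submersion from the compact manifold $S^1$ onto the connected manifold $S^1$, hence a proper local diffeomorphism, hence a covering map of $S^1$; since $\pi_1(S^1)$ is abelian this covering is regular, so $\phi$ carries a cyclic group $\mathrm{Deck}(\phi)$ of deck transformations of $S^1$ of order $|deg(\phi)|$, and the fibers of $\phi$ are precisely the $\mathrm{Deck}(\phi)$-orbits.

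For the forward implication, I would unwind the definition of an isomorphism in $\cat{QuasiP}(u)$. By Definition~\ref{def:constant_rank} it consists of a smooth component map $c\colon S^1\to S^1$ with $\phi_2 = c\circ\phi_1$ and $U_1 = U_2\circ c$, together with an inverse morphism whose component map $c'$ satisfies $c'\circ c = \id$ and $c\circ c' = \id$ (composition of morphisms being composition of component maps); hence $c$ is a diffeomorphism and $|deg(c)| = 1$. Applying Lemma~\ref{lem:degree_factoring} to $\phi_2 = c\circ\phi_1$ then yields $|deg(\phi_2)| = |deg(c)|\,|deg(\phi_1)| = |deg(\phi_1)|$. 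This direction is routine.

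For the reverse implication, assume $|deg(\phi_1)| = |deg(\phi_2)| = k$. I claim it suffices to show $\mathrm{Deck}(\phi_1) = \mathrm{Deck}(\phi_2)$ inside $\mathrm{Diff}(S^1)$: granting that, $\phi_1$ and $\phi_2$ induce the same partition of $S^1$ into fibers, so $c(\phi_1(x)) := \phi_2(x)$ is a well-defined bijection $S^1\to S^1$, smooth with smooth inverse because $\phi_1$ and $\phi_2$ are local diffeomorphisms, hence a diffeomorphism with $c\circ\phi_1 = \phi_2$; since $\phi_1$ is surjective and $U_1\circ\phi_1 = u = U_2\circ\phi_2 = U_2\circ c\circ\phi_1$, we get $U_1 = U_2\circ c$, so $c$ with inverse $c^{-1}$ is the desired isomorphism. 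To establish the claim I would pass to a final object $(\psi,V)$ of $\cat{QuasiP}(u)$, which exists by \cite[Thm.~5]{Robinson_SampTA_2015}; for non-constant $u$ its phase space is again $S^1$ (else $u$ would be constant), so $\psi\colon S^1\to S^1$ is a regular covering with cyclic deck group $\mathrm{Deck}(\psi)$ of some order $D$. For $i = 1,2$, finality provides a morphism $(\phi_i,U_i)\to(\psi,V)$, i.e., a smooth map $c_i$ with $\psi = c_i\circ\phi_i$; then every $g\in\mathrm{Deck}(\phi_i)$ satisfies $\psi\circ g = c_i\circ\phi_i\circ g = c_i\circ\phi_i = \psi$, so $\mathrm{Deck}(\phi_i)\subseteq\mathrm{Deck}(\psi)$. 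A cyclic group has exactly one subgroup of each order, so $\mathrm{Deck}(\phi_1)$ and $\mathrm{Deck}(\phi_2)$ must both equal the order-$k$ subgroup of $\mathrm{Deck}(\psi)$, which proves the claim.

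I expect the deck-group identification $\mathrm{Deck}(\phi_1) = \mathrm{Deck}(\phi_2)$ to be the only real obstacle. A priori these are merely two finite cyclic subgroups of $\mathrm{Diff}^+(S^1)$ of the same order, which in general need not be equal, so the identification genuinely uses the compatibility condition $U_1\circ\phi_1 = U_2\circ\phi_2$ (routed through the universal factorization) and not just the equality of orders. One could instead try to prove directly that a non-constant smooth function on $S^1$ is invariant under at most one cyclic group of circle diffeomorphisms of each order, but that approach runs into rotation-number and Denjoy-type considerations and some degenerate cases, so routing the argument through the final object of $\cat{QuasiP}(u)$ is the cleaner path.
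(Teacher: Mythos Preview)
Your proof is correct. The forward implication matches the paper's argument essentially verbatim. For the converse, however, you take a genuinely different route: the paper simply invokes the covering-space classification theorem \cite[Thm.~1.38]{Hatcher_2002} to produce the required homeomorphism $f$, whereas you construct $f$ explicitly by first proving $\mathrm{Deck}(\phi_1)=\mathrm{Deck}(\phi_2)$ as subgroups of $\mathrm{Diff}(S^1)$, routing that equality through the final object of $\cat{QuasiP}(u)$.

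Your route is longer but arguably more honest. The covering-space classification theorem, taken at face value, classifies covers of a \emph{fixed} base; here $\phi_1$ and $\phi_2$ have the same total space but different base circles, and what is needed is a map between the bases that intertwines them while fixing the total space pointwise. That does not fall out of Hatcher's Theorem~1.38 without additional argument, and it is precisely the step your deck-group identification supplies. Your use of the final object is also pleasant because it stays entirely inside the paper's own machinery and makes transparent the point you flag at the end: two order-$k$ cyclic subgroups of $\mathrm{Diff}(S^1)$ need not coincide in general, so the hypothesis $U_1\circ\phi_1=U_2\circ\phi_2$ really is doing work, via the embedding of both deck groups into the single cyclic group $\mathrm{Deck}(\psi)$. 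The paper's citation buys brevity; your argument buys a self-contained proof that actually closes the gap.
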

\begin{proof}
Suppose that we have two such factorizations $u=U_1 \circ \phi_1 = U_2 \circ \phi_2$ which correspond to isomorphic objects of $\cat{QuasiP}(u)$.  This means that we have a continuous map $f$ such that the diagram commutes
\begin{equation}
  \label{eq:circle_quasip}
  \xymatrix{
    &S^1\ar@/^/[dd]^f \ar[dr]^{U_1}&\\
    S^1 \ar[ur]^{\phi_1} \ar[dr]_{\phi_2} && N \\
    &S^1\ar@/^/[uu]^{f^{-1}} \ar[ur]_{U_2}&\\
    }
\end{equation}
In particular, this implies that $\phi_2 = f \circ \phi_1$ and $\phi_1 = f^{-1} \circ \phi_2$.  In terms of degrees, this means that $deg(\phi_2) = deg(f) deg( \phi_1)$ and $deg(\phi_1) = deg(f^{-1}) deg( \phi_2)$.  Since all of these degrees must be nonzero integers, we have to conclude that $deg(f) = deg(f^{-1}) = \pm 1$, namely that $|deg(\phi_1)| = |deg(\phi_2)|$.  

Conversely, suppose that we have two quasiperiodic factorizations $u=U_1 \circ \phi_1 = U_2 \circ \phi_2$ with $|deg(\phi_1)| = |deg(\phi_2)|$.  Does this imply the existence of a homeomorphism $f$ such that the diagram in \eqref{eq:circle_quasip} commutes?  Yes, the covering space classification theorem \cite[Thm 1.38]{Hatcher_2002} yields an explicit construction of such a map $f$.  Thus, the two quasiperiodic factorizations are isomorphic in $\cat{QuasiP}(u)$.
\end{proof}

The above Lemmas lead to the following characterization of isomorphism classes of objects in $\cat{QuasiP}(u)$, at least when the domain of $u$ is a circle.

\begin{theorem}
  \label{thm:circle_quasip}
  If $u: S^1 \to N$ is a smooth map, then the isomorphism classes of $\cat{QuasiP}(u)$ are in bijective correspondence with cyclic subgroups of $\pi_1(N)$ that contain $[u]$.
\end{theorem}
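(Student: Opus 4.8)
The plan is to exhibit an explicit bijection from the isomorphism classes of $\cat{QuasiP}(u)$ to the set of cyclic subgroups of $\pi_1(N)$ containing $[u]$, and then to verify in turn that it is well defined on isomorphism classes, injective, and surjective. The degenerate case is dealt with first: by Lemma \ref{lem:loop_phase_space} every quasiperiodic factorization of $u\colon S^1\to N$ has phase space either the one-point space or $S^1$, and the former occurs exactly when $u$ is constant, in which case $[u]$ is trivial and the only cyclic subgroup available to match it is the trivial one, so that situation can be handled directly. So assume a factorization $(\phi,U)$ with phase space $S^1$. Here $\phi\colon S^1\to S^1$ is a surjective submersion, hence a proper local diffeomorphism of a compact connected manifold, hence a finite covering of some nonzero degree $d=\deg\phi$, and $U\colon S^1\to N$ is smooth.

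To $(\phi,U)$ I assign $H_{(\phi,U)}:=\operatorname{image}\bigl(U_*\colon\pi_1(S^1)\to\pi_1(N)\bigr)$, which is cyclic because it is a homomorphic image of $\mathbb{Z}$. Functoriality of $\pi_1$ applied to $u=U\circ\phi$ gives $u_*=U_*\circ\phi_*$, and $\phi_*$ is multiplication by $d$ on $\pi_1(S^1)\cong\mathbb{Z}$ (the $\pi_1$ analogue of Lemma \ref{lem:degree_factoring}); hence, writing $\gamma$ for a generator, $[u]=u_*(\gamma)=(U_*\gamma)^{d}\in H_{(\phi,U)}$, so the assignment does land in the claimed target set. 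Well-definedness on isomorphism classes is immediate: an isomorphism of factorizations has as component a diffeomorphism $f$ of $S^1$ with $U_1=U_2\circ f$, and $f_*=\pm\,\id$ on $\pi_1(S^1)$, so $\operatorname{image}(U_{1*})=\operatorname{image}(U_{2*})$. For injectivity I would invoke Lemma \ref{lem:loop_phase_function}: if two factorizations are assigned the same subgroup $H$, then each phase map realizes $[u]$ as a power of a generator of $H$, and comparing those two representations of $[u]$ forces $|\deg\phi_1|=|\deg\phi_2|$, whence the factorizations are isomorphic (this step is clean when $H$ is infinite cyclic and needs a short extra argument, via the final object of \cite{Robinson_SampTA_2015}, when $H$ is finite).

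The heart of the matter — and the step I expect to be the main obstacle — is surjectivity: given a cyclic subgroup $H\le\pi_1(N)$ with $[u]\in H$, one must actually build a quasiperiodic factorization $(\phi,U)$ with $H_{(\phi,U)}=H$. Choosing a generator $g$ of $H$ and an integer $d$ with $g^{d}=[u]$, the goal is a degree-$d$ submersion $\phi\colon S^1\to S^1$ and a smooth $U\colon S^1\to N$ with $U\circ\phi=u$ and $U_*\gamma=g$. Equivalently, one must show that $u$ descends along the $d$-fold cover $\phi$ — that is, that $u$ is invariant under the corresponding $\mathbb{Z}/d$ deck action, after at most replacing $u$ by a diffeomorphic reparametrization (harmless by Theorem \ref{thm:crse_diffeo}) — and that the descended map can be taken in the prescribed homotopy class $g$. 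This invariance (rather than mere existence of some factoring map, which covering-space theory supplies whenever the relevant $\pi_1$-images are nested, as in Lemma \ref{lem:loop_phase_function}) is the genuinely delicate point, and it is where I would concentrate the work; it is tractable in the CSAS setting ($N$ a torus, $u$ coming from the periodic signatures of Proposition \ref{prop:signal_knot}) precisely because the periodicity of $u$ is built in.
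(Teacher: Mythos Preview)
Your outline mirrors the paper's proof almost exactly: the same assignment $(\phi,U)\mapsto\langle[U]\rangle$, the same appeal to Lemma~\ref{lem:loop_phase_function} for injectivity, and the same intention to invoke covering-space theory for surjectivity. The only real difference is that you explicitly flag surjectivity as ``the genuinely delicate point,'' whereas the paper dispatches it in a single sentence by citing the covering-space classification theorem.

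Your instinct here is correct, and in fact neither your proposal nor the paper's proof closes this gap---because it cannot be closed. From $[u]=[U]^{k}$ in $\pi_1(N)$ one obtains only that $u$ is \emph{homotopic} to $U\circ\phi$ for some degree-$k$ map $\phi$; covering-space theory does not manufacture an on-the-nose equality $u=U\circ\phi$ with $\phi$ a submersion. Indeed the theorem is false as stated. Take $N=S^1$ and $u(\theta)=2\theta+3\sin\theta$, a smooth map of degree~$2$. The cyclic subgroups of $\pi_1(S^1)\cong\mathbb{Z}$ containing $[u]=2$ are $\mathbb{Z}$ and $2\mathbb{Z}$, so the target set has two elements. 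But $u$ admits no quasiperiodic factorization with $|\deg\phi|=2$: such a $\phi$ would be a double cover whose free deck involution $\tau$ satisfies $u\circ\tau=u$; since $\tau$ is a fixed-point-free diffeomorphism it must interchange the two critical points $\theta_{\pm}$ of $u$ (where $\cos\theta_{\pm}=-2/3$), forcing $u(\theta_{+})=u(\theta_{-})$; yet a direct computation gives $u(\theta_{+})-u(\theta_{-})\equiv 4\arccos(-2/3)+2\sqrt{5}\pmod{2\pi}$, which is nonzero. Hence $\cat{QuasiP}(u)$ has only the trivial isomorphism class. The surjectivity you were worried about therefore requires extra hypotheses on $u$---for instance that $u$ itself be a submersion, or that it carry the built-in periodicity of the CSAS signatures in Proposition~\ref{prop:signal_knot}---and you should state and use such a hypothesis rather than try to prove the claim in the generality asserted.
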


\begin{proof}
  First of all, if $u$ is constant, then there is nothing to prove.  Let us therefore assume that $u$ is not constant for the remainder of the argument.
  
  Suppose that $u=U \circ \phi$ is a quasiperiodic factorization.  We will use this to generate a cyclic subgroup of $\pi_1(N)$ that contains $[u]$.
  Because of Lemma \ref{lem:loop_phase_space}, $U: S^1 \to N$ therefore corresponds to an element $[U] \in \pi_1(N)$.
  Let $T: Ob(\cat{QuasiP}(u)) \to 2^{\pi_1(N)}$ be given by
  \begin{equation*}
    T(U \circ \phi) := \{[U]^k : k \in \mathbb{Z}\},
  \end{equation*}
  which is evidently a cyclic subgroup of $\pi_1(N)$.
  Moreover, Lemma \ref{lem:loop_phase_function} indicates that $[u]$ is homotopic to $[U]^{deg(\phi)}$, and therefore $T(U \circ \phi)$ contains $[u]$ as required.

  Suppose that $H \subseteq \pi_1(N)$ is a cyclic subgroup containing $[u]$.
  That means that $[u]$ is homotopic to $[U]^k$ for some $k$ and some $U: S^1 \to N$.  We need to find a function $\phi: S^1 \to S^1$ such that $u = U \circ \phi$.  By the covering space classification theorem \cite[Thm 1.38]{Hatcher_2002}, one can construct such a $\phi$ that additionally satisfies $k=deg(\phi)$.
  
  If two quasiperiodic factorizations of $u=U_1\circ \phi_1=U_2\circ\phi_2$ are $\cat{QuasiP}(u)$-isomorphic, then they generate the same cyclic subgroup of $\pi_1(N)$.
  To see this, note that $|deg(\phi_1)| = |deg(\phi_2)|$ by Lemma \ref{lem:loop_phase_function}, and therefore $[u]$ is homotopic to both $[U_1]^{deg(\phi_1)}$ and $[U_2]^{deg(\phi_2)}$.  Since both are cyclic subgroups, this means that $[U_1]^k$ and $[U_2]^{\pm k}$ are homotopic, so that $T(U_1 \circ \phi_1)$ and $T(U_2 \circ \phi_2)$ are the same subgroup.

  Conversely, if two quasiperiodic factorizations of $u=U_1\circ \phi_1=U_2\circ\phi_2$ generate the same cyclic subgroup of $\pi_1(N)$ via $T$, this means that $deg(\phi_1) = \pm deg(\phi_2)$ and that $[U_1]$ is homotopic to $[U_2]$.  As a result of Lemma \ref{lem:loop_phase_function}, these two factorizations are isomorphic in $\cat{QuasiP}(u)$.
\end{proof}

\begin{corollary}
  \label{cor:circle_quasip}
  Suppose that $u : S^1 \to S^1$ is a smooth map of degree $n$.  Then the isomorphism classes of $\cat{QuasiP}(u)$ are in bijective correspondence with the set of factors of $|n|$.
\end{corollary}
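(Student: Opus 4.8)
The plan is to specialize Theorem \ref{thm:circle_quasip} to the case $N = S^1$ and then carry out the (elementary) group theory of subgroups of $\mathbb{Z}$. First I would invoke the standard identification $\pi_1(S^1) \cong \mathbb{Z}$, under which the class $[u]$ of a degree-$n$ loop corresponds to the integer $n$; this is exactly the statement that the degree of $u$ is its winding number, consistent with the functoriality used in Lemma \ref{lem:degree_factoring}. Theorem \ref{thm:circle_quasip} then says that the isomorphism classes of $\cat{QuasiP}(u)$ are in bijection with the cyclic subgroups of $\mathbb{Z}$ that contain $n$, so the entire problem reduces to enumerating those subgroups.

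Next I would enumerate them. Every subgroup of $\mathbb{Z}$ is cyclic --- indeed of the form $d\mathbb{Z}$ for a unique integer $d \ge 0$ --- so the adjective ``cyclic'' in Theorem \ref{thm:circle_quasip} imposes no restriction in this setting. The subgroup $d\mathbb{Z}$ contains $n$ precisely when $d$ divides $n$. Hence, assuming $n \neq 0$, the subgroups containing $n$ are exactly the $d\mathbb{Z}$ with $d$ a positive divisor of $|n|$, and distinct positive divisors yield distinct subgroups. This gives the claimed bijection between isomorphism classes of $\cat{QuasiP}(u)$ and the set of factors of $|n|$. To make the correspondence concrete (and to confirm each class is nonempty), I would note that a factor $d \mid |n|$ is realized by any quasiperiodic factorization $u = U \circ \phi$ with $deg(\phi) = \pm d$ and $[U]$ the element $n/d$ of $\mathbb{Z} \cong \pi_1(S^1)$, whose existence is guaranteed by the covering space classification theorem exactly as in the proof of Theorem \ref{thm:circle_quasip}.

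The only genuinely delicate point is bookkeeping rather than mathematics: one must collapse the sign ambiguity $d\mathbb{Z} = (-d)\mathbb{Z}$ so that each isomorphism class is counted once, which is the exact analogue of the $|deg(\phi_1)| = |deg(\phi_2)|$ appearing in Lemma \ref{lem:loop_phase_function}. I would also dispatch the degenerate case $n = 0$ separately --- if $u$ is nullhomotopic then \emph{every} subgroup of $\mathbb{Z}$ contains $[u]$, giving infinitely many isomorphism classes, which is consistent with the convention that every nonnegative integer is a factor of $0$ --- so that the substantive content of the corollary is for $n \neq 0$. With that caveat noted, the argument is a short and direct deduction from Theorem \ref{thm:circle_quasip}, and I do not expect any real obstacle beyond stating the $n=0$ edge case cleanly.
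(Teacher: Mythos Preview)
Your argument is correct. You specialize Theorem \ref{thm:circle_quasip} to $N=S^1$, identify $\pi_1(S^1)\cong\mathbb{Z}$, and then enumerate the subgroups $d\mathbb{Z}$ containing $n$; this is sound, and your handling of the sign collapse and the $n=0$ edge case is appropriate. The paper, however, does not pass through Theorem \ref{thm:circle_quasip} at all: it argues directly from Lemma \ref{lem:loop_phase_function} (isomorphism classes are labeled by $|deg(\phi)|$) together with Lemma \ref{lem:degree_factoring} ($n = deg(U)\,deg(\phi)$, so $|deg(\phi)|$ must divide $|n|$), which immediately gives the bijection with the divisors of $|n|$. Your route is the ``top-down'' one---natural given where the corollary sits---while the paper's is ``bottom-up'' from the lemmas that underlie the theorem; both arrive at the same divisor count, and neither buys anything the other lacks beyond a line or two of packaging.
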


\begin{proof}
First of all, isomorphism classes of $\cat{QuasiP}(u)$ are characterized by Lemma \ref{lem:loop_phase_function}.  Secondly, by Lemma \ref{lem:degree_factoring}, any quasiperiodic factorization into $u = U \circ \phi$ will yield $n = deg(u) = deg(U) deg(\phi)$.  
\end{proof}

\begin{corollary}
  \label{cor:circle_homotopy_quasip}
  If $u_1, u_2 : S^1 \to N$ are homotopic maps, then their categories $\cat{QuasiP}(u_1)$ and $\cat{QuasiP}(u_2)$ are equivalent.
\end{corollary}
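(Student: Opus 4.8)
The engine of the proof is Theorem~\ref{thm:circle_quasip}, which says that for any smooth $u : S^1 \to N$ the isomorphism classes of $\cat{QuasiP}(u)$ are exactly the cyclic subgroups of $\pi_1(N)$ containing $[u]$. The plan is to transport objects first and morphisms second. For objects: if $u_1 \simeq u_2$ then they represent the same free homotopy class of loop in $N$, so $[u_1]$ and $[u_2]$ are conjugate in $\pi_1(N)$; choosing basepoint data along a homotopy one may even arrange $[u_1] = \gamma [u_2] \gamma^{-1}$ for a suitable $\gamma \in \pi_1(N)$. Conjugation by $\gamma$ is an automorphism $c_\gamma$ of $\pi_1(N)$ that sends cyclic subgroups to cyclic subgroups, preserves inclusions, and sends $[u_1]$ to $[u_2]$; hence it restricts to a bijection between the cyclic subgroups containing $[u_1]$ and those containing $[u_2]$, which by Theorem~\ref{thm:circle_quasip} is a bijection on isomorphism classes of objects of the two categories.

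To promote this to an equivalence, I would use the rigidity of $\cat{QuasiP}(u)$ when the domain is $S^1$: there is at most one morphism between any two objects. Indeed, for quasiperiodic factorizations $(\phi_1,U_1)$ and $(\phi_2,U_2)$ of $u : S^1 \to N$, Lemma~\ref{lem:loop_phase_space} together with compactness of $S^1$ forces each $\phi_i$ to be either the constant map to a point or a surjective covering map of circles; in particular $\phi_1$ is onto, so a component map is completely determined by $\phi_2 = c \circ \phi_1$, and the relation $U_1 = U_2 \circ c$ is then automatic because $U_1 \circ \phi_1 = u = U_2 \circ \phi_2 = U_2 \circ c \circ \phi_1$. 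A category with at most one morphism between objects is equivalent to its skeleton, which is a poset, so it is enough to check that the two skeleton posets are isomorphic. Their underlying sets are the two families of cyclic subgroups matched by $c_\gamma$; and the order relation ``there exists a morphism $X \to Y$'' translates, via the covering space classification theorem \cite[Thm.~1.38]{Hatcher_2002} used exactly as in Lemma~\ref{lem:loop_phase_function}, into a divisibility relation between $|deg(\phi_X)|$ and $|deg(\phi_Y)|$ that is compatible with the classes $[U_X],[U_Y] \in \pi_1(N)$ -- equivalently, into the condition $\langle [U_X] \rangle \subseteq \langle [U_Y] \rangle$. Since $c_\gamma$ carries this condition to the analogous one among subgroups containing $[u_2]$, it is an isomorphism of skeleton posets, and composing the equivalence of each category with its skeleton with this isomorphism gives $\cat{QuasiP}(u_1) \simeq \cat{QuasiP}(u_2)$.

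I expect the crux to be the morphism analysis in the second step -- in particular, verifying that the order on isomorphism classes is literally inclusion of the cyclic subgroups $\langle [U_X] \rangle$, rather than something coarser. The delicate case is when $\pi_1(N)$ has torsion, since then $[u]$ can have several mutually incomparable roots; this is exactly why Theorem~\ref{thm:circle_quasip} indexes objects by subgroups and not by degrees, and one must confirm the same refinement governs morphisms. (It is tempting to instead build an explicit functor by lifting a homotopy along the covering phase maps, but $u_2$ need not be invariant under the deck group of a given $\phi_1$, so factorizations cannot be transported ``on the nose'' -- only up to isomorphism, which is why the skeleton argument is the clean route.) Two minor points: the basepoint bookkeeping in the first step is harmless because the isomorphism classes of $\cat{QuasiP}(u)$ are defined with no reference to a basepoint, so different choices differ by an inner automorphism; and the degenerate case in which $u_1$ is null-homotopic should be handled separately at the start, where $[u_i]$ is trivial and both categories collapse to the poset of those (finite or trivial) cyclic subgroups that actually admit a factorization, once more transported by $c_\gamma$.
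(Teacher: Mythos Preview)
The paper gives no proof for this corollary at all; it is stated immediately after Theorem~\ref{thm:circle_quasip} and Corollary~\ref{cor:circle_quasip} and is evidently meant to be read as an immediate consequence: homotopic loops represent the same (free) class in $\pi_1(N)$, so the set of cyclic subgroups containing $[u_1]$ agrees (up to conjugation) with that for $[u_2]$, and one invokes Theorem~\ref{thm:circle_quasip}.

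Your argument follows this same route but is considerably more careful, and rightly so. You correctly flag that Theorem~\ref{thm:circle_quasip} as stated only classifies \emph{isomorphism classes of objects}, whereas the corollary asserts an equivalence of \emph{categories}; the paper silently elides this gap. Your remedy---observing that $\cat{QuasiP}(u)$ is thin because the surjectivity of $\phi_1$ forces any component map $c$ to be unique, so the category is equivalent to its skeleton poset---is the right way to close it, and is not something the paper spells out anywhere. The one place where you should tread lightly is exactly the point you yourself identify as the crux: the claim that the poset order on the skeleton is literally inclusion of the cyclic subgroups $\langle [U]\rangle$. One direction is clear (a morphism forces $[U_1]=[U_2]^{\deg c}$), but the converse requires $U_1 = U_2 \circ c$ \emph{on the nose} for the particular $c$ determined by $\phi_2 = c\circ\phi_1$, not merely $[U_1]=[U_2]^k$ in $\pi_1(N)$; this is stronger than subgroup inclusion, and in the degenerate case $[u]=0$ (e.g.\ $N$ simply connected) the subgroup picture collapses to a single class while Lemma~\ref{lem:loop_phase_function} still distinguishes factorizations by $|\deg\phi|$. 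That said, this wrinkle is already present in Theorem~\ref{thm:circle_quasip} itself, so within the paper's framework your argument is as sound as the theorem it rests on, and strictly more complete than what the paper offers.
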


\section{Comparing functions via common factors}
\label{sec:comparing}

Being apparently topological in nature, one might imagine that if $u_1, u_2 : M \to N$ are homotopic maps, then their categories $\cat{QuasiP}(u_1)$ and $\cat{QuasiP}(u_2)$ should be equivalent.
This turns out to be false, the results of Section \ref{sec:circular} (and Corollary \ref{cor:circle_homotopy_quasip} in particular) notwithstanding.  This section presents two counterexamples to the equivalence of homotopy and the factorization categories.  Taken together, these two counterexamples establish that $\cat{QuasiP}$ is a topological invariant distinct from homotopy.

Given that the constructions of $\cat{QuasiP}$ and $\cat{Const}$ were motivated by the needs of sonar signatures, this means that the topological information inherent in sonar signatures is not entirely captured by homotopy equivalence classes for spaces of echoes.  On the other hand, homotopy equivalence classes and equivalence classes of $\cat{QuasiP}$ and $\cat{Const}$ categories \emph{are} related, as will be explored in subsequent sections of this article.

\begin{example}
Consider $u_1,u_2: [0,1] \to [0,1]$ in which $u_1 = 0$ and $u_2 = \id$.  These two maps are evidently homotopic since $h:[0,1]\times[0,1] \to [0,1]$ given by
\begin{equation*}
  h(x,t) := t x
\end{equation*}
is a smooth homotopy between them.
Since $u_2$ has rank $1$, this means that any quasiperiodic factorization of it $u_2 = U \circ \phi$ must be a factorization into two rank $1$ functions.  This effectively means that there is precisely one choice of phase space up to diffeomorphism, and consequently $\cat{QuasiP}(u_2)$ contains exactly one isomorphism class.
On the other hand, $u_1$ need not factor into two rank $1$ functions; indeed the rank of $\phi$ may be $0$ or $1$.  Therefore, $\cat{QuasiP}(u_1)$ has two isomorphism classes.
\end{example}

Having equivalent categories of factorizations does not imply that two maps are homotopic, either.

\begin{example}
    Consider the identity map $u: S^1 \to S^1$ and the antipodal map $(-u) : S^1 \to S^1$.  The maps $u$ and $(-u)$ have equivalent categories of quasiperiodic factorizations, since both are their own universal quasiperiodic factorizations (a consequence of \cite[Thm. 8]{Robinson_SampTA_2015}), but they are not homotopic maps.
\end{example}

The net effect of these two examples is that $\cat{QuasiP}(u)$ is a topological invariant of a smooth map $u: M\to N$ that is distinct from the homotopy class of $u$.  There is still a somewhat more subtle relationship between homotopies and constant rank factorization categories.

\subsection{Signature equivalence}

If we interpret a constant rank factorization of $u = U \circ \phi$ as a model of a sonar collection, it is natural to treat $\phi$ as representing the trajectory effects and $U$ as representing the target effects.  To solve a sonar classification problem, all that matters is the signature: the $U$ function.  Two distinct sonar collections $u_1 = U \circ \phi_1$ and $u_2 = U \circ \phi_2$ with identical signatures should therefore be considered equivalent for the purposes of classification.  We capture this idea with a definition of \emph{signature equivalence} for factorizations.  This section explores what properties are entailed when two collections have equivalent signatures.

\begin{definition}
  If $u_1 = U \circ \phi_1$ and $u_2 = U \circ \phi_2$ are quasiperiodic factorizations such that the diagram
\begin{equation}
  \label{eq:quasi-factor-equivalent}
  \xymatrix{
M_1 \ar[d]_{\phi_1} \ar[rd]^{u_1} &         \\
C \ar[r]_{U}&N                       \\
M_2 \ar[u]^{\phi_2} \ar[ru]_{u_2}& \\
    }
\end{equation}
  commutes, we say that the factorizations of $u_1$ and $u_2$ have \emph{equivalent signatures}.  If instead of quasiperiodic factorizations in the diagram \eqref{eq:quasi-factor-equivalent}, we chose to use constant rank factorizations, then we say that the factorizations of $u_1$ and $u_2$ have \emph{constant rank equivalent signatures}.
\end{definition}

While equivalent signatures can also be thought of as a feature of the \emph{functions} $u_1$ and $u_2$ (rather than the factorizations), constant rank signature equivalence becomes trivial when thought of this way (Proposition \ref{prop:crse_existence}).

If $u_1 = U \circ \phi_1$ and $u_2 = U \circ \phi_2$ have equivalent signatures, one might think that the categories of quasiperiodic factorizations $\cat{QuasiP}(u_1)$ and $\cat{QuasiP}(u_2)$ are equivalent.  This is not the case, as demonstrated by the next example.

\begin{example}
  \label{eg:sig_not_equivalent}
  Consider $M_1=M_2 =N = S^1$, the unit circle parameterized by an angle $\theta$.  Suppose that
  \begin{equation*}
    u_1(\theta) = 2 \theta,
  \end{equation*}
  and
  \begin{equation*}
    u_2(\theta) = 6 \theta,
  \end{equation*}
  where we assume that angles outside the range $[0,2\pi)$ are wrapped back into that range.  These functions have universal quasiperiodic factorizations with the same phase space, $C=S^1$, and $U=\id$ in both cases.  The diagram \eqref{eq:quasi-factor-equivalent} becomes
    \begin{equation*}
  \xymatrix{
S^1 \ar[d]_{u_1} \ar[rd]^{u_1} &         \\
S^1 \ar[r]_{\id}&S^1                       \\
S^1 \ar[u]^{u_2} \ar[ru]_{u_2}& \\
    }
    \end{equation*}
    Thus, these two quasiperiodic factorizations have equivalent signatures, but according to Theorem \ref{thm:circle_quasip}, the categories $\cat{QuasiP}(u_1)$ and $\cat{QuasiP}(u_2)$ are not equivalent, because $2$ is prime and $6$ is composite.
\end{example}

Calling a signature equivalence an \emph{equivalence} is legitimate, because it is in fact an equivalence relation.

\begin{proposition}
  Signature equivalence is an equivalence relation between two functions $u_1,u_2: M \to N$.  This remains true for constant rank signature equivalence.
\end{proposition}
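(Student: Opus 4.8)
The plan is to verify the three defining properties of an equivalence relation, reading ``$u_1$ and $u_2$ are signature equivalent'' as: there exist quasiperiodic factorizations $u_1 = U \circ \phi_1$ and $u_2 = U \circ \phi_2$ with a common phase space $C$ and a common signature map $U \colon C \to N$ (and analogously with ``constant rank'' replacing ``quasiperiodic''). \emph{Reflexivity} is immediate: the trivial factorization $u = u \circ \id_M$ --- the initial object of both $\cat{QuasiP}(u)$ and $\cat{Const}(u)$ by Proposition~\ref{prop:initial} --- exhibits $u$ as signature equivalent to itself, taking $C = M$, $\phi_1 = \phi_2 = \id_M$ and $U = u$ in diagram~\eqref{eq:quasi-factor-equivalent}. \emph{Symmetry} is immediate because diagram~\eqref{eq:quasi-factor-equivalent} is unchanged under interchanging its upper and lower legs, so a common signature for $u_1, u_2$ is equally a common signature for $u_2, u_1$.

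The substantive point is \emph{transitivity} for quasiperiodic factorizations. Suppose $u_1 = U \circ \phi_1$ and $u_2 = U \circ \phi_2$ share the signature map $U \colon C \to N$, while $u_2 = V \circ \psi_2$ and $u_3 = V \circ \psi_3$ share the signature map $V \colon D \to N$; we must produce a single signature map for $u_1$ and $u_3$. The obstruction is that we are handed two \emph{a priori} unrelated quasiperiodic factorizations of the common middle map $u_2$, and must reconcile them. The device is the universal (final) object of $\cat{QuasiP}(u_2)$, which exists by \cite[Thm.~5]{Robinson_SampTA_2015}; write it $u_2 = W \circ \pi$ with $\pi \colon M \to E$ a surjective submersion. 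By finality there are morphisms $(\phi_2, U) \to (\pi, W)$ and $(\psi_2, V) \to (\pi, W)$ in $\cat{QuasiP}(u_2)$, that is, smooth component maps $c \colon C \to E$ and $d \colon D \to E$ with $\pi = c \circ \phi_2 = d \circ \psi_2$, $U = W \circ c$ and $V = W \circ d$. Substituting yields $u_1 = W \circ (c \circ \phi_1)$ and $u_3 = W \circ (d \circ \psi_3)$, so $W \colon E \to N$ is a common signature map. It remains to check that $c \circ \phi_1$ and $d \circ \psi_3$ are again submersions: since $\phi_2$ is a surjective submersion and $c \circ \phi_2 = \pi$ is a submersion, the chain rule at any point $p = \phi_2(x) \in C$ forces $d_p c$ to be surjective, so $c$ is a submersion, and likewise $d$; and submersions are closed under composition. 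Hence $u_1$ and $u_3$ have equivalent signatures.

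For the \emph{constant rank case}, reflexivity and symmetry carry over word for word. The transitivity argument above does not transfer, because constant rank maps are not closed under composition --- for example the composite $\mathbb{R} \xrightarrow{\,t \mapsto (\cos t, \sin t)\,} \mathbb{R}^2 \xrightarrow{\,(x,y) \mapsto x\,} \mathbb{R}$ drops rank at every $t \in \pi\mathbb{Z}$ --- so $c \circ \phi_1$ need not have constant rank. Transitivity is nonetheless automatic, because constant rank signature equivalence is in fact the \emph{total} relation on smooth maps $M \to N$: given arbitrary $u_1, u_2 \colon M \to N$, take $C = M \sqcup M$ with $\phi_1, \phi_2$ the two component inclusions (each of constant rank $\dim M$) and $U \colon C \to N$ equal to $u_1$ on the first copy and to $u_2$ on the second, so $U \circ \phi_1 = u_1$ and $U \circ \phi_2 = u_2$ (this is the content of Proposition~\ref{prop:crse_existence}). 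A total relation is trivially transitive, completing the proof.

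I expect the quasiperiodic transitivity step to be the only real work; the delicate point inside it is that the induced component maps $c$ and $d$ are themselves submersions, which is exactly where surjectivity of the phase maps $\phi_2$ and $\psi_2$ is needed.
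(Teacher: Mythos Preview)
Your quasiperiodic argument matches the paper's almost exactly: both invoke the final object of $\cat{QuasiP}(u_2)$ to reconcile the two factorizations of the middle map, then push the resulting component maps out to $u_1$ and $u_3$. Where the paper cites \cite[Lem.~14]{Robinson_SampTA_2015} to conclude that $c\circ\phi_1$ and $c'\circ\phi_3$ are surjective submersions, you give the chain-rule argument directly. One small omission: you verify that $c$ is a submersion but do not mention surjectivity; of course $\pi=c\circ\phi_2$ surjective forces $c$ surjective, so $c\circ\phi_1$ is a surjective submersion, but it is worth saying.

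The genuine divergence is in the constant rank case. The paper simply parenthesizes that one should use Proposition~\ref{prop:universal_constant_rank} in place of the $\cat{QuasiP}$ final object and otherwise runs the same argument. You instead observe that by Proposition~\ref{prop:crse_existence} constant rank signature equivalence is the total relation, so transitivity is vacuous. Your route is both shorter and more honest: you correctly flag that constant rank maps are not closed under composition, so the ``same argument'' does not transfer without further work---the component map $c$ produced by the universal property is only required to be smooth, and even if it were of constant rank the composite $c\circ\phi_1$ need not be. The disjoint-union trick sidesteps this entirely, at the cost of forward-referencing Proposition~\ref{prop:crse_existence}.
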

\begin{proof}
  Symmetry is immediate from the definition.  Reflexivity follows immediately upon recognizing that every smooth function has a trivial quasiperiodic factorization.  To attempt to establish transitivity, the diagram we start with is
\begin{equation*}
  \xymatrix{
M_1 \ar[d]_{\phi_1} \ar[rd]^{u_1} & &        \\
C \ar[r]^{U}&N&                       \\
M_2 \ar[u]^{\phi_2} \ar[ru]_{u_2} \ar[r]_{\phi'_2}&  C' \ar[u]_{U'} & M_3 \ar[l]^{\phi_3} \ar[lu]_{u_3}\\
    }
\end{equation*}
and what we want to construct is $C''$ so that the diagram
\begin{equation*}
  \xymatrix{
M_1 \ar[d]_{\phi_1} \ar[rd]^{u_1} \ar[rr]^{\phi''_1} & &       C'' \ar[dl] \\
C \ar[r]^{U}&N&                       \\
M_2 \ar[u]^{\phi_2} \ar[ru]_{u_2} \ar[r]_{\phi'_2}&  C' \ar[u]_{U'} & M_3 \ar[l]^{\phi_3} \ar[lu]_{u_3} \ar[uu]_{\phi''_3}\\
    }
\end{equation*}
commutes with surjective submersions $\phi''_1$ and $\phi''_3$.  This can be accomplished by the unique\footnote{This is excessive; \cite[Lem. 14]{Robinson_SampTA_2015} is really all that is needed here.} \emph{universal quasiperiodic factorization} \cite[Thm. 5]{Robinson_SampTA_2015} (or using Proposition \ref{prop:universal_constant_rank} in the case of constant rank factorizations) for $M_2$, since this means that we can expand the corresponding portion of the diagram 
\begin{equation*}
  \xymatrix{
    C \ar[rr]^{U} \ar[dr]_{c} && N \\
    &C''\ar[ur]^{U''}&\\
    M_2 \ar[uu]^{\phi_2} \ar[rr]_{\phi'_2} \ar[ur]_{\phi''_2} && C' \ar[uu]_{U'} \ar[ul]^{c'}\\
    }
\end{equation*}
to include such a factorization of $u_2 : M_2 \to N$ into $U'' \circ \phi''_2$.  Composing maps from this diagram and the previous, we have the diagram
\begin{equation*}
  \xymatrix{
    M_1 \ar[d]_{\phi_1} \ar[rd]^{u_1} & &     \\
C \ar[r]^{U} \ar[d]_{c}&N&                       \\
C''  \ar[ru]^{U''} &  C' \ar[u]_{U'} \ar[l]^{c'} & M_3 \ar[l]^{\phi_3} \ar[lu]_{u_3} \\
    }
\end{equation*}
which we can use to define $\phi''_1 = c \circ \phi_1$ and $\phi''_3 = c' \circ \phi_3$.  That these two maps are surjective submersions is a consequence of \cite[Lem. 14]{Robinson_SampTA_2015}, which completes the argument.
\end{proof}

What do equivalence classes of signature equivalent functions look like?  For one, all members of an equivalence class share the same image set.  This provides a rather direct explanation of why classifying sonar targets using the persistent homology of the space of echoes is effective \cite{sonarspace}.

\begin{proposition}
  \label{prop:identical_images}
  If $u_1$ and $u_2$ are signature equivalent functions, then they have identical image sets.
\end{proposition}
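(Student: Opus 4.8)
The plan is to unwind the definition of signature equivalence and track images through the common signature map, nothing more. By hypothesis there are quasiperiodic factorizations $u_1 = U \circ \phi_1$ and $u_2 = U \circ \phi_2$ fitting into the diagram \eqref{eq:quasi-factor-equivalent}; in particular they share a single signature map $U : C \to N$, with phase maps $\phi_1 : M_1 \to C$ and $\phi_2 : M_2 \to C$. The only property of the $\phi_i$ that I will use is that each is surjective onto $C$: this is part of what it means to be the phase map of a quasiperiodic factorization (the phase map is a surjective submersion, as was invoked, for instance, in the proof of Lemma \ref{lem:loop_phase_space}).

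First I would record the set-theoretic identities $\phi_1(M_1) = C$ and $\phi_2(M_2) = C$, which are exactly surjectivity of $\phi_1$ and $\phi_2$. Applying $U$ to both sides gives $U(\phi_1(M_1)) = U(C) = U(\phi_2(M_2))$. Since $u_i = U \circ \phi_i$, the left-hand side is $\image u_1$ and the right-hand side is $\image u_2$, whence $\image u_1 = \image u_2$; along the way one also sees that this common image set equals $\image U$.

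There is essentially no obstacle to overcome: the entire content of the statement is carried by the surjectivity built into the definition of a quasiperiodic factorization. The one point worth stressing in the write-up is precisely this. The argument would fail for an arbitrary \emph{constant rank} factorization, whose phase map is only required to have constant-rank Jacobian and need not be surjective onto its phase space; in that case $U(\phi_1(M_1))$ can be a proper subset of $U(C)$, and the conclusion breaks. This is consistent with — indeed it foreshadows — the later observation that constant rank signature equivalence, viewed as a relation on \emph{functions}, is trivial (Proposition \ref{prop:crse_existence}), and it explains why the proposition is phrased for ordinary signature equivalence rather than its constant rank variant.
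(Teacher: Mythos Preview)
Your proof is correct and follows essentially the same approach as the paper: both arguments hinge solely on the surjectivity of the phase maps $\phi_1$ and $\phi_2$ to conclude that $\image u_1 = U(C) = \image u_2$. The paper phrases it as two inclusions via an element chase, while you compress it into a single set-theoretic identity, but the content is identical, as is your remark that the argument fails for constant rank signature equivalence.
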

This need not be true for constant rank signature equivalence.
\begin{proof}
  Since $u_1$ and $u_2$ are signature equivalent, there is a pair of quasiperiodic factorizations, $u_1 = U \circ \phi_1$ and $u_2 = U \circ \phi_2$, where $U : C \to N$ for some space $C$.  Suppose that $y \in N$ is in the image of $u_1$, which means $y$ is in the image of $U$ as well.  That means there is an $x \in C$ such that $U(x) = y$.  Since $\phi_2$ is surjective by assumption, $x$ is in the image of $\phi_2$, and hence there is a $z$ for which $u_2(z) = U(\phi_2(z)) = U(x) = y$.  This establishes that $\image u_2 \subseteq \image u_1$.

  On the other hand, since $\phi_1$ is also surjective by assumption, then a similar argument establishes that $\image u_1 \subseteq \image u_2$.
\end{proof}

Like the functoriality expressed in Propositions \ref{prop:left_functor} and \ref{prop:right_functor}, (constant rank) signature equivalence respects pre- and post-composition with smooth functions.

\begin{proposition}
  \label{prop:crse_functoriality_prep}
  Suppose that $u_1: M_1 \to N$ and $u_2 : M_2 \to N$ have equivalent signatures.
  \begin{enumerate}
  \item If $f: N \to N'$ is a smooth map, then $(f \circ u_1)$ and $(f \circ u_2)$ have equivalent signatures.
  \item If $g: M' \to M_1$ is a surjective submersion, then $(u_1 \circ g)$ and $u_2$ have equivalent signatures.
  \item Statements (1) and (2) hold for constant rank versions, \emph{mutatis mutandis}.
  \end{enumerate}
\end{proposition}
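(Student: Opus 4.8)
The plan is a pair of short diagram chases, parallel to the proofs of Propositions~\ref{prop:left_functor} and~\ref{prop:right_functor}. By hypothesis there is a single signature map $U : C \to N$ together with quasiperiodic factorizations $u_1 = U \circ \phi_1$ and $u_2 = U \circ \phi_2$, with $\phi_1 : M_1 \to C$ and $\phi_2 : M_2 \to C$ surjective submersions filling in the commuting diagram~\eqref{eq:quasi-factor-equivalent}. Each of the three claims is obtained by post-composing or pre-composing this diagram and verifying that the relevant phase maps retain the property required of them.

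For part~(1), I would attach $f : N \to N'$ at the right-hand vertex of~\eqref{eq:quasi-factor-equivalent}. Then $u_i = U \circ \phi_i$ gives $f \circ u_i = (f \circ U) \circ \phi_i$ for $i=1,2$, and since $\phi_1,\phi_2$ are unchanged they are still surjective submersions, so $f \circ u_1$ and $f \circ u_2$ have equivalent signatures with common signature map $f \circ U : C \to N'$. For part~(2), I would instead pre-compose the upper branch with $g : M' \to M_1$, obtaining $u_1 \circ g = U \circ (\phi_1 \circ g)$; because a composite of surjective submersions is again a surjective submersion---the same fact used in the proof of Proposition~\ref{prop:right_functor}---the map $\phi_1 \circ g$ is a surjective submersion, while the lower branch $u_2 = U \circ \phi_2$ is untouched, so $u_1 \circ g$ and $u_2$ have equivalent signatures, again witnessed by $U$.

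For part~(3), both chases carry over verbatim with ``quasiperiodic factorization'' replaced by ``constant rank factorization'' everywhere, and ``surjective submersion'' replaced by ``constant rank map'' in part~(2), which is exactly the ``mutatis mutandis'' adjustment already made in Proposition~\ref{prop:right_functor}. The two things to check are that $f \circ U$ is an admissible signature map---automatic, since a constant rank factorization places no rank hypothesis on the signature map---and that $\phi_1 \circ g$ has constant rank whenever $\phi_1$ and $g$ do, the same stability property invoked earlier. I anticipate no genuine obstacle: the entire content is keeping track of which arrow gets composed on which side, and the only substantive ingredient---that composites of submersions (respectively surjective submersions, respectively constant rank maps) inherit the relevant property---has already entered the development.
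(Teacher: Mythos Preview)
Your proposal is correct and matches the paper's approach exactly: the paper's proof consists of a single combined diagram (attaching $g$ on the left and $f$ on the right of~\eqref{eq:quasi-factor-equivalent}) with the remark that all three statements follow from reasoning about it, and you have simply spelled out that reasoning, invoking the same closure properties of submersions and constant rank maps already used in Propositions~\ref{prop:left_functor} and~\ref{prop:right_functor}.
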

\begin{proof}
  All of the statements implied by this proposition follow from reasoning about a diagram of the form
  \begin{equation}
  \xymatrix{
M' \ar[r]^g \ar@/_1pc/[dr]_{\phi_1 \circ g}&M_1 \ar[d]_{\phi_1} \ar[rd]^{u_1} \ar@/^1pc/[rrd]^{f \circ u_1} &         \\
&C \ar[r]_{U}&N \ar[r]^f& N'                       \\
&M_2 \ar[u]^{\phi_2} \ar[ru]_{u_2} \ar@/_1pc/[rru]_{f \circ u_2}& \\
    }
\end{equation}
\end{proof}

Finally, although Example \ref{eg:sig_not_equivalent} shows that signature equivalence does not ensure that categories of quasiperiodic factorizations are equivalent, the categories are related in a more subtle way.  Intuitively, the presence of a signature equivalence between two functions sets up an equivalence between subcategories within their respective categories of quasiperiodic factorizations.  To state this precisely requires the notion of a \emph{coslice category}.

\begin{definition}(Standard)
  If $A$ is an object of a category $\cat{C}$, the \emph{coslice category $(A \downarrow \cat{C})$} contains every object $B$ in $\cat{C}$ for which there is a morphism $A \to B$.  Morphisms of $(A \downarrow \cat{C})$ consist of commuting diagrams of the form
  \begin{equation*}
    \xymatrix{
      &A\ar[dr]\ar[dl]&\\
      B \ar[rr]_f && B'
      }
  \end{equation*}
  where $f$ is a morphism of $\cat{C}$.
\end{definition}

Briefly, the coslice category $(A \downarrow \cat{C})$ is equivalent to the subcategory of $\cat{C}$ generated by all morphisms and objects ``downstream'' of $A$.

\begin{theorem}
  \label{thm:coslice_equivalent}
  Suppose that $u_1 : M_1 \to N$ and $u_2: M_2 \to N$ are two maps with equivalent signatures.  Suppose that we write the corresponding quasiperiodic factorizations $u_1 = U \circ \phi_1$ and $u_2 = U \circ \phi_2$, respectively.  Then the coslice categories $(U \circ \phi_1 \downarrow \cat{QuasiP}(u_1))$ and $(U \circ \phi_2 \downarrow \cat{QuasiP}(u_2))$ are isomorphic (not merely equivalent). 
\end{theorem}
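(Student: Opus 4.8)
The plan is to show that \emph{both} coslice categories in the statement are isomorphic to one and the same category, namely $\cat{QuasiP}(U)$, the category of quasiperiodic factorizations of the common signature map $U: C \to N$; composing the two isomorphisms then gives the theorem. The point is that passing to the coslice $(U\circ\phi_i \downarrow \cat{QuasiP}(u_i))$ effectively ``forgets'' the domain $M_i$ and replaces it by the fixed phase space $C$, so that what remains depends only on $U$.

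First I would unwind the definitions. An object of $(U\circ\phi_1 \downarrow \cat{QuasiP}(u_1))$ is a quasiperiodic factorization $(\psi,V)$ of $u_1$, with some phase space $C'$, together with a component map $c: C \to C'$ satisfying $\psi = c\circ\phi_1$ and $U = V\circ c$. Since $\psi$ is completely determined by $c$ and $\phi_1$, such an object is nothing more than the data $(C',c,V)$ with $U = V\circ c$ and $c\circ\phi_1$ a surjective submersion. Now I invoke the cancellation property for surjective submersions (\cite[Lem. 14]{Robinson_SampTA_2015}, or directly from the chain rule together with surjectivity): because $\phi_1$ is a surjective submersion, $c\circ\phi_1$ is a surjective submersion if and only if $c$ is. Hence the objects of the coslice are exactly the triples $(C',c,V)$ with $c: C\to C'$ a surjective submersion and $U = V\circ c$ --- which is precisely the set of objects of $\cat{QuasiP}(U)$.

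Next I would check that the morphisms agree. A coslice morphism from $(\psi,V,c)$ to $(\psi',V',c')$ is a $\cat{QuasiP}(u_1)$-morphism $d: C'\to C''$ satisfying the triangle identity $d\circ c = c'$; but that identity already forces $d\circ\psi = d\circ c\circ\phi_1 = c'\circ\phi_1 = \psi'$, so the only remaining constraint from being a $\cat{QuasiP}(u_1)$-morphism is $V = V'\circ d$. Thus a coslice morphism is exactly a smooth map $d$ with $d\circ c = c'$ and $V'\circ d = V$, which is precisely a morphism of $\cat{QuasiP}(U)$; identities and composition are visibly matched. So we obtain an isomorphism of categories $(U\circ\phi_1 \downarrow \cat{QuasiP}(u_1)) \cong \cat{QuasiP}(U)$ --- equivalently, the functor $\cat{QuasiP}(U) \to \cat{QuasiP}(u_1)$ induced by the surjective submersion $\phi_1: M_1 \to C$ (Proposition \ref{prop:right_functor}, applied with $u = U$ and $g = \phi_1$) lifts to an isomorphism onto the coslice. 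Running the identical argument with $\phi_2$ gives $(U\circ\phi_2 \downarrow \cat{QuasiP}(u_2)) \cong \cat{QuasiP}(U)$, and composing the two yields the isomorphism claimed. Because every step is a genuine bijection on objects and on hom-sets, the conclusion is an isomorphism of categories, not merely an equivalence.

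The main obstacle is bookkeeping rather than substance: one must be careful that the component map $c$, which is only \emph{a priori} a smooth map in Definition \ref{def:constant_rank}, is automatically a surjective submersion in this setting --- this is exactly where the surjectivity of $\phi_1$ and the cancellation lemma are needed --- and one must verify that the coslice ``triangle'' condition on a morphism imposes nothing beyond the analogous condition in $\cat{QuasiP}(U)$, in particular that the domain-side compatibility $d\circ\psi = \psi'$ is redundant. Both points reduce to the surjectivity of $\phi_1$ (and of $\phi_2$), so no further input is required.
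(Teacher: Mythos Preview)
Your argument is correct. It takes a somewhat different, and arguably cleaner, route than the paper. The paper constructs the isomorphism directly: Lemma \ref{lem:sig_equiv_functor} builds a functor $F$ by taking a coslice object $(\phi_1',U')$ under $(\phi_1,U)$ with component map $c$ and sending it to $(c\circ\phi_2,U')$, then checks that the dual construction $G$ is a two-sided inverse. You instead observe that both coslice categories are isomorphic to the \emph{same} category $\cat{QuasiP}(U)$, since passing to the coslice replaces the datum $\psi: M_i \to C'$ by the component map $c: C \to C'$ (which, by the cancellation property you cite, is a surjective submersion exactly when $c\circ\phi_i$ is); composing the two identifications then gives the isomorphism.

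The technical core is identical---both arguments hinge on the fact that for $\phi_i$ a surjective submersion, $c\circ\phi_i$ is a surjective submersion if and only if $c$ is, which is exactly what the paper invokes in the proof of Lemma \ref{lem:sig_equiv_functor} and what you attribute to \cite[Lem.~14]{Robinson_SampTA_2015}. Your formulation has the advantage of naming the intermediate category $\cat{QuasiP}(U)$ explicitly, which clarifies \emph{why} the two coslices agree (they depend only on $U$, not on $M_i$ or $\phi_i$) and makes the connection to Proposition~\ref{prop:right_functor} visible. The paper's direct construction is perhaps more self-contained, since it never needs to introduce a third category, but the two are really the same argument reorganized.
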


The proof of Theorem \ref{thm:coslice_equivalent} relies upon an explicit construction of a functor between the coslice categories, the mechanics of which are explained in Lemma \ref{lem:sig_equiv_functor} below.

\begin{lemma}
  \label{lem:sig_equiv_functor}
  Suppose that $u_1 : M_1 \to N$ and $u_2: M_2 \to N$ are two maps with quasiperiodic factorizations $u_1 = U \circ \phi_1$ and $u_2 = U \circ \phi_2$.  If $u_1 = U' \circ \phi_1'$ is another quasiperiodic factorization of $u_1$ such that there is a $\cat{QuasiP}(u_1)$ morphism
    \begin{equation*}
    \xymatrix{
      M_1 \ar[rr]^{u_1} \ar[dr]^{\phi_1} \ar@/_1pc/[ddr]_{\phi_1'} && N \\
      &C \ar[d]_-{c} \ar[ur]^{U}&\\
      &C' \ar@/_1pc/[uur]_{U'}&\\
      }
  \end{equation*}
    where $c: C_1 \to C_2$ is a smooth map,
    then this induces another quasiperiodic factorization of $u_2=U' \circ \phi_2'$ and a $\cat{QuasiP}(u_2)$ morphism
     \begin{equation*}
    \xymatrix{
      M_2 \ar[rr]^{u_2} \ar[dr]^{\phi_2} \ar@/_1pc/[ddr]_{\phi_2'} && N \\
      &C \ar[d]_-{c} \ar[ur]^{U}&\\
      &C' \ar@/_1pc/[uur]_{U'}&\\
      }
     \end{equation*}
     The statement remains true of constant rank factorizations as well.
\end{lemma}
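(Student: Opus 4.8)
The plan is to produce the induced factorization by the only available choice, $\phi_2' := c \circ \phi_2$, and then to check that this is a factorization of the correct type carrying the claimed morphism. The equation $U = U' \circ c$ that comes with the given morphism of $\cat{QuasiP}(u_1)$ (or of $\cat{Const}(u_1)$) yields immediately $U' \circ \phi_2' = U' \circ c \circ \phi_2 = U \circ \phi_2 = u_2$, so $(\phi_2', U')$ does factor $u_2$; and the two equations $\phi_2' = c \circ \phi_2$ and $U = U' \circ c$ are, by Definition \ref{def:constant_rank}, precisely the data of a morphism $(\phi_2, U) \to (\phi_2', U')$ with component map $c$. So the whole lemma reduces to a single point: that $\phi_2'$ has the regularity required of a phase map --- a submersion in the quasiperiodic case, a map of constant rank in the constant rank case.

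In the quasiperiodic case this is straightforward. The phase map of a quasiperiodic factorization is a surjective submersion, so $\phi_1$ and $\phi_1' = c \circ \phi_1$ are surjective submersions out of $M_1$. Then every $p \in C$ equals $\phi_1(q)$ for some $q$, and reading off $d_q\phi_1' = (d_pc)(d_q\phi_1)$ together with surjectivity of $d_q\phi_1'$ forces $d_pc$ to be surjective; likewise the image of $c$ contains the image of the surjection $\phi_1'$, so $c$ is itself a surjective submersion --- this is the bookkeeping isolated in \cite[Lem. 14]{Robinson_SampTA_2015}. Hence $\phi_2' = c \circ \phi_2$ is a composite of surjective submersions, hence again a surjective submersion, and $(\phi_2', U')$ is a genuine quasiperiodic factorization of $u_2$. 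The same recipe applied to a morphism of $\cat{QuasiP}(u_1)$ evidently respects identities and composition, since both are computed at the level of component maps exactly as in the proof that $\cat{Const}(u)$ is a category; this is what upgrades the assignment to the functor used in Theorem \ref{thm:coslice_equivalent}.

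The main obstacle is the constant rank case, where $\phi_1$ and $\phi_2$ need only have constant rank and so need not be surjective; the argument above breaks because a composite of constant rank maps can have non-constant rank. Here I would localize with the constant rank theorem: near a point of $M_2$ one writes $\phi_2$ as an embedding $\iota : V \hookrightarrow C$ of a chart slice precomposed with a submersion, so that $\phi_2' = c \circ \phi_2$ has constant rank exactly when $c|_V$ does. Applying the constant rank theorem instead to $\phi_1$ and to $\phi_1' = c \circ \phi_1$ shows that $c$ does restrict to a constant rank map on the (immersed) image of $\phi_1$, with rank equal to that of $\phi_1'$; the genuinely delicate step --- the one I expect to cost the most work --- is to transport this regularity of $c$ from the image of $\phi_1$ to the image of $\phi_2$. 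If one adopts the convention (used implicitly elsewhere, e.g., in the proof of Proposition \ref{prop:identical_images}) that the phase maps entering a signature equivalence are surjective, then the images of $\phi_1$ and of $\phi_2$ are both all of $C$ and the quasiperiodic argument carries over verbatim; failing that, one would route the given morphism through the universal constant rank factorization of $u_2$ supplied by Proposition \ref{prop:universal_constant_rank}. Once $\phi_2'$ is known to have constant rank, the verification of the morphism and of functoriality proceeds exactly as before.
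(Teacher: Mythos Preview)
Your quasiperiodic argument is essentially identical to the paper's: define $\phi_2' := c\circ\phi_2$, check the diagram commutes, and observe that $c$ is forced to be a surjective submersion (you cite \cite[Lem.~14]{Robinson_SampTA_2015} explicitly; the paper just says ``commutativity of the original morphism diagram ensures that $c$ be a surjective submersion''), so $\phi_2'$ is one too.

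Where you diverge is in the constant rank case. The paper spends exactly one sentence on it: ``it follows from the definition of morphism that $c$ must be a constant rank map,'' and leaves it at that. You, by contrast, correctly flag that (i) the definition of morphism only asks $c$ to be smooth, (ii) even if $c$ has constant rank, a composite of constant rank maps need not have constant rank, and (iii) without surjectivity of $\phi_1$ one only controls $c$ on the image of $\phi_1$, not on the image of $\phi_2$. Your proposed workarounds --- localizing via the rank theorem, or assuming surjectivity of the phase maps in a signature equivalence, or routing through the universal factorization of Proposition~\ref{prop:universal_constant_rank} --- are all reasonable strategies, but none is what the paper does: the paper simply asserts the conclusion. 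So your analysis is \emph{more} careful than the paper's own proof here; the concern you raise is legitimate, and the paper does not resolve it beyond the one-line claim. For the purposes of matching the paper, you could simply state that $c$ is constant rank and move on, noting (as you do) that the honest justification is nontrivial.
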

\begin{proof}
  This Lemma hinges on the observation that $\phi_1' = c \circ \phi_1$.  We can then simply make the definition $\phi_2'=c \circ \phi_2$, because all of the various paths in the diagram for $u_2$
  \begin{equation*}
    \xymatrix{
      M_1 \ar[rr]^{u_1} \ar[dr]^{\phi_1} \ar@/_1pc/[ddr]_{\phi_1'} && N \\
      &C \ar[d]_-{c} \ar[ur]^{U}&\\
      &C' \ar@/_1pc/[uur]^-(0.7){U'}|(0.4)\hole &&M_1 \ar[ll]^{\phi_2'} \ar[uul]_{u_2} \ar[ull]_{\phi_2} \\
      }
  \end{equation*}
  continue to commute.  Notice that commutativity of the original morphism diagram ensures that $c$ be a surjective submersion, so it follows that $\phi_2'$ is also a surjective submersion.
  The statement still holds for constant rank maps for essentially the same reason: it follows from the definition of morphism that $c$ must be a constant rank map.
\end{proof}

With Lemma \ref{lem:sig_equiv_functor} in hand, we can prove Theorem \ref{thm:coslice_equivalent}.

\begin{proof}(of Theorem \ref{thm:coslice_equivalent})
  Observe that Lemma \ref{lem:sig_equiv_functor} defines a function on objects:
  \begin{equation*}
    F : (U \circ \phi_1 \downarrow \cat{QuasiP}(u_1)) \to (U \circ \phi_2 \downarrow \cat{QuasiP}(u_2)).
  \end{equation*}
  Dually, the same construction works to define a function
  \begin{equation*}
    G : (U \circ \phi_2 \downarrow \cat{QuasiP}(u_2)) \to (U \circ \phi_1 \downarrow \cat{QuasiP}(u_1)).
  \end{equation*}

  We need to establish that (1) $F$ respects morphisms and therefore defines a covariant functor, (2) that $F \circ G = \id$, and that (3) $G \circ F = \id$.  Evidently due to the symmetry of the situation, (2) and (3) can be established simultaneously.

  Suppose that we have a morphism
    $m: (\phi_1',U') \to (\phi_1'',U'')$ in $(U \circ \phi_1 \downarrow \cat{QuasiP}(u_1))$,
  which means that the following diagram commutes
\begin{equation*}
    \xymatrix{
      M_1 \ar[rr]^{u_1} \ar[dr]^{\phi_1} \ar@/_1pc/[ddr]_{\phi_1'} \ar@/_2pc/[dddr]_{\phi_1''} && N \\
      &C \ar[d]_-{c'} \ar@/^2pc/[dd]^-{c''} \ar[ur]^{U}&\\
      &C' \ar@/_1pc/[uur]_{U'}|(0.27)\hole\ar[d]_-{m}&\\
      &C'' \ar@/_2pc/[uuur]_{U''}&\\
      }
\end{equation*}
where $c'$, $c''$, and $m$ are smooth maps (abusing notation by conflating the morphisms with their component maps on the phase spaces).
Notice in particular that $c''=m \circ c'$.
Lemma \ref{lem:sig_equiv_functor} uses the above to construct two diagrams for factorizations of $u_2$, namely
     \begin{equation*}
    \xymatrix{
      M_2 \ar[rr]^{u_2} \ar[dr]^{\phi_2} \ar@/_1pc/[ddr]_{\phi_2'} && N \\
      &C \ar[d]_-{c'} \ar[ur]^{U}&\\
      &C' \ar@/_1pc/[uur]_{U'}&\\
      }
     \end{equation*}
     and
    \begin{equation*}
    \xymatrix{
      M_2 \ar[rr]^{u_2} \ar[dr]^{\phi_2} \ar@/_1pc/[ddr]_{\phi_2''} && N \\
      &C \ar[d]_-{c''} \ar[ur]^{U}&\\
      &C'' \ar@/_1pc/[uur]_{U''}&\\
      }
     \end{equation*}
    But since $c''=m \circ c'$, these two diagrams can be combined into
    \begin{equation*}
    \xymatrix{
      M_2 \ar[rr]^{u_2} \ar[dr]^{\phi_2} \ar@/_1pc/[ddr]_{\phi_2'} \ar@/_2pc/[dddr]_{\phi_2''} && N \\
      &C \ar[d]_-{c'} \ar@/^2pc/[dd]^-{c''} \ar[ur]^{U}&\\
      &C' \ar@/_1pc/[uur]_{U'}|(0.27)\hole\ar[d]_-{m}&\\
      &C'' \ar@/_2pc/[uuur]_{U''}&\\
      }
    \end{equation*}
    which is the diagram of a morphism
      $F(m) : (\phi_2',U') \to (\phi_2'',U'')$ in $(U \circ \phi_2 \downarrow \cat{QuasiP}(u_2))$.
    Incidentally $F(m)$ is also given by the smooth map $m$ on the phase spaces.

    To establish that $F$ and $G$ are inverses, it is merely necessary to compute using the recipe from Lemma \ref{lem:sig_equiv_functor}:
    \begin{eqnarray*}
      G(F((\phi_1',U'))) &=& G(F((c \circ \phi_1, U'))) \\
      &=& G((c \circ \phi_2, U'))\\
      &=& (c \circ \phi_1, U')\\
      &=& (\phi_1', U')
    \end{eqnarray*}
    as desired.
\end{proof}

If the factorizations of two functions $u_1$ and $u_2$ are signature equivalent and are final elements in their respective categories, then their coslice categories are rather small and uninformative.  Conversely (though trivially!), if the factorizations are the trivial ones, then this forces their respective categories to be isomorphic.

\subsection{Constant rank signature equivalences}

\emph{Constant rank} signature equivalences appear to have many of the same features of signature equivalences, since several of the proofs from the statements in Section \ref{sec:comparing} carry over to constant rank signature equivalences without much effort.  However, constant rank signature equivalences are a relatively coarse way of comparing two functions, because there are simply too many of them, as will be shown in Proposition \ref{prop:crse_existence}.  Intuitively, this means that sonar signatures arising in this case have too many degrees of freedom to be effectively discriminated.  This is a bit disappointing because constant rank signature equivalences are closely related to homotopies (as will be shown in Proposition \ref{prop:crse_homotopy}) even though signature equivalences are not.  The solution appears in Definition \ref{def:crse}.  Instead of considering individual equivalences, we should consider a category of \emph{all possible such} constant rank equivalences.  This idea builds over the next few sections, culminating in Theorem \ref{thm:crse_coslice}, which characterizes the category of constant rank equivalences in terms of coslices.

A close examination reveals that none of the arguments in the proof of Theorem \ref{thm:coslice_equivalent} were dependent on the surjectivity of the phase maps.  Therefore, we have the following Corollary.

\begin{corollary}
  \label{cor:coslice_equivalent}
  Suppose that $u_1 : M_1 \to N$ and $u_2: M_2 \to N$ are two maps with constant rank factorizations $u_1 = U \circ \phi_1$ and $u_2 = U \circ \phi_2$ being constant rank signature equivalent factorizations.  Then the coslice categories $(U \circ \phi_1 \downarrow \cat{Const}(u_1))$ and $(U \circ \phi_2 \downarrow \cat{Const}(u_2))$ are isomorphic (not merely equivalent).   
\end{corollary}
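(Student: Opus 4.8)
The plan is to observe that Corollary \ref{cor:coslice_equivalent} is an immediate consequence of Theorem \ref{thm:coslice_equivalent} together with the parenthetical remarks already embedded in its proof, and to make that observation explicit. The key point, which I would state first, is that every ingredient used in the proof of Theorem \ref{thm:coslice_equivalent} — namely Lemma \ref{lem:sig_equiv_functor}, the construction of the functors $F$ and $G$ on objects, the verification that $F$ respects morphisms, and the computation that $F$ and $G$ are mutually inverse — was built out of diagram chases that only require the component maps to be of \emph{constant rank}, never that the phase maps $\phi_1,\phi_2$ be surjective submersions. Indeed, Lemma \ref{lem:sig_equiv_functor} explicitly includes the sentence ``The statement remains true of constant rank factorizations as well,'' with the justification that a morphism of constant rank factorizations forces its component map $c$ to be of constant rank, hence $\phi_2' = c\circ\phi_2$ is again of constant rank.

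Concretely, I would carry out the proof in three short steps. First, replace the input data of Theorem \ref{thm:coslice_equivalent} by constant rank factorizations $u_1 = U\circ\phi_1$ and $u_2 = U\circ\phi_2$ that are constant rank signature equivalent, so that the diagram \eqref{eq:quasi-factor-equivalent} (with the appropriate relabeling $M_1,M_2$) commutes with $\phi_1,\phi_2$ merely of constant rank. Second, invoke the constant rank version of Lemma \ref{lem:sig_equiv_functor} to define the object functions $F : (U\circ\phi_1 \downarrow \cat{Const}(u_1)) \to (U\circ\phi_2 \downarrow \cat{Const}(u_2))$ and, dually, $G$ in the reverse direction; the recipe is verbatim the same as in Theorem \ref{thm:coslice_equivalent}, sending a factorization $(\phi_1',U') = (c\circ\phi_1,U')$ to $(c\circ\phi_2,U')$. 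Third, note that the morphism-compatibility diagram chase and the computation
\begin{equation*}
  G(F((\phi_1',U'))) = G((c\circ\phi_2,U')) = (c\circ\phi_1,U') = (\phi_1',U')
\end{equation*}
go through unchanged, because the relation $c'' = m\circ c'$ among component maps and the closure of constant rank maps under composition are all that was ever used; in particular no submersion hypothesis enters.

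I do not expect a genuine obstacle here: the substance of the result is entirely contained in Theorem \ref{thm:coslice_equivalent} and Lemma \ref{lem:sig_equiv_functor}, and the Corollary is a ``same proof, weaker hypothesis'' statement. The only thing requiring a moment's care is the bookkeeping already flagged in Lemma \ref{lem:sig_equiv_functor}: one must confirm that a $\cat{Const}$-morphism $c : C \to C'$ is by definition a constant rank map, so that precomposition with $\phi_2$ stays within $\cat{Const}(u_2)$; this is immediate from Definition \ref{def:constant_rank}. Accordingly, the cleanest write-up is simply to say that the arguments in the proof of Theorem \ref{thm:coslice_equivalent} never invoked surjectivity of the phase maps, and therefore apply \emph{mutatis mutandis} to constant rank factorizations, yielding the claimed isomorphism of coslice categories.
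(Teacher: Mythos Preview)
Your proposal is correct and takes essentially the same approach as the paper: the paper's entire justification for this Corollary is the single sentence immediately preceding its statement, ``A close examination reveals that none of the arguments in the proof of Theorem \ref{thm:coslice_equivalent} were dependent on the surjectivity of the phase maps,'' and you have simply unpacked that observation in detail. One small slip worth correcting: Definition \ref{def:constant_rank} only requires the component map $c$ to be smooth, not constant rank---the paper instead claims (in the last line of the proof of Lemma \ref{lem:sig_equiv_functor}) that commutativity of the morphism diagram forces $c$ to be constant rank, so your attribution should point there rather than to the definition.
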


Constant rank signature equivalences always exist between functions, at least in a trivial form, as the next Proposition shows.  

\begin{proposition}
  \label{prop:crse_existence}
  Suppose that $u_1: M_1 \to N$ and $u_2: M_2 \to N$ are two smooth functions.  Recalling that $M_1 \sqcup M_2$ is the disjoint union of the two manifolds involved, then
  \begin{equation*}
      \xymatrix{
M_1 \ar[d]_{i_1} \ar[rd]^{u_1} &         \\
M_1\sqcup M_2 \ar[r]^{U}&N                       \\
M_2 \ar[u]^{i_2} \ar[ru]_{u_2}& \\
    }
\end{equation*}
  is a constant rank signature equivalence when the vertical $i_k$ maps are inclusions and
  \begin{equation*}
    U(x) = \begin{cases}
      u_1(x) & \text{if }x\in M_1 \subseteq M_1 \sqcup M_2, \\
      u_2(x) & \text{if }x\in M_2 \subseteq M_1 \sqcup M_2. \\
    \end{cases}
  \end{equation*}
\end{proposition}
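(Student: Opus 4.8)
The plan is to verify that the diagram in the statement is a valid constant rank signature equivalence by checking the two defining conditions: that the diagram commutes and that the two vertical maps $i_1, i_2$ are of constant rank (the weaker requirement replacing the surjective submersion condition of ordinary signature equivalence). The main content is essentially bookkeeping about the disjoint union $M_1 \sqcup M_2$, so I would proceed as follows.

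First I would confirm that $U$ as defined is a smooth map $M_1 \sqcup M_2 \to N$. Since $M_1$ and $M_2$ are open (and closed) submanifolds of $M_1 \sqcup M_2$ and $U$ restricts to $u_1$ on $M_1$ and to $u_2$ on $M_2$, smoothness is local and follows immediately from the smoothness of $u_1$ and $u_2$. Next I would check commutativity: for $x \in M_1$ we have $(U \circ i_1)(x) = U(x) = u_1(x)$ by the first case of the definition, and similarly $(U \circ i_2)(x) = u_2(x)$ for $x \in M_2$. This gives $u_1 = U \circ i_1$ and $u_2 = U \circ i_2$, which is exactly the required commutativity of the two triangles.

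Then I would verify that $i_1$ and $i_2$ are constant rank maps, so that $u_1 = U \circ i_1$ and $u_2 = U \circ i_2$ are genuine constant rank factorizations. Each inclusion $i_k : M_k \hookrightarrow M_1 \sqcup M_2$ is an open embedding, hence a local diffeomorphism onto its image; its Jacobian $d_x i_k$ is an isomorphism at every point $x \in M_k$, so $i_k$ has rank equal to $\dim M_k$ at every point, which is constant. (Note that no connectedness hypothesis on $M_1$ or $M_2$ is needed here, since an open embedding has constant rank regardless.) Hence $(i_1, U)$ is a constant rank factorization of $u_1$ and $(i_2, U)$ is a constant rank factorization of $u_2$, and the commuting diagram exhibits them as having the same signature map $U$, which is precisely what it means for the factorizations to be constant rank signature equivalent.

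There is no real obstacle in this proof: every step is a direct unwinding of definitions. The only point worth a moment's care is recognizing why the inclusions fail to be surjective submersions in general (so that this construction does \emph{not} give an ordinary signature equivalence) — $i_k$ is not surjective onto $M_1 \sqcup M_2$ unless the other summand is empty — which is exactly the gap that makes constant rank signature equivalence the coarser, and here always-nonempty, notion. I would conclude with a brief remark that this is why constant rank signature equivalence, viewed at the level of functions rather than factorizations, is trivial: any two smooth maps into $N$ are related this way.
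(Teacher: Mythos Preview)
Your proposal is correct and follows essentially the same approach as the paper's own proof, which simply notes that the diagram commutes by construction and that inclusions are automatically constant rank maps. You have merely unpacked these two observations in greater detail, adding the (harmless) verification that $U$ is smooth and the closing remark about non-surjectivity; nothing in your argument departs from the paper's route.
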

\begin{proof}
  One merely needs to recognize that the diagram commutes by construction, and that inclusions are automatically constant rank maps.
\end{proof}

The reader is cautioned from reading too much into Proposition \ref{prop:crse_existence}.  Because of Proposition \ref{prop:initial}, neither of the constant rank factorizations in Proposition \ref{prop:crse_existence} are initial objects in their categories of constant rank factorizations.  If the factorizations in Proposition \ref{prop:crse_existence} were initial objects in general, then Corollary \ref{cor:coslice_equivalent} would indicate that $\cat{Const}(u)$ is dependent \emph{only} upon the codomain of $u$.  The truth is quite a bit more subtle!

\begin{proposition}
  \label{prop:crse_homotopy}
  Suppose that $u_1, u_2 : M \to N$ are two maps and that $h : M \times [0,1] \to N$ is a smooth homotopy between them.  Then these maps are constant rank signature equivalent.  
\end{proposition}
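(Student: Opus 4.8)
The plan is to exhibit the ``tautological'' factorization of a homotopy: use the homotopy itself as the common signature map. Concretely, I would take the phase space to be $C = M \times [0,1]$, the signature map to be $U = h$, and the phase maps to be the two boundary inclusions $\phi_1, \phi_2 : M \to M \times [0,1]$ given by $\phi_1(x) := (x,0)$ and $\phi_2(x) := (x,1)$. The claim is then that $u_1 = U \circ \phi_1$ and $u_2 = U \circ \phi_2$ are constant rank factorizations sharing the phase space $C$, which is exactly a constant rank signature equivalence in the sense of the diagram \eqref{eq:quasi-factor-equivalent} (in its constant rank version).

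The verification splits into two routine checks. First, commutativity: since $h$ is a homotopy between $u_1$ and $u_2$, we have $h \circ \phi_1 = h(\cdot, 0) = u_1$ and $h \circ \phi_2 = h(\cdot, 1) = u_2$, which is precisely what the diagram demands. Second, the phase maps are of constant rank: each $\phi_i$ is a smooth embedding of $M$ onto one of the two boundary slices of $M \times [0,1]$, so in a product chart $d_x\phi_i$ is the linear inclusion $\mathbb{R}^{\dim M} \hookrightarrow \mathbb{R}^{\dim M} \times \mathbb{R}$, which is injective at every point $x$; hence $\phi_i$ has rank $\dim M$ throughout $M$, i.e.\ it is of constant rank. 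Combining the two checks gives the required factorizations, and the proof is complete.

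The only subtlety that needs a sentence of care is that $M \times [0,1]$ is a manifold with boundary and that the phase maps land in that boundary, so one must confirm that the ambient conventions (smoothness of $h$ on $M\times[0,1]$, and the meaning of ``constant rank'') accommodate this; they do, and indeed the excerpt already treats maps out of $M \times [0,1]$ as smooth and works freely with compact $1$-manifolds with boundary in Lemma \ref{lem:loop_phase_space}. If one wishes to avoid boundaries entirely, I would instead fix a smooth function $\psi : \mathbb{R} \to [0,1]$ that is identically $0$ on $(-\infty,0]$ and identically $1$ on $[1,\infty)$, take $C = M \times \mathbb{R}$ and $U(x,t) := h(x,\psi(t))$, and keep $\phi_i$ as the (now boundaryless) embeddings $x \mapsto (x,i)$ for $i \in \{0,1\}$; the commutativity and constant-rank checks go through verbatim. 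I do not anticipate any genuine obstacle: the content of the proposition is essentially that a homotopy \emph{is} a constant rank signature equivalence once one names the pieces correctly.
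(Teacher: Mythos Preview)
Your proposal is correct and follows essentially the same approach as the paper: the paper also takes $C = M \times [0,1]$, $U = h$, and the phase maps to be the boundary inclusions $i_0, i_1$, then notes that $u_1 = h \circ i_0$ and $u_2 = h \circ i_1$ are constant rank factorizations (remarking in a footnote that neither inclusion is surjective, so these are not quasiperiodic factorizations). Your additional care about the boundary issue and the optional boundaryless variant via $M \times \mathbb{R}$ go beyond what the paper spells out, but the core argument is identical.
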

\begin{proof}
  The hypothesis means that
\begin{equation*}
  \xymatrix{
    M \ar[d]_{i_0} \ar[dr]^{u_1}&\\
    M \times [0,1] \ar[r]^-{h} & N\\
    M \ar[u]^{i_1} \ar[ur]_{u_2}&\\
    }
\end{equation*}
commutes.  Furthermore, $u_1= h \circ i_0$ and $u_2 = h \circ i_1$ are both constant rank factorizations\footnote{Neither are surjective, so this does not result in quasiperiodic factorizations}.  
\end{proof}

Therefore, the constant rank signature equivalence classes contain entire homotopy classes of maps.

\begin{proposition}
  \label{prop:crse_diffeo}
Suppose $M$ is a connected manifold, that $u_1,u_2: M \to N$ are smooth maps, and that $u_2 = u_1 \circ f$ where $f: M \to M$ is a diffeomorphism.  Then these maps are constant rank signature equivalent.  
\end{proposition}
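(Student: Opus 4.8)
The plan is to exhibit an explicit constant rank signature equivalence, using the diffeomorphism $f$ itself as one of the two phase maps. This is the same device already used in the proof of Theorem \ref{thm:crse_diffeo}, now specialized to the trivial (initial) factorization of $u_1$, so the argument is essentially a one-liner together with a small amount of bookkeeping.

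Concretely, I would take the common phase space to be $C := M$, the common signature map to be $U := u_1 : M \to N$, and the two phase maps to be $\phi_1 := \id_M$ and $\phi_2 := f$. Then $u_1 = U \circ \phi_1$ holds trivially, and $u_2 = u_1 \circ f = U \circ \phi_2$ holds by hypothesis, so the diagram
\begin{equation*}
  \xymatrix{
    M \ar[d]_{\id_M} \ar[rd]^{u_1} & \\
    M \ar[r]_{u_1} & N \\
    M \ar[u]^{f} \ar[ru]_{u_2} & \\
  }
\end{equation*}
commutes, and has exactly the shape required of a (constant rank) signature equivalence.

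It then remains only to check that $\phi_1$ and $\phi_2$ are constant rank maps in the sense of Definition \ref{def:constant_rank}. For $\id_M$ this is immediate: its Jacobian is the identity, of rank $\dim M$, at every point. For the diffeomorphism $f$, the Jacobian $d_x f$ is invertible at every $x \in M$ and hence of full rank; the hypothesis that $M$ is connected is precisely what guarantees that $\dim M$ — and therefore this common rank — is one and the same integer at every point, so that $f$ genuinely has constant rank. This is the only place connectedness enters, exactly as in Theorem \ref{thm:crse_diffeo}. There is no real obstacle in this proposition; the sole point demanding attention is this upgrade from ``diffeomorphism'' to ``constant rank map,'' which needs $\dim M$ to be globally well defined. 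One could alternatively route the proof through the disjoint-union construction of Proposition \ref{prop:crse_existence}, but taking $f$ itself as $\phi_2$ is more natural and makes transparent the parallel with Theorem \ref{thm:crse_diffeo} and with Proposition \ref{prop:crse_homotopy}.
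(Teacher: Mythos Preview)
Your proposal is correct and takes essentially the same approach as the paper: the same commutative diagram with $C=M$, $U=u_1$, $\phi_1=\id_M$, and $\phi_2=f$, followed by the observation that $\id_M$ and $f$ are of constant rank. Your write-up is actually more explicit than the paper's about where connectedness is used.
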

\begin{proof}
    The hypothesis establishes that
  \begin{equation*}
    \xymatrix{
      M \ar[dr]^{u_1} \ar[d]_{\id_M} & \\
      M \ar[r]^{u_1} & N\\
      M \ar[u]^f \ar[ur]_{u_2}
      }
  \end{equation*}
  commutes; evidently $f$ and $\id_M$ are of constant rank.
\end{proof}

Using Proposition \ref{prop:crse_diffeo} to establish a constant rank signature equivalence along with Corollary \ref{cor:coslice_equivalent} provides an alternative proof of Theorem \ref{thm:crse_diffeo} since the factorizations in question are evidently initial according to Proposition \ref{prop:initial}.

\subsection{Categories \emph{of} signature equivalences}
The fact that constant rank signature equivalences are so common suggests that we abstract further and consider these equivalences as objects in their own right.  

\begin{definition}
  \label{def:crse}
The category $\cat{CRSE}(u_1,u_2)$ for each pair of smooth functions $u_1: M_! \to N$, $u_2: M_2 \to N$ has constant rank signature equivalences as its objects.  Specifically, the objects are diagrams of the form
\begin{equation}
    \xymatrix{
M_1 \ar[d]_{\phi_1} \ar[rd]^{u_1} &         \\
C \ar[r]_{U}&N                       \\
M_2 \ar[u]^{\phi_2} \ar[ru]_{u_2}& \\
    }
\end{equation}
in which $\phi_1$ and $\phi_2$ are constant rank maps.  A morphism in $\cat{CRSE}(u_1,u_2)$ is determined by a map $c: C \to C'$ such that the diagram
\begin{equation}
  \xymatrix{
      M_1 \ar[rr]^{u_1} \ar[dr]^{\phi_1} \ar@/_1pc/[ddr]_{\phi_1'} && N \\
      &C \ar[d]_-{c} \ar[ur]^{U}&\\
      &C' \ar@/_1pc/[uur]^-(0.7){U'}|(0.4)\hole &&M_2 \ar[ll]^{\phi_2'} \ar[uul]_{u_2} \ar[ull]_{\phi_2} \\
      }
\end{equation}
commutes.
\end{definition}

Intuitively, each object of $\cat{CRSE}(u_1,u_2)$ defined above is an individual sonar target that could have yielded both signals $u_1$ and $u_2$ under different sensor conditions.  Therefore, a small number of isomorphism classes of $\cat{CRSE}(u_1,u_2)$ suggests that $u_1$ and $u_2$ are likely from different targets, while a larger number of isomorphism classes means that it is likely that $u_1$ and $u_2$ arise from similar targets.

\begin{corollary}
  \label{cor:crse_projection}
  There is a forgetful functor $\cat{CRSE}(u_1,u_2) \to \cat{Const}(u_1)$ that acts by truncating the diagram on the left to produce the one on the right:
  \begin{equation*}
    \xymatrix{
      M_1 \ar[d]_{\phi_1} \ar[rd]^{u_1} & &&& M_1 \ar[d]_{\phi_1} \ar[rd]^{u_1}       \\
C \ar[r]_{U}&N & \ar@{=>}[r]&&   C \ar[r]_{U}&N                    \\
M_2 \ar[u]^{\phi_2} \ar[ru]_{u_2}&& \\
      }
  \end{equation*}
  Evidently there is also a forgetful functor $\cat{CRSE}(u_1,u_2) \to \cat{Const}(u_2)$ that acts by truncating the top portion of the diagram instead.
\end{corollary}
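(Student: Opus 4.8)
The plan is to exhibit the truncation operation explicitly as an assignment on objects and on morphisms, and then check the two functor axioms (preservation of identities and of composition). On objects, an object of $\cat{CRSE}(u_1,u_2)$ is a pair of constant rank factorizations $u_1 = U \circ \phi_1$ and $u_2 = U \circ \phi_2$ sharing a common phase space $C$ and a common signature map $U$; discarding $\phi_2$ leaves exactly the data $(\phi_1, U)$ of a constant rank factorization of $u_1$, hence an object of $\cat{Const}(u_1)$. No verification is needed here beyond the observation that $\phi_1$ was already required to be of constant rank in Definition \ref{def:crse}, so $(\phi_1,U)$ genuinely lies in $\cat{Const}(u_1)$.

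On morphisms, recall that a morphism of $\cat{CRSE}(u_1,u_2)$ is determined by a single smooth map $c: C \to C'$ making the large diagram of Definition \ref{def:crse} commute; in particular this commutativity forces $\phi_1' = c \circ \phi_1$ and $U = U' \circ c$. But these two equations are precisely the commutativity conditions defining a morphism $(\phi_1,U) \to (\phi_1',U')$ in $\cat{Const}(u_1)$ with component map $c$. So I would send the $\cat{CRSE}$-morphism determined by $c$ to the $\cat{Const}(u_1)$-morphism determined by the same $c$; this is well-defined because the triangles one must check commute form a sub-collection of those already known to commute in the $\cat{CRSE}$-morphism diagram (the $u_2$ side simply drops out).

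For functoriality: the identity morphism of an object of $\cat{CRSE}(u_1,u_2)$ has component map $\id_C$, which is sent to the morphism of $\cat{Const}(u_1)$ with component map $\id_C$, namely $\id_{(\phi_1,U)}$, so identities are preserved. For composition, in both $\cat{CRSE}(u_1,u_2)$ and $\cat{Const}(u_1)$ the composite of two morphisms is by definition the morphism whose component map is the composite of the two component maps as smooth maps of phase spaces; since our assignment is the identity on component maps, it commutes with composition. Applying the same argument with the roles of the two legs of the diagram interchanged — keeping $\phi_2$ and discarding $\phi_1$ — yields the forgetful functor $\cat{CRSE}(u_1,u_2) \to \cat{Const}(u_2)$.

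I do not expect a genuine obstacle here; the only point requiring slight care is that morphisms in these categories are, strictly speaking, commuting diagrams rather than bare maps, so the argument should be phrased throughout in terms of ``the morphism determined by the component map $c$,'' and one should invoke the fact — already established in the proof that $\cat{Const}(u)$ is a category — that composition in each of these categories is modeled on composition of component maps. Once that bookkeeping is in place, every claim is immediate.
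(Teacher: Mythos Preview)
Your proposal is correct and simply spells out in detail what the paper treats as self-evident: the paper states this Corollary without proof, relying on the description of the truncation in the statement itself to carry the argument. Your explicit verification of the functor axioms via the component maps is exactly the routine check the paper elides.
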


\begin{theorem}
  \label{thm:surjectivity}
  Suppose that $u_1$ and $u_2$ are smooth maps $M \to N$ for a connected manifold $M$, and that $u_2 = u_1 \circ f$ for some diffeomorphism $f$.  Then the forgetful functors defined in Corollary \ref{cor:crse_projection} are surjective on objects (but not necessarily morphisms).
\end{theorem}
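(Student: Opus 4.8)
The plan is to construct, for each object of $\cat{Const}(u_1)$, an object of $\cat{CRSE}(u_1,u_2)$ that the forgetful functor of Corollary \ref{cor:crse_projection} sends to it, and then to run the mirror-image argument for $\cat{Const}(u_2)$. So I would fix an arbitrary object of $\cat{Const}(u_1)$, i.e.\ a constant rank factorization $u_1 = U \circ \phi_1$ with some phase space $C$. What is missing from the corresponding $\cat{CRSE}$ diagram is the lower wedge: a constant rank map $\phi_2 : M \to C$ with $u_2 = U \circ \phi_2$.

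The natural candidate is $\phi_2 := \phi_1 \circ f$. Then $U \circ \phi_2 = U \circ \phi_1 \circ f = u_1 \circ f = u_2$, so the lower triangle commutes, while the upper triangle is exactly the one we started from; hence the whole $\cat{CRSE}$ diagram commutes. It remains to check that $\phi_2$ has constant rank: at a point $x \in M$ one has $d_x \phi_2 = d_{f(x)} \phi_1 \circ d_x f$, and $d_x f$ is a linear isomorphism since $f$ is a diffeomorphism, so $\rank d_x \phi_2 = \rank d_{f(x)} \phi_1$, which is independent of $x$ because $\phi_1$ is of constant rank and $f$ is onto. Thus $(\phi_1,\phi_2,U)$ is a genuine object of $\cat{CRSE}(u_1,u_2)$, and by construction the forgetful functor $\cat{CRSE}(u_1,u_2) \to \cat{Const}(u_1)$, which simply discards $\phi_2$, returns $(\phi_1,U)$. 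This establishes surjectivity on objects for the first functor.

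For the functor $\cat{CRSE}(u_1,u_2) \to \cat{Const}(u_2)$ I would argue symmetrically: given an object $u_2 = U \circ \phi_2$ of $\cat{Const}(u_2)$, set $\phi_1 := \phi_2 \circ f^{-1}$, so that $U \circ \phi_1 = U \circ \phi_2 \circ f^{-1} = u_2 \circ f^{-1} = u_1$, and the same derivative computation (now using that $f^{-1}$ is a diffeomorphism) shows $\phi_1$ has constant rank. The resulting $\cat{CRSE}$ object truncates from the top to the given $(\phi_2,U)$, which finishes the proof.

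There is essentially no deep obstacle here; the only point needing care is the standing fact that ``constant rank'' is preserved under pre-composition with a diffeomorphism, and this is where the hypothesis that $M$ is connected is used, exactly as in Theorem \ref{thm:crse_diffeo} (a diffeomorphism of a connected manifold has globally constant rank, so the rank of $\phi_1 \circ f$ cannot jump between components of $M$). I would also add the caveat already flagged in the statement: these functors are not surjective on morphisms, since a $\cat{CRSE}$ object carries the extra datum of how $u_2$ sits over $C$, and distinct ``$f$-twists'' of one $\cat{Const}(u_1)$ object need not be $\cat{CRSE}$-isomorphic, so the fibres of the forgetful functor are genuinely nontrivial.
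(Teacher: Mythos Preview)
Your argument is correct and is essentially the same as the paper's: you lift an arbitrary object $(\phi_1,U)$ of $\cat{Const}(u_1)$ to the $\cat{CRSE}(u_1,u_2)$ object $(\phi_1,\phi_1\circ f,U)$ and then run the mirror argument with $f^{-1}$, which is exactly what the paper does (it just draws the diagram and omits the chain-rule verification you spell out). One small remark: your appeal to ``$f$ is onto'' in the rank computation is unnecessary, since $\phi_1$ already has constant rank on all of $M$, so $\rank d_{f(x)}\phi_1$ is independent of $x$ without any surjectivity of $f$.
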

\begin{proof}
  Suppose that $u_1 = U \circ \phi$ is a constant rank factorization of $u_1$.
  By hypothesis, we therefore have the commutative diagram
  \begin{equation*}
    \xymatrix{
      M \ar[d]_\phi \ar[dr]^{u_1} & M \ar[l]_f \ar[d]^{u_2}\\
      C \ar[r]_U & N
      }
  \end{equation*}
  which can be rearranged as
  \begin{equation*}
    \xymatrix{
      M \ar[d]_{\phi} \ar[dr]^{u_1} & \\
      C \ar[r]_U & N \\
      M \ar[u]^{\phi \circ f} \ar[ur]_{u_2}\\
      }
  \end{equation*}
  which is evidently an object in $\cat{CRSE}(u_1,u_2)$ whose image through the forgetful functor is $U \circ \phi$.  Repeating the argument using $u_1 = u_2 \circ \phi^{-1}$ completes the proof.
\end{proof}

\begin{example}
  \label{eg:surjectivity}
  Let $u_1: \mathbb{R} \to \mathbb{R}$ be the constant map $u_1(x) = 0$ and $u_2: \mathbb{R} \to \mathbb{R}$ be the identity map $u_2(x) = x$.  The functor in Corollary \ref{cor:crse_projection} for $\cat{CRSE}(u_1,u_2) \to \cat{Const}(u_1)$ is not surjective on objects.  To see this, notice that the constant rank factorization $\mathbb{R} \to \star \to \mathbb{R}$ of $u_1$ through the single-point space $\star$ cannot be constant rank signature equivalent to any constant rank factorization of $u_2$, and thus must be outside the image of the functor.
\end{example}

Taking Theorem \ref{thm:surjectivity} and Example \ref{eg:surjectivity} together, this gives a means for comparing two smooth maps $u_1$, $u_2$. If the images of $\cat{CRSE}(u_1,u_2)$ in $\cat{Const}(u_1)$ and $\cat{Const}(u_2)$ are both large, then the maps are similar.  Homotopies yield another, related notion of similarity between maps as the next example shows.

\begin{example}
  If $u_1, u_2: M \to N$ are homotopic smooth maps, then the combined effect of Propositions \ref{prop:crse_existence} and \ref{prop:crse_homotopy} is that there is a morphism in $\cat{CRSE}(u_1,u_2)$ of the form
  \begin{equation*}
      \xymatrix{
      M \ar[rr]^{u_1} \ar[dr]^{I_0} \ar@/_1pc/[ddr]_{i_0} && N \\
      &M \sqcup M \ar[d]_-{c} \ar[ur]^{U}&\\
      &M \times [0,1] \ar@/_1pc/[uur]^-(0.7){h}|(0.42)\hole &&M \ar[ll]^{i_1} \ar[uul]_{u_2} \ar[ull]_{I_1} \\
      }
  \end{equation*}
  where the $I_k$ and $i_k$ maps are the obvious inclusions, $U$ is defined as in Proposition \ref{prop:crse_existence}, and
  \begin{equation*}
    c(x) = \begin{cases}
      (x, 0) & \text{if }x \text{ is in the first copy of }M \\
      (x, 1) & \text{if }x \text{ is in the second copy of }M.
    \end{cases}
  \end{equation*}
\end{example}

Given the definition of $\cat{CRSE}(u_1,u_2)$, it is easy to recast some of the previous results on functoriality.

\begin{proposition}
  \label{prop:crse_functoriality}
  Suppose that $u_1:M_1 \to N$, $u_2: M_2\to N$ and $f: N \to N'$ are smooth maps.  Then $f$ induces a covariant functor $\cat{CRSE}(u_1,u_2) \to \cat{CRSE}(f \circ u_1,f\circ u_2)$ via the construction in statement (1) of Proposition \ref{prop:crse_functoriality_prep}.
\end{proposition}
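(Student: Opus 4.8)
The plan is to mimic the proof of Proposition \ref{prop:left_functor}, pushing each constant rank signature equivalence forward along $f$ by post-composing its signature map with $f$ while leaving the two phase maps untouched. In fact the object level of the construction is already supplied by statement (1) of Proposition \ref{prop:crse_functoriality_prep}, so the work reduces to checking that the recipe is compatible with morphisms and with composition.

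First I would define the functor $F$ on objects. An object of $\cat{CRSE}(u_1,u_2)$ consists of constant rank maps $\phi_1 : M_1 \to C$ and $\phi_2 : M_2 \to C$ together with a signature map $U : C \to N$ satisfying $u_1 = U \circ \phi_1$ and $u_2 = U \circ \phi_2$. Post-composing with $f$ yields $f \circ u_1 = (f \circ U) \circ \phi_1$ and $f \circ u_2 = (f \circ U) \circ \phi_2$, which is exactly the constant rank signature equivalence produced by statement (1) of Proposition \ref{prop:crse_functoriality_prep}; it is a legitimate object of $\cat{CRSE}(f \circ u_1, f \circ u_2)$ because $\phi_1$ and $\phi_2$ are not modified and hence are still of constant rank. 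I would remark that no rank hypothesis on $f$ is required here, in contrast with the precomposition situation of Proposition \ref{prop:right_functor} — this is precisely why $f$ acting ``on the left'' gives a plain functor.

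Next I would define $F$ on morphisms. By Definition \ref{def:crse}, a morphism in $\cat{CRSE}(u_1,u_2)$ is a smooth map $c : C \to C'$ making the defining diagram commute; in particular $\phi_1' = c \circ \phi_1$, $\phi_2' = c \circ \phi_2$, and $U = U' \circ c$. The same map $c$ serves as a morphism between the pushed-forward objects: composing $U = U' \circ c$ with $f$ gives $f \circ U = (f \circ U') \circ c$, and the relations $\phi_1' = c \circ \phi_1$, $\phi_2' = c \circ \phi_2$ are inherited verbatim, so every triangle in the enlarged morphism diagram still commutes. Thus one sets $F(c) = c$. Functoriality is then immediate: since $F$ is the identity on the underlying component maps, it sends identities to identities and preserves composition, which in $\cat{CRSE}$ is just composition of component maps exactly as in $\cat{Const}$. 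Hence $F$ is a covariant functor $\cat{CRSE}(u_1,u_2) \to \cat{CRSE}(f \circ u_1, f \circ u_2)$.

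I do not anticipate any genuine obstacle. The only thing requiring care is bookkeeping — verifying that each of the triangles in the enlarged morphism diagram of Definition \ref{def:crse} remains commutative after post-composing with $f$, and observing that the constant rank condition on the phase maps is preserved for the trivial reason that those maps are left entirely unchanged.
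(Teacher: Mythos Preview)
Your proposal is correct and follows essentially the same approach as the paper, which simply notes that the result is a diagram chase using the diagram from Proposition~\ref{prop:crse_functoriality_prep} with covariance following as in Proposition~\ref{prop:left_functor}. You have spelled out explicitly what the paper leaves as a sketch, but the argument is identical.
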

\begin{proof}
  This follows from a tedious (but easy) diagram chase using the diagram in the proof of Proposition \ref{prop:crse_functoriality_prep}.
  Covariance is assured for the same reason as in Proposition \ref{prop:left_functor}.
\end{proof}

\begin{proposition}
  \label{prop:crse_initial}
  The constant rank signature equivalence constructed in Proposition \ref{prop:crse_existence} for $u_1: M_1 \to N$ and $u_2: M_2 \to N$ is the initial object of $\cat{CRSE}(u_1,u_2)$.
\end{proposition}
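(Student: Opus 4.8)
The plan is to show directly that the constant rank signature equivalence $(i_1,i_2,U:M_1\sqcup M_2\to N)$ constructed in Proposition \ref{prop:crse_existence} admits exactly one morphism from itself to every object of $\cat{CRSE}(u_1,u_2)$. Fix an arbitrary object, given by constant rank factorizations $u_1=U'\circ\phi_1$ and $u_2=U'\circ\phi_2$ with common phase space $C'$. Unwinding Definition \ref{def:crse}, a morphism from the Proposition \ref{prop:crse_existence} object to this one is precisely a smooth map $c:M_1\sqcup M_2\to C'$ satisfying $c\circ i_1=\phi_1$, $c\circ i_2=\phi_2$, and $U'\circ c=U$. I would emphasize at the outset that, unlike the component maps used to define $\cat{Const}(u)$ via Lemma \ref{lem:sig_equiv_functor}, the definition of a $\cat{CRSE}$-morphism imposes no rank condition on $c$ itself; this is essential, since the map built below is generically not of constant rank when $\phi_1$ and $\phi_2$ have different ranks.

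First I would produce $c$ using the universal property of the disjoint union: set $c(x)=\phi_1(x)$ for $x\in M_1$ and $c(x)=\phi_2(x)$ for $x\in M_2$. Since $M_1$ and $M_2$ are each open and closed in $M_1\sqcup M_2$ and $c$ agrees with the smooth maps $\phi_1,\phi_2$ on these two clopen pieces, $c$ is smooth; by construction $c\circ i_1=\phi_1$ and $c\circ i_2=\phi_2$. The remaining identity $U'\circ c=U$ is verified piecewise: for $x\in M_1$ one has $U'(c(x))=U'(\phi_1(x))=u_1(x)=U(x)$ by the factorization $u_1=U'\circ\phi_1$ and the definition of $U$ in Proposition \ref{prop:crse_existence}, and symmetrically for $x\in M_2$. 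Hence $c$ is a bona fide morphism of $\cat{CRSE}(u_1,u_2)$, establishing existence.

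For uniqueness, suppose $c'$ is any morphism from the Proposition \ref{prop:crse_existence} object to the chosen object. Then $c'\circ i_1=\phi_1$ and $c'\circ i_2=\phi_2$ force $c'$ to restrict to $\phi_1$ on $M_1$ and to $\phi_2$ on $M_2$; since every point of $M_1\sqcup M_2$ lies in exactly one of these, $c'=c$. Therefore there is a unique $\cat{CRSE}(u_1,u_2)$-morphism out of the Proposition \ref{prop:crse_existence} object to each object of the category, which is exactly the assertion of initiality.

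There is essentially no technical obstacle beyond bookkeeping; the proof is in the end just the coproduct universal property of $M_1\sqcup M_2$, transported through the defining diagram of a $\cat{CRSE}$-morphism. The only point that genuinely deserves care—and that I would flag explicitly—is the observation above: if $\cat{CRSE}$-morphisms were required to have constant rank component maps, the statement would fail precisely when $\rank\phi_1\neq\rank\phi_2$, so it matters that the definition of $\cat{CRSE}(u_1,u_2)$ does not make that demand.
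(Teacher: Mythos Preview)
Your proof is correct and follows the same approach as the paper: define the component map $c:M_1\sqcup M_2\to C'$ piecewise by $\phi_1$ and $\phi_2$ and check the diagram commutes. Your version is in fact more complete, since you explicitly verify uniqueness (which the paper leaves implicit) and you correctly observe that no rank condition on $c$ is required by Definition~\ref{def:crse}; one minor quibble is that the component maps in $\cat{Const}(u)$ (Definition~\ref{def:constant_rank}) are also only required to be smooth, so your contrast with Lemma~\ref{lem:sig_equiv_functor} is slightly misplaced, though the substantive point stands.
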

\begin{proof}
  Suppose that we have an arbitrary object of $\cat{CRSE}(u_1,u_2)$, given by the diagram
\begin{equation*}
    \xymatrix{
M_1 \ar[d]_{\phi_1} \ar[rd]^{u_1} &         \\
C \ar[r]_{V}&N                       \\
M_2 \ar[u]^{\phi_2} \ar[ru]_{u_2}& \\
    }
\end{equation*}
Consider the map $c : (M_1 \sqcup M_2) \to C$ given by
\begin{equation*}
  c(x) = \begin{cases}
    \phi_1(x) & \text{if }x\in M_1 \subseteq M_1 \sqcup M_2 \\
    \phi_2(x) & \text{if }x\in M_2 \subseteq M_1 \sqcup M_2 \\
    \end{cases}
\end{equation*}
By construction, this map makes the diagram
\begin{equation*}
  \xymatrix{
      M_1 \ar[rr]^{u_1} \ar[dr]^{i_1} \ar@/_1pc/[ddr]_{\phi_1} && N \\
      &M_1 \sqcup M_2 \ar[d]_-{c} \ar[ur]^{U}&\\
      &C \ar@/_1pc/[uur]^-(0.7){V}|(0.4)\hole &&M_2 \ar[ll]^{\phi_2} \ar[uul]_{u_2} \ar[ull]_{i_2} \\
      }
\end{equation*}
commute when the $i_k$ maps are the inclusions.
\end{proof}

The initial objects of $\cat{CRSE}(u_1,u_2)$ and that of $\cat{Const}(u_1)$ are quite incompatible, even if there is a diffeomorphism $u_2 = u_1 \circ f$!  This is because there is no map $c$ that will make the following diagram commute:
\begin{equation*}
    \xymatrix{
      M \ar[rr]^{u_1} \ar[dr]^{\id_M} \ar@/_1pc/[ddr]_{i_1} && N \\
      &M \ar[d]_-{c} \ar[ur]^{u_1}&\\
      &M \sqcup M \ar@/_1pc/[uur]^-(0.7){U}|(0.4)\hole &&M \ar[ll]^{i_2} \ar[uul]_{u_2} \ar[ull]_{f} \\
      }
\end{equation*}
Moreover, considering the images of these objects of $\cat{CRSE}(u_1,u_2)$ in $\cat{Const}(u_1)$, there is precisely one $c$ that will make the following diagram commute:
\begin{equation*}
  \xymatrix{
    M \ar[rr]^{u_1} \ar[dr]^{\id_M} \ar@/_1pc/[ddr]_{i_1} && N \\
    &M \ar[d]_-{c} \ar[ur]^{u_1}&\\
    &M \sqcup M \ar@/_1pc/[uur]^-(0.7){U} \\
    }
\end{equation*}
Reversing the direction of the $c$ maps in the above two diagrams yields exactly one possible option in the top diagram, and many options in the bottom one. 

On the other hand, final objects of $\cat{CRSE}(u_1,u_2)$ and  $\cat{Const}(u_1)$ are compatible.

\begin{proposition}
  Suppose that
  \begin{equation*}
    \xymatrix{
M_1 \ar[d]_{\phi_1} \ar[rd]^{u_1} &         \\
C \ar[r]^{U}&N                       \\
M_2 \ar[u]^{\phi_2} \ar[ru]_{u_2}& \\
    }
  \end{equation*}
  is a constant rank signature equivalence and that $u_1 = U \circ \phi_1$ and $u_2 = U \circ \phi_2$ are final objects in $\cat{Const}(u_1)$ and $\cat{Const}(u_2)$, respectively.  Then their constant rank signature equivalence is final in $\cat{CRSE}(u_1,u_2)$.
\end{proposition}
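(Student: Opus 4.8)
The plan is to take an arbitrary object $Y$ of $\cat{CRSE}(u_1,u_2)$, say
\begin{equation*}
    \xymatrix{
M_1 \ar[d]_{\psi_1} \ar[rd]^{u_1} &         \\
C' \ar[r]_{V}&N                       \\
M_2 \ar[u]^{\psi_2} \ar[ru]_{u_2}& \\
    }
\end{equation*}
and to exhibit a unique $\cat{CRSE}(u_1,u_2)$-morphism $Y \to X$, where $X$ denotes the constant rank signature equivalence assembled from the final factorizations $(\phi_1,U)$ and $(\phi_2,U)$. Unwinding Definition \ref{def:crse}, such a morphism is a single smooth map $c: C' \to C$ satisfying $\phi_1 = c\circ\psi_1$, $\phi_2 = c\circ\psi_2$ and $V = U\circ c$. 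Uniqueness will be immediate: any such $c$ is in particular a $\cat{Const}(u_1)$-morphism $(\psi_1,V)\to(\phi_1,U)$, since it satisfies $\phi_1 = c\circ\psi_1$ and $V = U\circ c$, and because $(\phi_1,U)$ is a \emph{final} object of $\cat{Const}(u_1)$ there is at most one such morphism. So the substance of the argument is the \emph{existence} of $c$.

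The key step — and the one I expect to be the main obstacle — is to show that the signature map of any final object of $\cat{Const}(u)$ is injective. This really is a feature of $\cat{Const}$ and not of $\cat{QuasiP}$: the universal quasiperiodic factorization of $\sin\colon\mathbb{R}\to\mathbb{R}$ has the non-injective signature map $\sin\colon S^1\to\mathbb{R}$, so injectivity cannot simply be inherited from the quasiperiodic theory. I would prove it directly. Suppose $(\phi,U)$ were final in $\cat{Const}(u)$ with $U(p)=U(q)=:n_0$ for some $p\neq q$ in $C$, and consider the object of $\cat{Const}(u)$ whose phase space is $C\sqcup C$, whose phase map is $\iota_1\circ\phi$ with $\iota_1$ the inclusion of the first copy (this is of constant rank because $\phi$ is and $\iota_1$ is an embedding), and whose signature map equals $U$ on the first copy of $C$ and the constant map $n_0$ on the second copy. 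Sending the second copy of $C$ constantly to $p$, respectively to $q$, and the first copy by $\id_C$, gives two distinct $\cat{Const}(u)$-morphisms from this object to $(\phi,U)$ — both satisfy the defining equations of a morphism — contradicting finality. Hence $U$ is injective.

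Granting this, existence is a short chase. The factorization $u_1 = V\circ\psi_1$ is an object of $\cat{Const}(u_1)$ (because $\psi_1$ is of constant rank), so finality of $(\phi_1,U)$ furnishes a smooth $c: C'\to C$ with $\phi_1 = c\circ\psi_1$ and $V = U\circ c$. Then
\begin{equation*}
  U\circ(c\circ\psi_2) = (U\circ c)\circ\psi_2 = V\circ\psi_2 = u_2 = U\circ\phi_2,
\end{equation*}
and injectivity of $U$ — applied now to the final object $(\phi_2,U)$ of $\cat{Const}(u_2)$ — forces $c\circ\psi_2 = \phi_2$. Thus $c$ satisfies all three defining equations and is the desired morphism $Y\to X$; together with the uniqueness already noted, this establishes that $X$ is final in $\cat{CRSE}(u_1,u_2)$. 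Everything beyond the injectivity lemma is of the same routine diagrammatic character as the earlier proofs in this section, and in particular no further appeal to the explicit construction behind Proposition \ref{prop:universal_constant_rank} is required.
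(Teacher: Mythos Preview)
Your proof is correct and takes a genuinely different route from the paper's. The paper obtains \emph{two} component maps $c_1, c_2: C' \to C$ by applying finality separately in $\cat{Const}(u_1)$ and $\cat{Const}(u_2)$, and then argues that they can be reconciled by forming the quotient $(C\sqcup C)/\sim$ (identifying $c_1(z)$ with $c_2(z)$ for each $z\in C'$) and invoking finality once more to conclude that this quotient is diffeomorphic to $C$, so that $c_1$ and $c_2$ agree up to that diffeomorphism. Your approach instead isolates the key fact as a standalone lemma---that the signature map of any final object of $\cat{Const}(u)$ is injective---proved by exhibiting two distinct morphisms from an auxiliary factorization with phase space $C\sqcup C$; you then use injectivity to show directly that the single map $c$ furnished by finality in $\cat{Const}(u_1)$ already satisfies the $\cat{Const}(u_2)$ equation $c\circ\psi_2=\phi_2$.

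Your version is arguably cleaner on several counts: the injectivity lemma is a natural property of final objects that may be of independent interest (and your remark that it fails in $\cat{QuasiP}$, where the phase map must be surjective and so the $C\sqcup C$ trick is unavailable, is a nice sanity check); the quotient-manifold step, whose smooth structure the paper does not address, is avoided entirely; and you explicitly verify uniqueness of the morphism, which the paper's proof leaves implicit. The paper's approach, by contrast, treats the two factorizations symmetrically and does not single out a structural property of the signature map, at the cost of a more delicate gluing argument.
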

\begin{proof}
  Suppose that
  \begin{equation*}
    \xymatrix{
      M_1 \ar[d]_{\psi_1} \ar[rd]^{u_1} &         \\
      C' \ar[r]^{V}&N                       \\
      M_2 \ar[u]^{\psi_2} \ar[ru]_{u_2}& \\
    }
  \end{equation*}
  is another object of $\cat{CRSE}(u_1,u_2)$.  Using the hypothesis that $u_1 = U \circ \phi_1$ and $u_2 = U \circ \phi_2$ are final objects, this implies that there is a $\cat{Const}(u_1)$ morphism
  \begin{equation*}
    \xymatrix{
      M_1 \ar[rr]^{u_1} \ar[dr]^{\psi_1} \ar@/_1pc/[ddr]_{\phi_1} && N \\
      &C' \ar[d]_-{c_1} \ar[ur]^{V}&\\
      &C \ar@/_1pc/[uur]_{U}&\\
      }
  \end{equation*}
  and a $\cat{Const}(u_2)$ morphism
  \begin{equation*}
    \xymatrix{
      M_2 \ar[rr]^{u_2} \ar[dr]^{\psi_2} \ar@/_1pc/[ddr]_{\phi_2} && N \\
      &C' \ar[d]_-{c_2} \ar[ur]^{V}&\\
      &C \ar@/_1pc/[uur]_{U}&\\
      }
  \end{equation*}
  These can be assembled into a larger diagram
  \begin{equation*}
    \xymatrix{
      &M_1 \ar@/_1pc/[ldd]_{\psi_1} \ar@/^1pc/[rdd]^{u_1} \ar[d]_{\phi_1} &         \\
      &C\ar[dr]^{U} \\
      C' \ar[rr]^{V} \ar[ur]_{c_1} \ar[dr]^{c_2} &&N                       \\
      &C \ar[ur]^{U} \\
      &M_2 \ar@/^1pc/[luu]^{\psi_2} \ar@/_1pc/[ruu]_{u_2} \ar[u]_{\phi_2} & \\
      }
  \end{equation*}
  To complete the argument, we need to show that the two maps $c_1$, $c_2$ can be replaced with a single map $c: C' \to C$, even though $c_1$ and $c_2$ might disagree.  To that end, consider $(C\sqcup C)/\sim$, where $x \sim y$ if there is a $z \in C'$ such that $c_1(z) =x$ and $c_2(z)=y$.  This construction makes the diagram
  \begin{equation*}
    \xymatrix{
      &C\ar[d]^{U} \ar[dr]^{i_1} \\
      C' \ar[r]^{V} \ar[ur]^{c_1} \ar[dr]_{c_2} &N&(C\sqcup C)/\sim \ar[l]_-{U'}     \\
      &C \ar[u]^{U} \ar[ur]_{i_2} \\
    }
  \end{equation*}
  commute if $U'$ is given by $U$ on both copies of $C$ in $(C\sqcup C)/\sim$.

  Splicing the two previous diagrams together yields the observation that there is a morphism
  \begin{equation*}
    \xymatrix{
      M_1 \ar[rr]^{u_1} \ar[dr]^{\phi_1} \ar@/_1pc/[ddr]_{i_1\circ\phi_1} && N \\
      &C \ar[d]_-{i_1} \ar[ur]^{U}&\\
      &(C\sqcup C)/\sim \ar@/_1pc/[uur]_{U'}&\\
      }
  \end{equation*}
  Applying finality in $\cat{Const}(u_1)$ once more, we must conclude that the factorization through $(C\sqcup C)/\sim$ is $\cat{Const}(u_1)$-isomorphic to $U\circ\phi_1$.  A similar argument holds for the other factorization.  Therefore, we must have that $(C\sqcup C)/\sim$ is diffeomorphic to $C$; so that $c_1$ and $c_2$ can only disagree up to that diffeomorphism.
\end{proof}

A consequence of Proposition \ref{prop:crse_initial} is a characterization of $\cat{CRSE}(u_1,u_2)$ in terms of coslice categories of $\cat{Const}(u_1)$ and $\cat{Const}(u_2)$.

\begin{theorem}
  \label{thm:crse_coslice}
  If $u_1: M_1 \to N$ and $u_2: M_2 \to N$ are two smooth maps, $\cat{CRSE}(u_1,u_2)$ is isomorphic to the coslice category $(U \circ i_1 \downarrow \cat{Const}(u_1))$ (and therefore also to the coslice category $(U \circ i_2 \downarrow \cat{Const}(u_2))$), where $U$, $i_1$, and $i_2$ are defined as in Proposition \ref{prop:crse_existence}.
\end{theorem}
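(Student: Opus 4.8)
The plan is to exhibit a pair of mutually inverse functors between $\cat{CRSE}(u_1,u_2)$ and $(U\circ i_1 \downarrow \cat{Const}(u_1))$ directly, the organizing observation being that a smooth map out of a disjoint union $M_1\sqcup M_2$ is exactly the same datum as a pair of smooth maps out of $M_1$ and $M_2$. Write $A := U\circ i_1$ for the object of $\cat{Const}(u_1)$ obtained by applying the forgetful functor of Corollary \ref{cor:crse_projection} to the initial object of $\cat{CRSE}(u_1,u_2)$ produced by Propositions \ref{prop:crse_existence} and \ref{prop:crse_initial}. I will build a functor $\Phi: \cat{CRSE}(u_1,u_2)\to (A\downarrow\cat{Const}(u_1))$ out of the (unique) morphisms emanating from that initial object, and an inverse functor $\Psi$.

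For $\Phi$: given an object $X$ of $\cat{CRSE}(u_1,u_2)$, i.e.\ a constant rank signature equivalence $M_1\xrightarrow{\phi_1} C\xleftarrow{\phi_2} M_2$ with signature map $V: C\to N$, the proof of Proposition \ref{prop:crse_initial} supplies the unique morphism from the initial object to $X$, whose component map is the piecewise map $\gamma := \phi_1\sqcup\phi_2 : M_1\sqcup M_2\to C$ (uniqueness is immediate, since any such component map must restrict to $\phi_1$ on $M_1$ and to $\phi_2$ on $M_2$). The forgetful functor carries this morphism to a $\cat{Const}(u_1)$-morphism $A\to V\circ\phi_1$ with component map $\gamma$ --- note $U = V\circ\gamma$ holds because it holds on each summand --- and that $\cat{Const}(u_1)$-morphism is by definition an object of the coslice; set $\Phi(X)$ to be it. On a $\cat{CRSE}$-morphism $X\to Y$, given by a phase-space map $c$, the ``bottom halves'' satisfy $\gamma_Y = c\circ\gamma_X$, so $c$ is also a morphism in the coslice category, and this defines $\Phi$ on morphisms.

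For $\Psi$: an object of $(A\downarrow\cat{Const}(u_1))$ is a $\cat{Const}(u_1)$-morphism $\gamma : (U,i_1)\to (V,\phi)$, equivalently a constant rank factorization $u_1 = V\circ\phi$ together with a smooth component map $\gamma: M_1\sqcup M_2\to C$ with $\phi = \gamma\circ i_1$ and $V\circ\gamma = U$. Put $\phi_1 := \gamma\circ i_1\ (=\phi)$ and $\phi_2 := \gamma\circ i_2$; then $u_1 = V\circ\phi_1$ by hypothesis, and restricting $V\circ\gamma = U$ to $M_2$ yields $u_2 = V\circ\phi_2$, so $M_1\xrightarrow{\phi_1} C\xleftarrow{\phi_2} M_2$ with $V$ is the desired object of $\cat{CRSE}(u_1,u_2)$; this is $\Psi$ on objects, and on morphisms $\Psi$ again just reuses the phase-space map. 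Because a map out of a disjoint union is recovered from its two restrictions, $\Psi\Phi$ and $\Phi\Psi$ are literally identity functors (not merely naturally isomorphic to them), giving an isomorphism of categories, not just an equivalence. The parenthetical ``therefore also $(U\circ i_2\downarrow\cat{Const}(u_2))$'' follows from the evident isomorphism $\cat{CRSE}(u_1,u_2)\cong\cat{CRSE}(u_2,u_1)$ obtained by interchanging the two legs of the span.

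The \textbf{main obstacle} is checking that $\Psi$ really lands in $\cat{CRSE}(u_1,u_2)$ --- that is, that $\phi_2 = \gamma\circ i_2$ is a \emph{constant rank} map. Here I would use that $i_2: M_2\hookrightarrow M_1\sqcup M_2$ is an open embedding, so $d_x(\gamma\circ i_2)$ has the same rank as $d_{i_2(x)}\gamma$ for every $x\in M_2$; hence it suffices that the component map $\gamma$ of a $\cat{Const}$-morphism is itself of constant rank, which is precisely the fact already invoked in the proof of Lemma \ref{lem:sig_equiv_functor}. (That $\phi_1 = \gamma\circ i_1 = \phi$ is constant rank is immediate, $\phi$ being the phase map of an object of $\cat{Const}(u_1)$.) Everything else --- that $\Phi$ and $\Psi$ respect identities and composition, and that the coslice commutativity condition $d\circ\gamma_X = \gamma_Y$ corresponds exactly to the commutativity in Definition \ref{def:crse} --- is a routine diagram chase of the same flavor as the one carried out for Theorem \ref{thm:coslice_equivalent}.
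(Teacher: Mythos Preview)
Your proposal is correct and follows essentially the same strategy as the paper's proof: both obtain the forward functor from the unique morphism out of the initial object supplied by Proposition~\ref{prop:crse_initial} (passed through the forgetful functor of Corollary~\ref{cor:crse_projection}), and both rely on the Lemma~\ref{lem:sig_equiv_functor}/Theorem~\ref{thm:coslice_equivalent} construction for the inverse direction. Your write-up is more explicit in naming $\Phi$ and $\Psi$, in checking they are literal inverses, and in isolating the constant-rank verification for $\phi_2$, but the underlying argument is the paper's.
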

\begin{proof}
  Each object of $\cat{CRSE}(u_1,u_2)$ already implies the existence of two constant rank factorizations: $u_1 = U' \circ \phi_1$ and $u_2 = U' \circ \phi_2$.

  According to Proposition \ref{prop:crse_initial}, we have a $\cat{CRSE}(u_1,u_2)$ morphism from $(U \circ i_1, U \circ i_2)$ (the initial object) to $(U' \circ \phi_1, U' \circ \phi_2)$ (an arbitrary object), implying the existence of corresponding morphisms in $\cat{Const}(u_1)$ and $\cat{Const}(u_2)$.  Consequently, this means that the factorizations $u_1 = U' \circ \phi_1$ and $u_2 = U' \circ \phi_2$ can be considered objects in the corresponding coslice categories.  Conversely, the construction of the equivalence between the coslice categories involved in the proof of Theorem \ref{thm:coslice_equivalent} is precisely the same construction as that for $\cat{CRSE}(u_1,u_2)$.
\end{proof}

\section{The factorization categories under CSAS trajectory distortions}
\label{sec:application}

Now that the theoretical tools have been established in the previous sections, let us revisit the CSAS example from Section \ref{sec:motivation}.  We remind the reader that for the purposes of this application, if we distort the sonar sensor's trajectory there will be a distortion on both the look angle and the range.  In the most general setting, both of these effects can be captured by the phase map $\phi$.  In the specific case of CSAS echoes, the range distortion can be removed easily by time aligning each pulse separately, assuming the target has a large enough cross section.  Therefore, we will assume that the trajectory distortions only apply to the look angle.

\begin{figure}
\begin{center}
\includegraphics[height=2in]{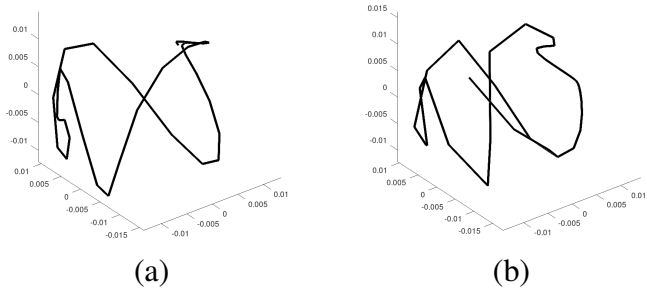}
\caption{Principal components analysis plots of the signature of (a) a $3$-fold symmetric scatterer (Figure \ref{fig:scatterer_sigs}(b)), (b) a distorted $3$-fold symmetric scatterer (Figure \ref{fig:scatterer_sigs2}(a)).}
\label{fig:scatterer_pca_compare}
\end{center}
\end{figure}

Recall that signatures from a $3$-fold symmetric scatterer without distortions (Figure \ref{fig:scatterer_sigs}(b)) and with distortions (Figure \ref{fig:scatterer_sigs2}(a)) were shown in Section \ref{sec:motivation}.  Since they differ by a smooth trajectory distortion, which happens to be a diffeomorphism, Theorem \ref{thm:crse_diffeo} argues that the factorization categories for these two signatures are isomorphic, not just equivalent, which means that there should be a bijective correspondence between their factorizations.  This additionally means that they should be signature equivalent.  Hence, Proposition \ref{prop:identical_images} implies that their image sets should be identical.  From a practical standpoint, this means that the set of possible echoes (rows) in Figure \ref{fig:scatterer_sigs}(b) and Figure \ref{fig:scatterer_sigs2}(a) should be identical.  A good way to see this is to compare their principal components analysis plots, shown in Figure \ref{fig:scatterer_pca_compare}.  In computing these plots, pairs of adjacent rows were used as coordinates for each pulse, and the axes (the principal vectors for Figure \ref{fig:scatterer_sigs}(b)) are the same for both plots.  Adjacent rows were used as a proxy for the derivative of the phase function, which is of constant rank by Proposition \ref{prop:signal_knot}.  Notice that the two frames of Figure \ref{fig:scatterer_pca_compare} show very similar trajectories; they only differ due to sampling effects because a finite number of pulses were used.

\begin{figure}
\begin{center}
\includegraphics[height=2in]{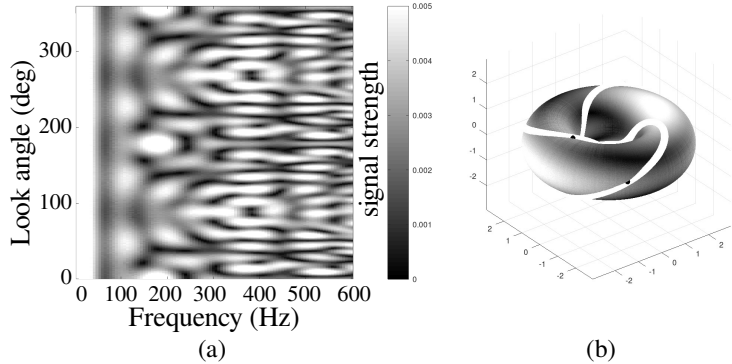}
\caption{(a) Signature of the sum of a $4$-fold symmetric scatterer and a $6$-fold scatterer, (b) Response as a function of scatterer angles at $300$ Hz.}
\label{fig:knot_4_6}
\end{center}
\end{figure}

In Section \ref{sec:motivation}, we considered the composite scatterer corresponding to a $(2,3)$ torus knot shown in Figure \ref{fig:scatterer_combined_sig}.  There are many other composite scatterers with the same image in the torus.  For instance, if we use $P=4$ and $Q=6$ in Equations \eqref{eq:one_scatterer_family} and \eqref{eq:two_scatterers}, we obtain a $(4,6)$ torus knot signature shown in Figure \ref{fig:knot_4_6}.  The image of the path (though not the path itself) in the torus is shown in Figure \ref{fig:knot_4_6}(b), and is the same as that of the $(2,3)$ torus knot shown in Figure \ref{fig:toroidal_flat}(a).

We can use Theorem \ref{thm:circle_quasip} to compare these two signatures, and it is the case that they can be distinguished by their corresponding $\cat{QuasiP}$ categories.  Notice that the torus has fundamental group $\pi_1(S^1 \times S^1) = \mathbb{Z} \times \mathbb{Z}$.  There are two cyclic subgroups of $\pi_1(S^1 \times S^1)$ that contain the $(4,6)$ torus knot, namely itself and the cyclic subgroup generated by the $(2,3)$ torus knot.  On the other hand, there is only one cyclic subgroup of $\pi_1(S^1 \times S^1)$ that contains the $(2,3)$ torus knot.  Thus the $\cat{QuasiP}$ categories of the $(2,3)$ and $(4,6)$ torus knots differ, and so there is no trajectory distortion that will transform one CSAS signature into the other.  We can therefore conclude that they must be different targets.

Leaving the scatterers fixed while changing the relative angle, changes the intercept of the trajectory along the torus.  Equivalently, changing the relative angle simply shifts the underlying function on the torus, as Figures \ref{fig:changes}(a)--(b) show.  These figures are therefore an example of constant rank signature equivalence.  On the other hand, if the symmetry of the scatterers is changed, this dramatically changes the function and the torus knot, as shown in Figure \ref{fig:changes}(a) and (c).

\begin{figure}
\begin{center}
\includegraphics[height=1.75in]{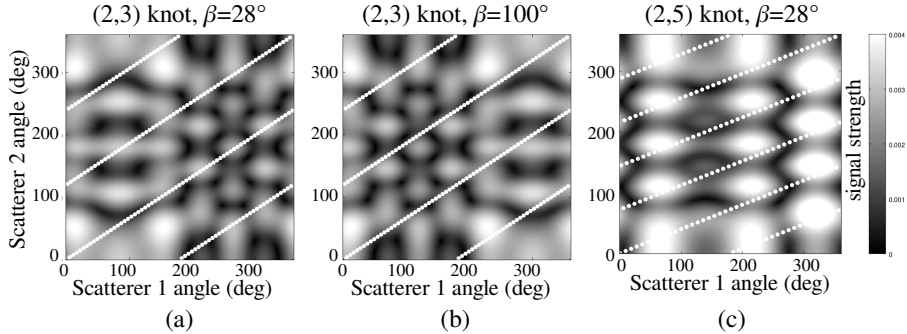}
\caption{Different responses at $300$ Hz for (a) the $(2,3)$ torus knot signature with relative angle $28^\circ$, (b) the $(2,3)$ torus knot signature with relative angle $100^\circ$, and (c) the $(2,5)$ torus knot signature with relative angle $28^\circ$.}
\label{fig:changes}
\end{center}
\end{figure}

Two targets with the same symmetry class (which have isomorphic $\cat{Const}$ categories), as in Figures \ref{fig:changes}(a)--(b), can be discriminated by constant rank signature equivalence since that is a generalization of comparing the values of their signatures on a fundamental domain.  In the case of changing the relative angles, these two targets are constant rank signature equivalent.  Figures \ref{fig:changes}(a) and (c) are not constant rank signature equivalent, since the torus functions (signatures) are different.  Evidently, classifying \emph{all} constant rank signature equivalences would involve classifying all functions on the torus.  This is not particularly easy to do completely, even though it is easy to identify when two such functions are equal.

\section{Conclusion}
\label{sec:conclusion}

This article has provided a nearly complete answer as to why topological methods are effective at solving sonar classification problems.
Specifically, topological methods decouple trajectory effects from the classification-relevant information contained in a sonar signature.
Because of this decoupling, topological methods can be used when trajectory information is inaccurate or missing.

In answering the practical challenge of understanding sonar classification, this article has defined several new topological invariants that apply to factorizations of smooth functions, and has established key properties about them.  In the case of the category $\cat{QuasiP}$, we have a complete characterization in the case that applies to circular synthetic aperture sonar (CSAS) classification problems.  This characterization relies on the computation of the fundamental group of the space of echoes, which can be estimated from sonar data using persistent homology.  Therefore, this article has provided theoretical justification for what was demonstrated experimentally in \cite{sonarspace}.  

Given the fact that several new invariants were discovered, much work still remains.
Although we understand how to compute the $\cat{QuasiP}$ category in some cases, we have only a very limited understanding of how to characterize the related category $\cat{Const}$.  The techniques that are effective for $\cat{QuasiP}$ are simply uninformative for $\cat{Const}$.
Additionally, the complete characterization of $\cat{QuasiP}$ only applies to a limited class of synthetic aperture problems, so further work to characterize it under arbitrary sonar collections still remains.

\section*{Acknowledgments}

 This article is based upon work supported by the Office of Naval Research (ONR) under Contract Nos. N00014-15-1-2090 and N00014-18-1-2541. Any opinions, findings and conclusions or recommendations expressed in this article are those of the author and do not necessarily reflect the views of the Office of Naval Research.
 
\bibliographystyle{plain}
\bibliography{constrank_bib}
\end{document}